\LetLtxMacro{\oldtodo}{\todo}
\renewcommand{\todo}[2][]{\tikzset{external/export next=false}\oldtodo[{#1}]{#2}}
\LetLtxMacro{\oldmifi}{\missingfigure}
\renewcommand{\missingfigure}[2][]{\tikzset{external/export next=false}\oldmifi[{#1}]{#2}}
\newcommand{\para}[1]{\noindent\textbf{#1}\,\,}
\crefname{appsec}{Appendix}{Appendices}
\newcommand*{\currentname}{\@currentlabelname}\makeatother
\providecommand{\keywords}[1]{\textbf{Keywords:} #1}
\renewcommand{\emptyset}{\text{\O}}
\DeclarePairedDelimiter\ceil{\lceil}{\rceil}
\DeclarePairedDelimiter\floor{\lfloor}{\rfloor}
\newcommand{\problem}[3]{${\normalfont \mathrm{#1\hspace{0.02cm}\textbf{\textbar}\hspace{0.02cm}#2\hspace{0.02cm}\textbf{\textbar}\hspace{0.02cm}#3}}$}
\newcommand{\op}[1]{\operatorname{#1}}
\newcommand{\oh}{\mathcal{O}}
\newcommand{\eps}{\varepsilon}
\newcommand{\naturals}{\mathbb{N}}
\newcommand{\rationals}{\mathbb{Q}}
\newcommand{\cupdot}{\mathbin{\dot{\cup}}}
\newcommand{\Sref}[1]{\para{Step \ref{#1}}}
\DeclareMathOperator{\OPT}{OPT}
\DeclareMathOperator{\chp}{chp}
\DeclareMathOperator{\load}{L}
\DeclareMathOperator{\ILP}{ILP}
\DeclareMathOperator{\ks}{ks}
\DeclareMathOperator{\cks}{cks}
\DeclareMathOperator{\parent}{parent}
\newcommand{\DrawMachines}[1]{
	\edef\DrawMachineWidth{0.0}
	\foreach [count=\i from 0] \machine/\w/\globalopts in {#1} {
		\edef\DrawMachineNewWidth{\DrawMachineWidth+\w}
		\edef\DrawMachineTop{0.0}
		\ifdim \w pt > 0 pt
			\foreach \t/\name/\[\opts\] in \machine {
				\edef\DrawMachineNewTop{\DrawMachineTop+\t*3.2}
				\draw \globalopts \opts (\DrawMachineWidth,\DrawMachineTop) rectangle 		(\DrawMachineNewWidth,\DrawMachineNewTop) node[pos=.5] {\name};
				\xdef\DrawMachineTop{\DrawMachineNewTop}
			}
			\xdef\DrawMachineWidth{\DrawMachineNewWidth}
		\else
			\xdef\DrawMachineWidth{\DrawMachineWidth-\w}
		\fi
	}
}
\newcommand{\problemSplit}{\problem{P}{split, setup\!=\!s_i}{C_{\max}}}
\newcommand{\problemInt}{\problem{P}{setup\!=\!s_i}{C_{\max}}}
\newcommand{\problemPmtn}{\problem{P}{pmtn, setup\!=\!s_i}{C_{\max}}}
\newtheorem{definition}{Definition}
\newtheorem{theorem}{Theorem}
\newtheorem{lemma}{Lemma}
\newtheorem{note}{Note}
\begin{document}
\title{Near-Linear Approximation Algorithms for
	Scheduling Problems with Batch Setup Times}
\titlenote{Research was supported by German Research Foundation (DFG) project JA 612/20-1}
\renewcommand{\shorttitle}{Near-Linear Approximation of
	Scheduling Problems with Batch Setup Times}

\author{Max A. Deppert}
\affiliation{%
  \institution{Kiel University}
  \streetaddress{Christian-Albrechts-Platz 4}
  \city{Kiel}
  \state{Germany}
  \postcode{24118}
}
\email{made@informatik.uni-kiel.de}

\author{Klaus Jansen}
\affiliation{%
  \institution{Kiel University}
  \streetaddress{Christian-Albrechts-Platz 4}
  \city{Kiel}
  \state{Germany}
  \postcode{24118}
}
\email{kj@informatik.uni-kiel.de}

\renewcommand{\shortauthors}{M. A. Deppert and K. Jansen}

\begin{abstract}
	We investigate the scheduling of $n$ jobs divided into $c$ classes on $m$ identical parallel machines. For every class there is a setup time which is required whenever a machine switches from the processing of one class to another class. The objective is to find a schedule that minimizes the makespan. We give near-linear approximation algorithms for the following problem variants: the non-preemptive context where jobs may not be preempted, the preemptive context where jobs may be preempted but not parallelized, as well as the splittable context where jobs may be preempted and parallelized.
	
	We present the first algorithm improving the previously best approximation ratio of $2$ to a better ratio of $3/2$ in the preemptive case. In more detail, for all three flavors we present an approximation ratio $2$ with running time $\mathcal{O}(n)$, ratio $3/2+\varepsilon$ in time $\mathcal{O}(n\log 1/\varepsilon)$ as well as a ratio of $3/2$. The $(3/2)$-approximate algorithms have different running times. In the non-preemptive case we get time $\mathcal{O}(n\log (n+\Delta))$ where $\Delta$ is the largest value of the input. The splittable approximation runs in time $\mathcal{O}(n+c\log(c+m))$ whereas the preemptive algorithm has a running time $\mathcal{O}(n \log (c+m)) \leq \mathcal{O}(n \log n)$. So far, no PTAS is known for the preemptive problem  without restrictions, so we make progress towards that question. Recently Jansen et al. found an EPTAS for the splittable and non-preemptive case but with impractical running times exponential in $1/\varepsilon$.
\end{abstract}

%
%
\begin{CCSXML}
	<ccs2012>
	<concept>
	<concept_id>10003752.10003809.10003636</concept_id>
	<concept_desc>Theory of computation~Approximation algorithms analysis</concept_desc>
	<concept_significance>500</concept_significance>
	</concept>
	<concept>
	<concept_id>10003752.10003809.10003636.10003808</concept_id>
	<concept_desc>Theory of computation~Scheduling algorithms</concept_desc>
	<concept_significance>500</concept_significance>
	</concept>
	<concept>
	<concept_id>10003752.10003809.10010170</concept_id>
	<concept_desc>Theory of computation~Parallel algorithms</concept_desc>
	<concept_significance>500</concept_significance>
	</concept>
	</ccs2012>
\end{CCSXML}

\ccsdesc[500]{Theory of computation~Approximation algorithms analysis}
\ccsdesc[500]{Theory of computation~Scheduling algorithms}
\ccsdesc[500]{Theory of computation~Parallel algorithms}

\keywords{approximation, scheduling, setup times, preemption, parallelization}

\maketitle

\section{Introduction}

Scheduling problems with setup times have been intensively studied for over 30 years now; in fact, they allow very natural formulations of scheduling problems.

In the general scheduling problem with setup times, there are $m$ identical and parallel machines, a set $J$ of $n \in \naturals$ jobs $j \in J$, $c \in \naturals$ different classes, a partition $\dot{\bigcup}_{i=1}^c C_i = J$ of $c$ nonempty and disjoint subsets $C_i \subseteq J$, a \emph{processing time} of $t_j \in \naturals$ time units for each job $j \in J$ and a \emph{setup} (or \emph{setup time}) of $s_i \in \naturals$ time units for each class $i \in [c]$. The objective is to find a schedule which minimizes the makespan while holding all of the following.

All jobs (or its complete sets of job pieces) are scheduled.
A setup $s_i$ is scheduled whenever a machine starts processing load of class $i$ and when switching processing from one class to another \emph{different} class on a machine. A setup is \emph{not} required between jobs (or job pieces) of the \emph{same} class.
There are various types of setups discussed; here we focus on \emph{sequence-independent} batch setups, i.e. they do not depend on the previous job/class. All machines are \emph{single-threaded} (jobs (or job pieces) and setups do not intersect in time on each machine) and no setup is preempted.


There are three variants of scheduling problems with setup times which have been gaining the most attention in the past.
There is the \emph{non-preemptive} case where no job may be preempted, formally known\footnote{This is the notation introduced by Graham et al. \cite{GRAHAM1979287}} as problem \problemInt{}. Another variant is the \emph{preemptive} context, namely \problemPmtn{}, where a job may be preempted at any time but be processed on at most one machine at a time, so a job may not be parallelized. In the generous case of \emph{splittable} scheduling, known as \problemSplit{}, a job is allowed to be split into any number of job pieces which may be processed on any machine at any time.
\begin{table*}[h]
	\centering
	\scalebox{1.0}{
		\begin{tabular}{l|l|l|l|}
			\cline{2-4}
			& \multicolumn{2}{c|}{m variable}                                                                                                                                                                                               & \multicolumn{1}{c|}{m fixed} \\ \cline{2-4} 
			& unrestricted                                                                                                                             & \emph{small batches} or $|C_i|=1$ or $P(C_i) \leq \gamma\OPT$                               &                              \\ \hline
			\multicolumn{1}{|l|}{Splittable}     & \begin{tabular}[c]{@{}l@{}}$5/3$ in poly \cite{DBLP:journals/dam/XingZ00}\\ $3/2$ in $\oh(n+c\log(c+m))$ *\\ EPTAS \cite{DBLP:conf/innovations/JansenKMR19}\end{tabular}                          & \begin{tabular}[c]{@{}l@{}}$\approx\tfrac32$ in $\oh(n+(m+c)\log(m+c))$ \cite{DBLP:journals/siamcomp/Chen93}\end{tabular}                                       & \begin{tabular}[c]{@{}l@{}}FPTAS \cite{DBLP:journals/dam/XingZ00}\end{tabular}                        \\ \hline
			\multicolumn{1}{|l|}{Non-Preemptive} & \begin{tabular}[c]{@{}l@{}}$2+\eps$ in $\oh(n \log 1/\eps)$, PTAS \cite{DBLP:conf/europar/JansenL16}\\ $3/2$ in $\oh(n\log(n+\Delta))$ *\\ EPTAS \cite{DBLP:conf/innovations/JansenKMR19}\end{tabular} & \begin{tabular}[c]{@{}l@{}}$(1+\eps)\min\set{\tfrac32\OPT, \OPT+t_{\max}-1}$  in poly \cite{DBLP:conf/wads/MackerMHR15}\end{tabular}                             & \begin{tabular}[c]{@{}l@{}}FPTAS \cite{DBLP:conf/wads/MackerMHR15}\end{tabular}                 \\ \hline
			\multicolumn{1}{|l|}{Preemptive}     & \begin{tabular}[c]{@{}l@{}}$(2-(\floor{m/2}+1)^{-1})$ in $\oh(n)$ \cite{DBLP:journals/ior/MonmaP93}\\ $3/2$ in $\oh(n\log n)$ *\end{tabular}               & \begin{tabular}[c]{@{}l@{}}$4/3+\eps$ in poly \cite{DBLP:conf/soda/SchuurmanW99}\\ EPTAS \cite{DBLP:conf/innovations/JansenKMR19}\end{tabular} & \qquad /                       \\ \hline
		\end{tabular}
	}
	\caption{An overview of known results \hspace{22.2em} * Result is in this paper}
	\label{table:known-results}
\end{table*}

\para{Related results.}
Monma and Potts began their investigation of these problems considering the preemptive case. They found first dynamic programming approaches for various single machine problems \cite{DBLP:journals/ior/MonmaP89} polynomial in $n$ but exponential in $c$. Furthermore, they showed NP-hardness for \problemPmtn{} even if $m = 2$. In a later work \cite{DBLP:journals/ior/MonmaP93} they found a heuristic which resembles McNaughton's preemptive wrap-around rule \cite{McNaughton:1959:SDL:2780402.2780403}. It requires $\oh(n)$ time for being $(2-(\floor{\tfrac{m}2}+1)^{-1})$-approximate. Notice that this ratio is truly greater than $\tfrac32$ if $m \ge 4$ and the asymptotic bound is $2$ for $m \rightarrow \infty$. Monma and Potts also discussed the problem class of \emph{small batches} where for any batch $i$ the sum of one setup time and the total processing time of all jobs in $i$ is smaller than the optimal makespan, i.e. $s_i + \sum_{j \in C_i} t_j \leq \OPT$. Most suitable for this kind of problems, they found a heuristic that first uses list scheduling for complete batches followed by an attempt of splitting some batches so that they are scheduled on two different machines. This second approach needs a running time of $\oh(n+(m+c)\log(m+c))$ and considering only small batches it is $(\tfrac32-\tfrac1{4m-4})$-approximate if $m \leq 4$ whereas it is $(\tfrac53-\tfrac1{m})$-approximate for small batches if $m$ is a multiple of $3$ and $m \ge 6$.

Then Chen \cite{DBLP:journals/siamcomp/Chen93} modified the second approach of Monma and Potts. For small batches Chen improved the heuristic to a worst case guarantee of $\max\set{\tfrac{3m}{2m+1},\tfrac{3m-4}{2m-2}}$ if $m \geq 5$ while the same time of $\oh(n+(m+c)\log(m+c))$ is required.

Schuurman and Woeginger \cite{DBLP:conf/soda/SchuurmanW99} studied the preemptive problem for \emph{single-job-batches}, i.e. $|C_i|=1$. They found a PTAS for the uniform setups problem \problem{P}{pmtn, setup=s}{C_{\max}}. Furthermore, they presented a $(\tfrac43+\eps)$-approximation in case of arbitrary setup times. Both algorithms have a running time linear in $n$ but exponential in $1/\eps$.
Then Chen, Ye, and Zhang \cite{DBLP:journals/dam/XingZ00} turned to the splittable case. Without other restrictions they presented an FPTAS if $m$ is fixed and a $\tfrac53$-approximation in polynomial time if $m$ is variable. They give some simple arguments that the problem is weakly NP-hard if $m$ is fixed and NP-hard in the strong sense otherwise.

More recently Mäcker et al. \cite{DBLP:conf/wads/MackerMHR15} made progress to the case of non-preemptive scheduling. They used the restrictions that all setup times are equal ($s_i = s$) and the total processing time of each class is bounded by $\gamma\OPT$ for some constant $\gamma$, i.e. $\sum_{j \in C_i} t_j \leq \gamma\OPT$. Mäcker et al. found a simple $2$-approximation, an FPTAS for fixed $m$, and a $(1+\eps)\min\set{\tfrac32\OPT,\OPT + t_{\max}-1}$-approximation (where $t_{\max} = \max_{j \in J} t_j$) in polynomial time if $m$ is variable. Therefore, this especially yields a PTAS for unit processing times $t_j = 1$.

Jansen and Land \cite{DBLP:conf/europar/JansenL16} found three different algorithms for the non-preemptive context without restrictions. They presented an approximation ratio $3$ using a next-fit strategy running in time $\oh(n)$, a $2$-dual approximation running in time $\oh(n)$ which leads to a $(2+\eps)$-approximation running in time $\oh(n \log (\tfrac1{\eps}))$, as well as a PTAS.
Recently Jansen et al. \cite{DBLP:conf/innovations/JansenKMR19} found an EPTAS for all three problem variants. For the preemptive case they assume $|C_i|=1$. They make use of n-fold integer programs, which can be solved using the algorithm by Hemmecke, Onn, and Romanchuk. However, even after some runtime improvement the runtime for the splittable model is $2^{\oh(1/\eps^4\log^6(1/\eps))}n^4\log(m)$, for example. These algorithms are interesting answers to the question of complexity but they are useless for solving actual problems in practice. Therefore the design of \emph{fast} (and especially polynomial) approximation algorithms with small approximation ratio remains interesting.

\para{Our Contribution.} For all three problem variants we give a $2$-approximate algorithm running in time $\oh(n)$ as well as a $(\tfrac32+\eps)$-approximation with running time $\oh(n\log(\tfrac1{\eps}))$. With some runtime improvements we present some very efficient near-linear approximation algorithms with a constant approximation ratio equal to $\tfrac32$. In detail, we find a $\tfrac32$-approximation for the splittable case with running time $\oh(n + c\log(c+m)) \leq \oh(n\log(c+m))$. Also we will see a $\tfrac32$-approximate algorithm for the non-preemptive case that runs in time $\oh(n \log(T_{\min}))$ where $T_{\min} = \max\set{\tfrac1m N, \max_{i\in[c]}(s_i+t_{\max}^{(i)})}$, $t_{\max}^{(i)} = \max_{j \in C_i} t_j$ and $N = \sum_{i=1}^c s_i + \sum_{j \in J} t_j$. For the most complicated case of these three problem contexts, the preemptive case, we study a $\tfrac32$-approximation running in time $\oh(n\log(c+m)) \leq \oh(n \log n)$. Especially this last result is interesting; we make progress to the general case where classes may consist of an \emph{arbitrary} number of jobs. The best approximation ratio was the one by Monma and Potts \cite{DBLP:journals/ior/MonmaP93} mentioned above. All other previously known results for preemptive scheduling used restrictions like \emph{small batches} or even \emph{single-job-batches}, i.e. $|C_i|=1$  (cf. \Cref{table:known-results}). As a byproduct we give some new \emph{dual} lower bounds.

\para{Algorithmic Ideas.} The $\tfrac32$-approximate algorithm for the preemptive case is our main result. It is highly related to the right partitioning of classes and jobs into different sizes; in fact, the right partition allows us to reduce the problem to a fine-grained knapsack instance. To achieve the truly constant bounds in the splittable and preemptive case \emph{while} speeding up the algorithm we use a technique that we call \emph{Class Jumping} (see \Cref{class-jumping-splittable,improvedsearch}). However, we also make extensive use of a simple idea that we name \emph{Batch Wrapping} (see \Cref{principle:mcnaughton}).

\section{Preliminaries}

\para{Notation.}
Natural numbers are truly greater than zero, i.e. $\naturals = \set{1,2,3,\dots}$. The set of all natural numbers from $1$ to $k \in \naturals$ is $[k] := \set{l \in \naturals| 1 \leq l \leq k}$. The \emph{load} of a machine $u \in [m]$ in a schedule $\sigma$ is $\load_{\sigma}(u)$ (or simply $\load(u)$). This is the sum of all setup times and the processing times of all jobs (or job pieces) scheduled on machine $u$. The processing time of a set of jobs $K$ is $P(K) := \sum_{j \in K} t_j$. The jobs of a set of classes $X \subseteq [c]$ are $J(X) := \bigcup_{i \in X} C_i$. A \emph{job piece} of a job $j \in J$ is a (new) job $j'$ with a processing time $t_{j'} \leq t_j$.  Whenever we split a job $j \in C_i$ of a class $i \in [c]$ into two new job pieces $j_1,j_2$, we understand these jobs to be jobs of class $i$ as well - although $j_1,j_2 \in C_i$ does not hold formally.

\para{Properties.}
For later purposes we need to split the classes into expensive classes and cheap classes as follows.
Let $T>0$ be a makespan. We say that a class $i \in [c]$ is \emph{expensive} if $s_i > \tfrac12T$ and we call it \emph{cheap} if $s_i \leq \tfrac12T$. We define $I_{\exp} \subseteq [c]$ as the set of all expensive classes and $I_{\chp} \subseteq [c]$ as the set of all cheap classes such that $I_{\exp} \cupdot I_{\chp} = [c]$.
We denote the total load of a feasible schedule $\sigma$ by $\load(\sigma) = \sum_{u=1}^m \load_{\sigma}(u) = \sum_{i = 1}^c (\lambda_i^{\sigma} s_i + P(C_i))$ for some setup multiplicities $\lambda_i^{\sigma} \in \naturals$ with $\lambda^{\sigma}_i \leq |C_i|$.  For any instance $I$ it is true that $\OPT(I) \leq N := \sum_{i \in [c]} s_i + \sum_{j \in J} t_j$ (all jobs on one machine) as well as $\OPT(I) > s_{\max}$ and $\OPT(I) \geq \tfrac1mN$ and therefore $\OPT(I) \geq \max\set{\tfrac1mN,s_{\max}}$.

An important value to our observations will be the minimal number of machines to schedule all jobs of an expensive class. 	In the following we give two simple lemmas to find this minimal machines numbers. Therefore, for all classes $i \in [c]$ let
$\alpha_i := \ceil*{P(C_i)/(T - s_i)}$ and $\beta_i := \ceil*{2P(C_i)/T}$.

\begin{lemma}\label{lemma:a_i-lower-bound}
	Given a feasible schedule $\sigma$ with makespan $T$ and load $\load(\sigma) = \sum_{i=1}^c (\lambda_i^{\sigma}s_i + P(C_i))$, it is true that $\lambda_i^{\sigma} \geq \alpha_i \geq 1$. Furthermore, $i \in I_{\exp}$ implies that $\lambda_i^{\sigma} \geq \alpha_i \geq \beta_i \geq 1$ and $\sigma$ needs at least $\lambda^{\sigma}_i$ different machines to place all jobs in $C_i$.
\end{lemma}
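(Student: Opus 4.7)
The lemma is a pigeonhole-style counting argument at the level of individual machines. Let $m_i^\sigma$ be the number of distinct machines in $\sigma$ that carry any load of class $i$. Since any class-$i$ work on a machine has to be preceded by a class-$i$ setup, I immediately get $\lambda_i^\sigma \geq m_i^\sigma$. On any such machine $u$ the first setup already consumes $s_i$ of the makespan budget, so the total class-$i$ processing time on $u$ is at most $T-s_i$ (and $T-s_i > 0$ because $T \geq \OPT > s_{\max} \geq s_i$, as observed in the preliminaries). Summing this bound over the $m_i^\sigma$ machines carrying class $i$ yields $P(C_i)\leq m_i^\sigma(T-s_i)$, and by integrality of $m_i^\sigma$ this gives $m_i^\sigma \geq \lceil P(C_i)/(T-s_i)\rceil = \alpha_i$. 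Hence $\lambda_i^\sigma \geq m_i^\sigma \geq \alpha_i$, while $\alpha_i \geq 1$ is immediate because $P(C_i)\geq 1$ and $T-s_i > 0$.

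For the expensive case $i \in I_{\exp}$ the hypothesis $s_i > T/2$ gives $T-s_i < T/2$, so $P(C_i)/(T-s_i) > 2P(C_i)/T$; applying the (monotone) ceiling then yields $\alpha_i \geq \beta_i$, while $\beta_i \geq 1$ again follows from $P(C_i)\geq 1$. The final claim that $\sigma$ really uses at least $\lambda_i^\sigma$ distinct machines for $C_i$ is the second consequence of $s_i > T/2$: a single machine carrying two class-$i$ setups would already consume $2s_i > T$ time on setups alone, violating the makespan. So every machine that carries class $i$ hosts exactly one class-$i$ setup, which forces $m_i^\sigma = \lambda_i^\sigma$ and finishes the proof.

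The argument has no real obstacle; it is just the makespan inequality applied on each machine, combined with monotonicity of the ceiling. The only point worth underlining is that the expensive assumption is doing double duty: it simultaneously sharpens $\alpha_i$ into $\beta_i$ via $T-s_i < T/2$ and forbids two class-$i$ setups on the same machine. This unification is precisely what will later let the algorithms treat expensive classes essentially as atomic objects that each occupy their own machines.
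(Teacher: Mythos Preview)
Your proof is correct and follows essentially the same approach as the paper's: bound the class-$i$ processing time per machine by $T-s_i$, derive $\lambda_i^\sigma \geq \alpha_i$ by pigeonhole, and for expensive classes use $s_i > \tfrac12 T$ both to compare $\alpha_i$ with $\beta_i$ and to rule out two class-$i$ setups on one machine. Your explicit introduction of $m_i^\sigma$ makes the chain $\lambda_i^\sigma \geq m_i^\sigma \geq \alpha_i$ (and $m_i^\sigma = \lambda_i^\sigma$ in the expensive case) slightly cleaner than the paper's more compressed wording, but the underlying argument is identical.
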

\begin{proof}
	Apparently $\alpha_i \geq 1$ and $\beta_i \geq 1$ are direct results for all $i \in [c]$. There must be at least one initial setup time $s_i$ to schedule any jobs of class $i$ on a machine. Since setups may not be split, there is a processing time of at most $T - s_i$ per machine to schedule jobs of $C_i$ and therefore, $\sigma$ needs at least $\alpha_i = \ceil{P(C_i)/(T - s_i)} \leq \lambda_i^{\sigma}$ setups to schedule all jobs of $C_i$. If $i \in I_{\exp}$ we have $s_i > \tfrac12T$ such that there cannot be two expensive setups on one machine and
	$\alpha_i = \ceil{P(C_i)/(T - s_i)} \geq \ceil{P(C_i)/(T - \tfrac12T)} = \ceil{2P(C_i)/T} = \beta_i.$
\end{proof}

\begin{lemma}\label{lemma:machine_number_expensive_classes_multiplicities}
	Let $\sigma$ be a feasible schedule with makespan $T$ for an instance $I$. Then $\sigma$ schedules jobs of different expensive classes on different machines and $m \geq \sum_{i \in I_{\exp}} \lambda_i^{\sigma}$.
\end{lemma}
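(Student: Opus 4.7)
The plan is to exploit the defining property of expensive classes, namely $s_i > T/2$, to rule out any machine hosting setups of two different expensive classes (or even two setups of the same expensive class), and then to sum over classes.

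First I would fix any machine $u \in [m]$ and argue that $u$ can contain setups of at most one expensive class, with multiplicity at most one. Indeed, if $u$ hosted jobs from two distinct expensive classes $i \ne i' \in I_{\exp}$, then by the setup rules $u$ would need at least one setup $s_i$ and one setup $s_{i'}$, contributing load $s_i + s_{i'} > \tfrac12 T + \tfrac12 T = T$, which contradicts the assumption that $\sigma$ has makespan $T$. The same inequality shows that $u$ cannot host two setups of the same expensive class either, so machine $u$ is charged with at most one expensive setup in total. This already establishes the first statement of the lemma, that jobs of different expensive classes sit on disjoint sets of machines.

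Next I would count. For every $i \in I_{\exp}$, \Cref{lemma:a_i-lower-bound} gives $\lambda_i^{\sigma} \geq \alpha_i \geq \beta_i \geq 1$, and the previous paragraph shows that the $\lambda_i^{\sigma}$ setups for class $i$ must sit on $\lambda_i^{\sigma}$ pairwise different machines (otherwise some machine would receive two setups of class $i$, forbidden above). Hence to class $i$ we can associate a set $M_i \subseteq [m]$ of machines with $|M_i| = \lambda_i^{\sigma}$. By the disjointness across distinct expensive classes, the sets $\{M_i\}_{i \in I_{\exp}}$ are pairwise disjoint subsets of $[m]$, so
\[
    m \;\geq\; \Bigl| \bigcupdot_{i \in I_{\exp}} M_i \Bigr| \;=\; \sum_{i \in I_{\exp}} \lambda_i^{\sigma},
\]
which is the second claim.

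There is no real obstacle here; the entire argument is driven by the single inequality $s_i + s_{i'} > T$ for $i, i' \in I_{\exp}$, and the only thing to be careful about is including the case $i = i'$ to rule out multiple setups of the same expensive class on one machine, which is needed to translate the multiplicities $\lambda_i^{\sigma}$ directly into a count of distinct machines.
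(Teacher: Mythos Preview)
Your proof is correct and follows the same line as the paper's: both use that any two expensive setups on one machine would already exceed the makespan $T$, and then obtain $m \geq \sum_{i \in I_{\exp}} \lambda_i^{\sigma}$ by a pigeonhole/disjointness count. You are in fact slightly more careful than the paper in explicitly treating the case of two setups of the \emph{same} expensive class on one machine, which the paper's short contradiction argument folds into the different-class case.
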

\begin{proof}
	Assume that $m < \sum_{i \in I_{\exp}} \lambda_i^{\sigma}$. Then setups of two different classes $i_1, i_2 \in I_{\exp}$ must have been scheduled on one machine $u \in [m]$. We obtain $\load_{\sigma}(u) \geq s_{i_1} + s_{i_2} > \tfrac12T + \tfrac12T = T$ since $i_1$ and $i_2$ are expensive. That is a contradiction to the makespan $T$.
\end{proof}

\section{Overview}

Here we give a briefly overview to our results. We start with our general results.

\begin{theorem}
	For all three problems there is a $2$-approximation running in time $\oh(n)$.\qed
\end{theorem}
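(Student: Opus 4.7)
The plan is to exhibit, in each of the three models, a single left-to-right sweep that produces a schedule of makespan at most $2T^{*}$, where
\[
  T^{*} := \max\Bigl\{\tfrac{1}{m}N,\ \max_{i\in[c]}\bigl(s_i + t_{\max}^{(i)}\bigr)\Bigr\}.
\]
I would first verify $T^{*}\leq\OPT$: the bound $N/m$ is the average load, and for every class $i$ any optimal schedule must place $s_i$ together with the largest job of $C_i$ on one machine, so $s_i+t_{\max}^{(i)}\leq\OPT$. Computing $T^{*}$ takes one scan of the input, i.e.\ $O(n)$ time.

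Next I would build the schedule by a McNaughton-style batch wrap-around with effective capacity $2T^{*}$ per machine, exactly the batch-wrapping principle already advertised in the introduction. Concatenate the classes into a single sequence $s_1, C_1, s_2, C_2,\dots, s_c, C_c$ of total length $N$ and fill the $m$ machines from left to right; whenever a wrap boundary strictly inside a class occurs, duplicate that class's setup on the continuation machine so that the split pieces stay correctly set up. In the splittable model the sweep may cut even inside a job; in the preemptive model a cut inside a job places its two pieces at the end of machine $u$ and the start of machine $u+1$, whose time intervals are disjoint, so no parallelization arises; in the non-preemptive model any job that would not fit atomically is postponed in one piece to the next machine, again preceded by a fresh copy of $s_i$.

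The analysis has two parts. For the makespan, each machine receives load at most $2T^{*}$ by construction, already giving $2T^{*}\leq 2\OPT$. For the machine count, the inequalities $s_i\leq T^{*}$ and $t_j\leq T^{*}-s_i\leq T^{*}$ (which follow from the definition of $T^{*}$) imply that every closed machine carries load strictly greater than $T^{*}$, since the item that triggered the close has size at most $T^{*}$. Combined with the fact that there are at most $M-1$ setup duplications (one per close), each of size at most $s_{\max}\leq T^{*}$, the total laid-out content is bounded by $N+(M-1)T^{*}$; plugging this against the lower bound $\sum_{u}\load(u)>(M-1)T^{*}$ and using $N\leq mT^{*}$ collapses to $M\leq m$. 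The entire construction is a single sweep touching every job and every class once, so the running time is $O(n)$ in all three models.

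The main obstacle I anticipate is the non-preemptive case: because jobs are atomic and the wrap boundaries interact awkwardly with the duplicated setups, one must show simultaneously that every machine load stays at or below $2T^{*}$ and that the sweep actually terminates by machine $m$. I would handle this by charging each wasted slack on a closed machine directly against the setup duplication it causes on the next machine, turning the resulting inequality into the clean bound $M\leq m$ via $T^{*}\geq N/m$. Everything else, in particular the splittable and preemptive cases, follows from the clean McNaughton-style geometry.
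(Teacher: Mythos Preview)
Your proposal has one genuine error and one gap.

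The error is in the splittable case. Your lower bound $T^{*}=\max\{N/m,\max_i(s_i+t_{\max}^{(i)})\}$ is \emph{not} a lower bound on $\OPT_{\op{split}}$: since jobs may be split and parallelized, it is false that ``any optimal schedule must place $s_i$ together with the largest job of $C_i$ on one machine.'' Take $m=2$, one class with $s_1=1$ and a single job of size $10$; then $T^{*}=11$, but splitting the job in half gives a feasible schedule of makespan $6$. The paper avoids this by using the weaker but valid bound $\max\{N/m,s_{\max}\}$ for the splittable model and reserving a strip of height $s_{\max}$ at the bottom of every machine for duplicated setups, so that the original content of length $N$ wraps cleanly into $m$ gaps of height $N/m$ each, with makespan $s_{\max}+N/m\leq 2\max\{N/m,s_{\max}\}$.

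The gap is your machine-count argument for the non-preemptive sweep. You write that $\sum_u\load(u)>(M-1)T^{*}$ and that the total laid-out content is at most $N+(M-1)T^{*}$, and claim this ``collapses to $M\leq m$.'' But the total content \emph{equals} $\sum_u\load(u)$, so all these inequalities yield is $0<N$. The charging idea you sketch in your last paragraph is actually the right fix: if machine $u$ closes because a job $j$ of class $i$ does not fit, then $\load(u)>2T^{*}-t_j\geq T^{*}+s_i$, so the duplicated setup $s_i$ on machine $u{+}1$ is fully paid for by the excess $\load(u)-T^{*}$ on machine $u$; summing gives $N>(M-1)T^{*}$ and hence $M\leq m$. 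The paper sidesteps this accounting entirely by running plain next-fit with threshold $T_{\min}$ first (no duplication, so $(M-1)T_{\min}<N$ is immediate) and only \emph{afterwards} moving each border-crossing item to the next machine together with a fresh setup, which raises every load to at most $T_{\min}+\max_i(s_i+t_{\max}^{(i)})\leq 2T_{\min}$.
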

For the details see \Vref{simple_upper_bounds}. Especially if the reader is not familiar to these problems, the simple $2$-approximations in \Cref{simple_upper_bounds} might be a good point to start.

We use the well-known approach of \emph{dual approximation algorithms}\footnote{A $\rho$-dual approximation algorithm gets the input and a value $T$ and either computes a feasible schedule with makespan at most $\rho T$ or rejects $T$ which then implicates that $T < \OPT$.} introduced by Hochbaum and Shmoys \cite{DBLP:journals/jacm/HochbaumS87} to get the following result.

\begin{theorem}
	For all three problems there is a $(\tfrac32+\eps)$-approximation running in time $\oh(n\log1/\eps)$.
\end{theorem}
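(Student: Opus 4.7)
The plan is to combine a standard dual-approximation framework with a $(\tfrac{3}{2})$-dual approximation subroutine that runs in $\mathcal{O}(n)$ time per invocation. Recall that such a subroutine, given a target $T$, either returns a schedule of makespan at most $\tfrac{3}{2}T$ or certifies $T<\OPT$. I would first invoke the linear-time $2$-approximation from the previous theorem to obtain a value $A$ with $\OPT\leq A\leq 2\OPT$, so that $\OPT\in[A/2,A]$. A binary search over this interval on a geometric grid of ratio $(1+\varepsilon)$ needs only $\mathcal{O}(\log 1/\varepsilon)$ queries to the dual approximation, giving total time $\mathcal{O}(n\log 1/\varepsilon)$ and an overall ratio of $\tfrac{3}{2}(1+\varepsilon)$, which we rename to $\tfrac{3}{2}+\varepsilon$.

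The heart of the proof is therefore the $(\tfrac{3}{2})$-dual algorithm, which I would build uniformly for all three variants by using the expensive/cheap split with threshold $T/2$ introduced in the preliminaries. For the expensive classes, \Cref{lemma:a_i-lower-bound,lemma:machine_number_expensive_classes_multiplicities} immediately give a rejection criterion: if $\sum_{i\in I_{\exp}}\alpha_i>m$ then $T<\OPT$. Otherwise, each expensive class $i$ is assigned $\alpha_i$ private machines and its total processing load $P(C_i)$ is spread by \emph{Batch Wrapping} (a McNaughton-style wrap-around) into chunks of size at most $T-s_i$, preceded by one setup of $s_i$; every such machine gets load at most $T$. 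Since expensive classes occupy disjoint machines, no cheap class interferes with this part.

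On the remaining machines I would schedule the cheap classes, again by Batch Wrapping: for each cheap class consume machines in order, placing one setup of length $s_i\leq T/2$ followed by jobs until either the class is exhausted or the current machine reaches load $T$, then continue on the next machine. A machine is left with load at most $T+(T/2)=\tfrac{3}{2}T$ (one overflowing job piece contributes at most $t_{\max}\leq T$ and the fresh setup contributes at most $T/2$); if the procedure runs out of machines we reject. For \problemSplit{} wrapping is unrestricted; for \problemPmtn{} one uses the standard McNaughton trick of placing the overflowing piece as the final slot on one machine and its continuation as the first slot on the next, which is feasible because $t_j\leq T$; for \problemInt{} an overflowing job is placed integrally on the next machine instead of being cut, and the same $\tfrac{3}{2}T$ bound holds.

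The delicate point, and the step I expect to be the main obstacle, is keeping every dual call at $\mathcal{O}(n)$ rather than $\mathcal{O}(n\log n)$: no sorting by $t_j$ or by $s_i$ is affordable. This is overcome by a single bucket pass that uses the partition $\bigdot\bigcup C_i=J$ given with the input to group jobs by class, computes $P(C_i)$, $\alpha_i$, $\beta_i$ in $\mathcal{O}(|C_i|)$, and then performs the expensive-first, cheap-second wrap in one left-to-right sweep over machines touching each job piece a constant number of times. Plugging this subroutine into the binary search above yields the claimed $(\tfrac{3}{2}+\varepsilon)$-approximation in time $\mathcal{O}(n\log 1/\varepsilon)$ for all three problem variants.
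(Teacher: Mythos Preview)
Your outer framework---a linear-time $2$-approximation to bracket $\OPT$, then $\mathcal{O}(\log 1/\varepsilon)$ calls to an $\mathcal{O}(n)$-time $\tfrac32$-dual subroutine---is correct and is exactly what the paper does. The gap is in the dual subroutine itself.

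The fatal problem is your rejection criterion. You place cheap classes only on the machines \emph{not} used by expensive classes and reject when those run out; this can reject $T\ge\OPT$. Take $m=2$; expensive classes $1,2$ with $s_1=s_2=0.6$ and one job of size $0.3$ each; cheap class $3$ with $s_3=0.01$ and one job of size $0.01$. At $T=1$ you have $\alpha_1=\alpha_2=1$, so both machines go to the expensive classes (each with load $0.9$), and you reject because class $3$ has no machine. But putting class $3$ on top of class $1$ gives makespan $0.92$, so $\OPT\le 0.92<1=T$ and the rejection is illegitimate---in all three models. The paper's dual algorithms work precisely because they reuse this slack: the splittable version wraps cheap load into the free interval above $\load(\bar{u}_i)$ on the last machine of each expensive class; the preemptive version wraps cheap load into the bottom strip $[0,\tfrac12T]$ of the ``large'' machines carrying $I_{\exp}^0$ (and needs a knapsack step to decide \emph{which} cheap load goes there, which is why that case is by far the most intricate); the non-preemptive version first fills the already-used machines before opening new ones and derives a matching load lower bound $L_{\op{nonp}}$.

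Two secondary points. Your parenthetical justification of the $\tfrac32T$ bound does not add up: an overflowing job of size up to $T$ on top of load close to $T$ gives $2T$, not $\tfrac32T$. The paper gets $\tfrac32T$ by wrapping job load into a strip of height $T$ with a reserved $\tfrac12T$ beneath it for a cheap setup; the non-overlap condition in the preemptive case then follows from $s_i+t_j\le T$, not merely $t_j\le T$. And for the non-preemptive variant a uniform wrap is not enough, because two items with $s_i+t_j>\tfrac12T$ from different classes cannot share a machine; the paper segregates the big jobs $J_+$ and the set $K$ for exactly this reason.
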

Already this result is much stronger for the preemptive case than the previous ratio of $2$ by Monma and Potts. In more detail, we find $\tfrac32$-dual approximations for all three problem variants, all running in time $\oh(n)$. Also in all problem cases there is a value $T_{\min}$ depending only on the input such that $\OPT \in [T_{\min},2T_{\min}]$ due to the $2$-approximations. So a binary search suffices.
In the following we briefly describe these dual approximations.

\subsection{Preemptive Scheduling}

Also in the setup context preemptive scheduling means that each job may be preempted at any time, \emph{but} it is allowed to be processed on at most \emph{one} machine at a time. In other words, a job may not run in parallel time. So, this is a job-constraint only; in fact, the load of a class may be processed in parallel but not the jobs themselves.

\begin{note}\label{preemptive_simple_lower_bound}
	$\OPT_{\op{pmtn}} \geq \max_{i \in [c]} (s_i + t_{\max}^{(i)})$ where $t_{\max}^{(i)} = \max_{j \in C_i} t_j$.
\end{note}
\begin{proof}
	Let $\sigma$ be a feasible schedule for an instance $I$ with a makespan $T$ and consider a job $j \in C_i$ of a class $i \in [c]$. There may be $k$ job pieces $j_1,\dots,j_k$ of job $j$ with a total processing time of $\sum_{l=1}^k t_{j_l}$. Let $p_l$ be the point in time $\sigma$ starts to schedule job piece $j_l$. Without loss of generality we assume $p_1 \leq \dots \leq p_k$. So the execution of job $j$ ends at time $e_j = p_k + t_{j_k}$. Now remark that $p_l \leq p_{l+1}$ means that $p_l + t_{j_l} \leq p_{l+1}$ since otherwise $j_l$ and $j_{l+1}$ run in parallel time. It follows that $p_1 + \sum_{l=1}^{k-1} t_{j_l} \leq p_k$, which means $p_1 + t_j \leq p_k + t_{j_k} = e_j$. There is at least one setup $s_i$ before time $p_1$, i.e. $p_1 \geq s_i$, and we obtain $T \geq e_j \geq p_1 + t_j \geq s_i + t_j$.
\end{proof}

\noindent Due to this, we assume that $m < n$ in the preemptive case, because $m \geq n$ leads to a trivial optimal solution by simply scheduling one job (and a setup) per machine.

The preemptive case appears to be very natural on the one hand but hard to approximate (for arbitrary large batches) on the other hand. Aiming for the ratio of $\tfrac32$, we managed to reduce the problem to a knapsack problem efficiently solvable as a \emph{continuous} knapsack problem. Therefore, we need to take a closer look on $I_{\exp}$ and $I_{\chp}$ so we split them again.
We divide the expensive classes into three disjoint subsets $I_{\exp}^+$, $I_{\exp}^0$ and $I_{\exp}^-$ such that $i \in I_{\exp}$ holds $i \in I_{\exp}^+$ iff. $T \leq s_i + P(C_i)$, $i \in I_{\exp}^0$ iff. $\tfrac34T < s_i + P(C_i) < T$ and $i \in I_{\exp}^-$ iff. $s_i + P(C_i) \leq \tfrac34T$.
Also we divide the cheap classes into $I_{\chp}^+$, $I_{\chp}^-$ s.t. $i \in I_{\chp}$ holds $i \in I_{\chp}^+$ iff. $\tfrac14T \leq s_i \leq \tfrac12T$ and $i \in I_{\chp}^-$ iff. $s_i < \tfrac14T$.

\begin{restatable*}[Nice Instances]{definition}{restateniceinstances}
	For a makespan $T$ we call an instance \emph{nice} if $I_{\exp}^0$ is empty.
\end{restatable*}
\noindent The next theorem yields a $\tfrac32$-dual approximation for nice instances and will be important to find a $\tfrac32$-ratio for general instances too.

\begin{restatable*}{theorem}{restatepreemptivesimple}\label{preemptive:simple}
	Let $I$ be a nice instance for a makespan $T$. Moreover, let
	\[L_{\op{nice}} = P(J) + \sum_{i \in I_{\exp}^+} \alpha_i' s_i + \sum_{i \in I_{\exp}^- \cup I_{\chp}} s_i\] and $m_{\op{nice}} = \ceil{\tfrac12|I_{\exp}^-|} + \sum_{i \in I_{\exp}^+} \alpha_i'$ where $\alpha_i' = \floor*{\tfrac{P(C_i)}{T-s_i}}$. Then the following properties hold.
	\begin{enumerate}[label=(\roman*)]
		\item If $mT < L_{\op{nice}}$ or $m < m_{\op{nice}}$, it is true that $T < \OPT_{\op{pmtn}}(I)$.
		\label{preemptive:simple:decision:T_check_false}
		\item Otherwise a feasible schedule with makespan at most $\tfrac32T$ can be computed in time $\oh(n)$.
		\label{preemptive:simple:decision:T_check_true}
	\end{enumerate}
\end{restatable*}
\noindent It turns out that nice instances are some sort of well-behaving instances which can be handled very easily \emph{and} actually their definition is helpful for general instances too.

The motivation behind a general algorithm is the following. Obviously jobs of different expensive classes can not be placed on a common machine in a \emph{$T$-feasible} schedule (a feasible schedule with a makespan of at most $T$). Especially the jobs of $J(I_{\exp}^0)$ and $J(I_{\exp}^+ \cup I_{\exp}^-)$ cannot. So we first place the classes of $I_{\exp}^0$ on one machine per class, which is reasonable as we will see later (cf. \Vref{preemptive:large_machines}). They obviously fit on a single machine, since $\tfrac34T < s_i + P(C_i) < T$ for all $i \in I_{\exp}^0$. Each of these \emph{large} machines got free processing time less than $\tfrac14T$ in a $T$-feasible schedule. After that we decide which jobs of cheap classes will get processing time on the large machines or get processed as part of a \emph{nice} instance with the residual load that is scheduled on the residual $m-|I_{\exp}^0|$ machines. Apparently only jobs of $I_{\chp}^- \subseteq I_{\chp}$ can actually be processed on large machines in a $T$-feasible schedule, because the setups of other cheap classes have a size of at least $\tfrac14T$ so we only need to decide about this set. We will find a fine-grained knapsack instance on an appropriate subset for this decision. See \Cref{preemptive-algorithm} for the details.

\subsection{Splittable Scheduling}

In case of the splittable problem, jobs are allowed to be preempted at any time and all jobs (or job pieces) can be placed on any machine at any time. Especially jobs are allowed to be processed in parallel time (on different machines). It is important to notice that one should \emph{not} assume $n \geq m$ in the splittable case, since increasing the number of machines may result in a lower (optimal) makespan; in fact, every optimal schedule makes use of \emph{all} $m$ machines. Due to this, it is remarkable that we allow a weaker definition of schedules in the following manner. A schedule may consist of machine configurations with associated multiplicities instead of (for example) explicitly mapping each job (piece) $j$ to a pair $(u_j,x_j) \in [m]\times \rationals$ where $u_j$ is the machine on which $j$ starts processing at time $x_j$.

\begin{restatable*}{theorem}{restatelemmasplittablealgorithm}
	\label{lemma:splittable:algorithm}
	Let $I$ be an instance and let $T$ be a makespan. Let
	\[L_{\op{split}} = P(J) + \sum_{i \in I_{\chp}} s_i + \sum_{i \in I_{\exp}} \beta_i s_i\]
	and $m_{\exp} = \sum_{i \in I_{\exp}} \beta_i$. Then the following properties hold.
	\begin{enumerate}[label=(\roman*)]
		\item If $mT < L_{\op{split}}$ or $m < m_{\exp}$, then it is true that $T < \OPT_{\op{split}}(I)$.
		\label{lemma:splittable:algorithm:T_check_false}
		\item Otherwise a feasible schedule with makespan at most $\tfrac32T$ can be computed in time $\oh(n)$.
		\label{lemma:splittable:algorithm:T_check_true}
	\end{enumerate}
\end{restatable*}
The idea of the algorithm is rather simple. We schedule the expensive classes by using as few setups as possible (imagining an optimal makespan, i.e. $T = \OPT(I)$). An optimal schedule needs at least $\alpha_i$ setups/machines to schedule a class $i \in I_{\exp}$, but we will only use $\beta_i \leq \alpha_i$ setups/machines (cf. \Cref{lemma:a_i-lower-bound}). For each expensive class $i$ we may get at most one machine $\bar{u}_i$ with a load $\load(\bar{u}_i) < T$. So we can reserve the time interval of $[\load(\bar{u}_i),\load(\bar{u}_i)+\tfrac12T]$ for a cheap setup on these machines before filling the residual time of $T-\load(\bar{u}_i)$ with load of cheap classes, since $\load(\bar{u}_i) + \tfrac12T + (T-\load(\bar{u}_i)) = \tfrac32T$. Once all machines are filled up, we turn to unused machines and wrap between time $\tfrac12T$ and $\tfrac32T$ such that cheap setups can be placed below line $\tfrac12T$.
\Cref{fig:alg_splittable_exp,fig:alg_splittable_exp_chp} illustrate an example situation after step (1) and (2) with green colored wrap templates. See \Vref{section:splittable} for the details.

\begin{figure}[h]
	\begin{subfigure}{0.49\textwidth}
		\centering
		\begin{tikzpicture}
  \usetikzlibrary{patterns}

\def\width{0.48}

\DrawMachines{
{
    {.60}/$s_1$/,
{0.06}//,
{0.03}//,
{0.07}//,
{0.09}//,
{0.1}//,
{0.08}//,
{0.06}//,
{0.01}//
}/\width/[fill=green!30!white],
{
    {.60}/$s_1$/[fill=lightgray],
    {0.01}//,
{0.13}//,
{0.06}//,
{0.02}//,
{0.07}//,
{0.09}//,
{0.12}//
}/\width/[fill=green!30!white],
{
    {.60}/$s_1$/[fill=lightgray],
    {0.1}//,
{0.13}//,
{0.07}//,
{0.03}//,
{0.1}//,
{0.03}//,
{0.04}//
}/\width/[fill=green!30!white],
{
    {.60}/$s_1$/[fill=lightgray],
{0.09}//,
{0.08}//,
{0.02}//,
    {.31}//[{color=transparent,opacity=0,fill=green,fill opacity=.3}]
}/\width/[fill=green!30!white],
{
    {.90}/$s_2$/,
{0.12}//,
{0.03}//,
{0.04}//,
{0.03}//,
{0.08}//,
{0.09}//,
{0.11}//
}/\width/[fill=green!30!white],
{
    {.90}/$s_2$/[fill=lightgray],
{0.09}//,
{0.1}//,
{0.08}//,
{0.02}//,
{0.13}//,
{0.06}//,
{0.02}//
}/\width/[fill=green!30!white],
{
    {.90}/$s_2$/[fill=lightgray],
{0.06}//,
{0.1}//,
{0.06}//,
{0.28}//[{color=transparent,opacity=0,fill=green,fill opacity=.3}]
}/\width/[fill=green!30!white],
{
    {.55}/$s_3$/,
{0.08}//,
{0.05}//,
{0.37}//[{color=transparent,opacity=0,fill=green,fill opacity=.3}]
}/\width/[fill=green!30!white],
{
    {.70}/$s_4$/,
{0.08}//,
{0.05}//,
{0.15}//,
{0.02}//,
{0.04}//,
{0.07}//,
{0.09}//
}/\width/[fill=green!30!white],
{
    {.70}/$s_4$/[fill=lightgray],
{0.08}//,
{0.07}//,
{0.12}//,
{0.01}//,
{0.07}//,
{0.05}//,
{0.1}//[{color=transparent,opacity=0,fill=green,fill opacity=.3}]
}/\width/[fill=green!30!white]
}

\draw (-0.2,0) -- (15*\width+0.2, 0);
\draw (-0.2,1.6) -- (15*\width+0.2, 1.6) [dotted];
\draw (-0.2,3.2) -- (15*\width+0.2, 3.2) [dotted];
\draw (-0.2,4.8) -- (15*\width+0.2, 4.8) node [right,color=white] {$\tfrac32T$};

\draw [<->] (0,-0.4) -- (15*\width,-0.4) node[fill=white, pos=.5] {$m$};

\end{tikzpicture}

		\caption{Situation after step (1)}
		\label{fig:alg_splittable_exp}
	\end{subfigure}
	\begin{subfigure}{0.5\textwidth}
		\centering
		\begin{tikzpicture}
  \usetikzlibrary{patterns}

\def\width{0.48}

\DrawMachines{
{
    {.60}/$s_1$/,
{0.06}//,
{0.03}//,
{0.07}//,
{0.09}//,
{0.1}//,
{0.08}//,
{0.06}//,
{0.01}//
}/\width/[fill=lightgray],
{
    {.60}/$s_1$/[fill=lightgray],
    {0.01}//,
{0.13}//,
{0.06}//,
{0.02}//,
{0.07}//,
{0.09}//,
{0.12}//
}/\width/[fill=lightgray],
{
    {.60}/$s_1$/[fill=lightgray],
    {0.1}//,
{0.13}//,
{0.07}//,
{0.03}//,
{0.1}//,
{0.03}//,
{0.04}//
}/\width/[fill=lightgray],
{
    {.60}/$s_1$/[fill=lightgray],
{0.09}//,
{0.08}//,
{0.02}//,
{0.5}//[opacity=0],
{0.12}/$s_5$/[fill=green!30!white],
    {.05}//[fill=green!30!white],
    {.04}//[fill=green!30!white]
}/\width/[fill=lightgray],
{
    {.90}/$s_2$/,
{0.12}//,
{0.03}//,
{0.04}//,
{0.03}//,
{0.08}//,
{0.09}//,
{0.11}//
}/\width/[fill=lightgray],
{
    {.90}/$s_2$/[fill=lightgray],
{0.09}//,
{0.1}//,
{0.08}//,
{0.02}//,
{0.13}//,
{0.06}//,
{0.02}//
}/\width/[fill=lightgray],
{
    {.90}/$s_2$/[fill=lightgray],
{0.06}//,
{0.1}//,
{0.06}//
}/\width/[fill=lightgray],
{
    {.55}/$s_3$/,
{0.08}//,
{0.05}//,
{0.38}//[opacity=0],
{0.12}/$s_5$/,
{0.1}//[fill=green!30!white],
{0.03}//[fill=green!30!white],
{0.14}/$s_6$/[fill=green!30!white],
{0.05}//[fill=green!30!white]
}/\width/[fill=lightgray],
{
    {.70}/$s_4$/,
{0.08}//,
{0.05}//,
{0.15}//,
{0.02}//,
{0.04}//,
{0.07}//,
{0.09}//
}/\width/[fill=lightgray],
{
    {.70}/$s_4$/[fill=lightgray],
{0.08}//,
{0.07}//,
{0.12}//,
{0.01}//,
{0.07}//,
{0.05}//
}/\width/[fill=lightgray],
{
    {.36}//[opacity=0],
{0.14}/$s_6$/[fill=lightgray],
{0.17}//,
{0.11}//,
{0.08}//,
{0.21}//,
{0.12}//,
{0.01}//,
{0.14}//,
{0.16}//
}/\width/[fill=green!30!white],
{
    {.36}//[opacity=0],
{0.14}/$s_6$/[fill=lightgray],
{0.04}//,
{0.05}//,
{0.13}//,
{0.12}//,
{0.37}/$s_7$/,
{0.07}//,
{0.02}//,
{0.01}//,
{0.19}//
}/\width/[fill=green!30!white],
{
    {.13}//[opacity=0],
{0.37}/$s_7$/[fill=lightgray],
{0.11}//,
{0.19}//,
{0.07}//,
{0.02}//,
{0.1}//,
{0.09}//,
{0.04}//,
{0.12}//,
{0.23}//,
{0.03}//
}/\width/[fill=green!30!white],
{
    {.13}//[opacity=0],
{0.37}/$s_7$/[fill=lightgray],
{0.09}//,
{0.11}//,
{0.25}/$s_8$/,
{0.04}//,
{0.13}//,
{0.04}//,
{0.07}//,
{0.27}//[{color=transparent,opacity=0,fill=green,fill opacity=.3}]
}/\width/[fill=green!30!white],
{
    {.5}//[opacity=0],
{1}//[{color=transparent,opacity=0,fill=green,fill opacity=.3}]
}/\width/[fill=lightgray]
}

\draw (-0.2,0) -- (15*\width+0.2, 0);
\draw (-0.2,1.6) -- (15*\width+0.2, 1.6) [dotted] node [right] {$\tfrac12T$};
\draw (-0.2,3.2) -- (15*\width+0.2, 3.2) [dotted] node [right] {$T$};
\draw (-0.2,4.8) -- (15*\width+0.2, 4.8) node [right] {$\tfrac32T$};

\draw [<->] (0,-0.4) -- (15*\width,-0.4) node[fill=white, pos=.5] {$m$};

\end{tikzpicture}

		\caption{Situation after step (2)}
		\label{fig:alg_splittable_exp_chp}
	\end{subfigure}
	\caption{An example for the algorithm for the splittable case with $I_{\exp} = \set{1,2,3,4}$ and $I_{\chp} = \set{5,6,7,8}$}
\end{figure}
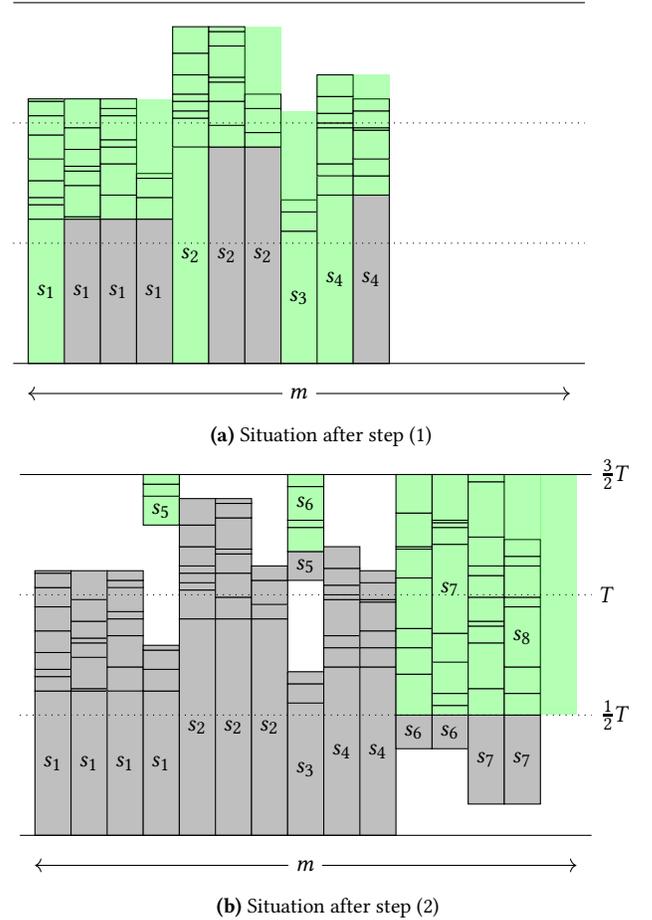
\subsection{Non-Preemptive Scheduling}

Doing \emph{non-preemptive} scheduling we do \emph{not} allow jobs to be preempted. Even an optimal schedule needs to place at least one setup to schedule a job on a machine, so we find another lower bound.

\begin{note}\label{non-preemptive_simple_lower_bound}
	$\OPT_{\op{nonp}} \geq \max_{i \in [c]} (s_i + t_{\max}^{(i)})$ where $t_{\max}^{(i)} = \max_{j \in C_i} t_j$.
\end{note}
Therefore, analogous to preemptive scheduling assume $m < n$. Let $J_+ = \set{j \in J|t_j > \tfrac12T}$ be the big jobs whereas the small jobs be denoted by $J_- = \set{j \in J| t_j \leq \tfrac12T}$. Our algorithm is based on the fact, that there are three subsets of jobs such that pairwise they cannot be scheduled on a single machine. These subsets are $J_+$, $J(I_{\exp})$, and the set $K = \bigcup_{i \in I_{\chp}}\set{j \in C_i \cap J_-| s_i + t_j > \tfrac12T}$.

In \Cref{nonpreemptivescheduling} we find the following minimum number of machines for each class. Let
\[m_i = \begin{cases}
\ceil*{\frac{P(C_i)}{T - s_i}} = \alpha_i & : i \in I_{\exp}\\
|C_i \cap J_+| + \ceil*{\frac{P(C_i \cap K)}{T - s_i}} & : i \in I_{\chp}
\end{cases}\]
for all $i \in [c]$. The following result yields our $\tfrac32$-dual approximation.

\begin{restatable*}{theorem}{restatenonpreemptivealgorithmdecision}
	\label{non-preemptive:algorithm-decision}
	Let $I$ be an instance and let $T$ be a makespan. Let
	\[L_{\op{nonp}} = P(J) + \sum_{i=1}^c m_i s_i + \sum_{i:x_i > 0} s_i\]
	and $m' = \sum_{i=1}^c m_i$ where $x_i = P(C_i) - m_i(T - s_i)$. Then the following properties hold.
	\begin{enumerate}[label=(\roman*)]
		\item If $mT < L_{\op{nonp}}$ or $m < m'$, then it is true that $T < \OPT_{\op{nonp}}(I)$.
		\label{lemma:non-preemptive:algorithm:T_check_false}
		\item Otherwise a feasible schedule with makespan at most $\tfrac32T$ can be computed in time $\oh(n)$.
		\label{lemma:non-preemptive:algorithm:T_check_true}
	\end{enumerate}
\end{restatable*}
\noindent Now binary search leads to a constant approximation:
\begin{restatable*}{theorem}{restatenonpreemptiverunningtime}\label{non-preemptive:32-approx}
	There is a $\tfrac32$-approximation for the non-preemptive case running in time $\oh(n \log (n+\Delta))$ where $\Delta = \max\{s_{\max},t_{\max}\}$ is the largest number of the input.
\end{restatable*}
\begin{proof}
	Unlike the other cases, here the optimal value is an integral number, i.e. $\OPT_{\op{nonp}} \in \naturals$, since all values in the input are integral numbers and neither jobs nor setups are allowed to be preempted. Therefore, a binary search in $[T_{\min},2T_{\min}]$ can find an appropriate makespan in time $\ceil*{\log T_{\min}} \cdot \oh(n) = \oh(n\log T_{\min})$ and it is easy to show that $\log T_{\min} \leq \oh(\log(n+\Delta))$. Given the $\tfrac32$-dual approximation of \Cref{non-preemptive:algorithm-decision} this completes the proof of \Cref{non-preemptive:32-approx}.
\end{proof}
\noindent See \Vref{nonpreemptivescheduling} for all the details.

\subsection{Class Jumping}
\label{class-jumping-splittable}

With a different idea for a binary search routine for an appropriate makespan we are able to improve both the running time and the approximation ratio for the splittable and preemptive case.
As with a single binary search we test makespan guesses with our dual algorithms. The general idea is to look at some points in time which we call \emph{jumps}. A \emph{jump} of an expensive class $i$ is some makespan guess $T$ such that any lower guess $T' < T$ will cause at least one more setup/machine to schedule the jobs of class $i$. The goal is to find two jumps $T_{\op{fail}},T_{\op{ok}}$ of two classes such that there is no jump of any other class between them, while $T_{\op{fail}}$ is rejected and $T_{\op{ok}}$ is accepted. In fact, this means that any makespan $T$ between both jumps causes the same load $L$ (with our dual algorithm). Therefore, either $T_{\op{ok}}$ or $\tfrac1{m}L$ will be an appropriate makespan.

\begin{theorem}\label{splittable:running_time_improvement}
	There is a $\tfrac32$-approximation for the splittable case running in time $\oh(n+c\log(c+m))$.
\end{theorem}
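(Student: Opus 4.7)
The plan is to invoke the $\oh(n)$-time $\tfrac{3}{2}$-dual approximation of \Cref{lemma:splittable:algorithm} as a subroutine inside a tailored search on $T$, and to make each individual dual check cost only $\oh(c)$ rather than $\oh(n)$ by precomputing suitable aggregates. In a single $\oh(n)$ pass, compute $s_i$, $P(C_i)$ and $t^{(i)}_{\max}$ for every class together with $P(J)$; after this the quantities $L_{\op{split}}(T)=P(J)+\sum_{i\in I_{\chp}} s_i+\sum_{i\in I_{\exp}}\beta_i(T) s_i$ and $m_{\exp}(T)=\sum_{i\in I_{\exp}}\beta_i(T)$ appearing in \Cref{lemma:splittable:algorithm} can be recomputed for any candidate $T$ in $\oh(c)$ time, so one dual check now costs $\oh(c)$. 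From the already proved $2$-approximation a starting window $[T_{\min},2T_{\min}]\ni\OPT_{\op{split}}(I)$ is available in $\oh(n)$ time.

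The key structural observation is that on this window both $L_{\op{split}}$ and $m_{\exp}$ are step functions of $T$. Their breakpoints are the \emph{jumps}: for each class $i$, the values $T=2P(C_i)/k$ (at which $\beta_i=\ceil{2P(C_i)/T}$ increments as $T$ decreases) together with the single threshold $T=2s_i$ (at which $i$ switches between $I_{\chp}$ and $I_{\exp}$). Between two consecutive jumps every $\beta_i$ and the partition $I_{\chp}\cupdot I_{\exp}$ are fixed, so $L_{\op{split}}$ takes a constant value $L$, $m_{\exp}$ takes a constant value $m^{\star}$, and the predicate of \Cref{lemma:splittable:algorithm} collapses to ``$m\geq m^{\star}$ and $T\geq L/m$''; in particular the smallest accepted makespan in such a segment is $\max(T_{\op{fail}},L/m)$, where $T_{\op{fail}}$ is its lower endpoint. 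A short counting argument using $T_{\min}\geq P(J)/m$ and $\sum_i P(C_i)=P(J)$ bounds the number of jumps inside $[T_{\min},2T_{\min}]$ by $\oh(c+m)$, represented implicitly by per-class integer intervals $[k_i^{-},k_i^{+}]$ of admissible values of $k$ together with the optional boundary value $2s_i$.

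The algorithm --- our instance of \emph{class jumping} --- performs a binary search over this implicit sorted jump list. In each round it picks the weighted median of the currently active jumps using an $\oh(c)$-time weighted selection that operates only on the per-class integer intervals rather than on their expanded union, evaluates the dual check at the chosen $T$ in another $\oh(c)$, and discards the accepted or rejected half of the jumps. After $\lceil\log_{2}(c+m)\rceil$ rounds the active window contains no jump and is bracketed by two adjacent jumps $T_{\op{fail}}<T_{\op{ok}}$ with $T_{\op{fail}}$ rejected and $T_{\op{ok}}$ accepted. Because every $T$ inside $(T_{\op{fail}},T_{\op{ok}}]$ yields the same constants $L$ and $m^{\star}$, the best admissible makespan in the segment is $\min(T_{\op{ok}},\max(T_{\op{fail}},L/m))$; a concluding call to the constructive half of \Cref{lemma:splittable:algorithm} with this $T$ produces a schedule of makespan at most $\tfrac{3}{2}T\leq \tfrac{3}{2}\OPT_{\op{split}}(I)$ in $\oh(n)$ time. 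Summing the costs, $\oh(n)+\oh(c\log(c+m))+\oh(n)=\oh(n+c\log(c+m))$.

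The main obstacle is the data-structural content of the binary search: the weighted-median step must run in $\oh(c)$ time \emph{without} materialising the full $\oh(c+m)$-element jump set (the $2P(C_i)/k$ expansion would already cost $\oh(c+m)$ to write down), and one must argue that each round halves the remaining jump count --- not merely the $T$-interval length --- so that $\lceil\log_2(c+m)\rceil$ rounds suffice. Everything else reduces to routine arithmetic on the per-class aggregates computed once in the initial $\oh(n)$ pass.
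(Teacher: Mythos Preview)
Your plan is workable but takes a genuinely different route from the paper, and it leaves the hardest step unresolved.

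The paper never counts all jumps in the window. Instead it first fixes the partition $I_{\exp}\cupdot I_{\chp}$ by a binary search over the $c$ thresholds $2s_i$, then picks a \emph{fastest jumping class} $f\in I_{\exp}$ (one with maximal $P_f$) and does a second binary search over the $\oh(m)$ jumps of $f$ alone. The key structural lemma is that between two consecutive jumps of $f$ every other expensive class jumps at most once; hence the surviving interval contains at most $|I_{\exp}|\le c$ jumps in total, which are then enumerated and sorted explicitly in $\oh(c\log c)$. This sidesteps entirely the implicit-selection problem you flag as ``the main obstacle'': the paper reduces to $\oh(c)$ \emph{explicit} breakpoints before the final binary search. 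Your approach instead bounds the total jump count by $\oh(c+m)$ (your counting argument via $T_{\min}\ge P(J)/m$ is correct) and searches them directly. That is sound in principle, but you explicitly defer the crux: an $\oh(c)$-per-round weighted-median selection over the implicitly represented per-class jump sequences, together with the argument that each round discards a constant fraction of the remaining jumps. This can be done (e.g.\ weighted median of per-class medians, which guarantees at least a quarter of the jumps on each side of the pivot), but you have not carried it out, so as written the proof has a gap exactly where you say it does.

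Two minor corrections. First, because $\beta_i(T)=\ceil{2P(C_i)/T}$ is right-continuous, the constants $L$ and $m^{\star}$ are constant on $[T_{\op{fail}},T_{\op{ok}})$, not on $(T_{\op{fail}},T_{\op{ok}}]$; in particular $L_{\op{split}}(T_{\op{fail}})$ equals the interior value, which is what you need (and what the paper uses). Second, when $m<m^{\star}$ on that segment your formula $\min(T_{\op{ok}},\max(T_{\op{fail}},L/m))$ does not apply; the paper handles this by a separate case returning $T_{\op{ok}}$. The paper's fastest-class lemma is short, elementary, and reused (mutatis mutandis) for the preemptive bound; your selection-based route is more generic but heavier, and you would still need to supply the missing subroutine to make it a proof.
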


With a small modification the idea can be reused to be applied to the preemptive case as well and this yields our strongest result for the preemptive case.
\begin{restatable*}{theorem}{restatepreemptiverunningtimeimprovement}\label{preemptive:running_time_improvement}
	There is a $\tfrac32$-approximation for the preemptive case running in time $\oh(n \log n)$.
\end{restatable*}
Here we only present the improvement for the splittable case. The improved search for the preemptive case is slightly more complicated and based even more on the details of the $\tfrac32$-dual approximation so we refer to \Cref{improvedsearch} for the details.

The following ideas are crucial. Once the total processing times $P_i = P(C_i)$ are computed, the values $\beta_i$ can be computed in constant time and one can test \Cref{lemma:splittable:algorithm}\ref{lemma:splittable:algorithm:T_check_true} in time $\oh(c)$ for a given makespan $T$. Whenever we test a makespan $T$ we save the computed values $\beta_i$ as $\beta_i(T)$. We will call $(A,B]$ a \emph{right interval} if makespan $B$ satisfies \Cref{lemma:splittable:algorithm}\ref{lemma:splittable:algorithm:T_check_true} ($B$ is \emph{accepted}) while $A$ does not ($A$ is \emph{rejected}). For example $(s_{\max},N]$ is a right interval.

\def\fastClass{f}
\def\slowClass{i}
\def\goodClassA{i_a}
\def\goodClassB{i_b}
\def\mValue{\beta}

\begin{algorithm}[h]
	\caption{Class Jumping for Splittable Scheduling}
	\label{alg:runtime_improvement}
	\begin{enumerate}[1.]
		\item Set $\tilde{s}_0 = 0$ and $\tilde{s}_{c+1} = N$
		\item Sort the setup time values $s_i$ ascending and name them $\tilde{s}_1,\dots,\tilde{s}_c$ in time $\oh(c\log c)$
		\label{alg:runtime_improvement:sort_setup_values}
		\item Compute $P_i = P(C_i)$ for all $i \in [c]$ in time $\oh(n)$
		\label{alg:runtime_improvement:compute_processing_times}
		\item Guess the right makespan interval $V = (2\tilde{s}_{i-1},2\tilde{s}_i]$ in time $\oh(c\log c)$ and set $T_1 := 2\tilde{s}_i$
		\label{alg:runtime_improvement:first_guess}
		\item Find a \emph{fastest jumping class} $\fastClass \in I_{\exp}$, i.e. $P_{\fastClass} \geq P_{\slowClass}$ for all $\slowClass \in I_{\exp}$ in time $\oh(c)$
		\label{alg:runtime_improvement:find_fastest_jumping_class}
		\item Guess the right interval $W = \left(\tfrac{2P_{\fastClass}}{\mValue_{\fastClass}(T_1)+k+1},\tfrac{2P_{\fastClass}}{\mValue_{\fastClass}(T_1)+k}\right]$ for some integer $k < m$ such that $X := V \cap W \neq \emptyset$, in time $\oh(c\log m)$ and set $T_2 := \tfrac{2P_{\fastClass}}{\mValue_{\fastClass}(T_1)+k}$
		\label{alg:runtime_improvement:second_guess}
		\item Find and sort the $\oh(c)$ jumps in $X$ in time $\oh(c\log c)$
		\label{alg:runtime_improvement:find_and_sort_jumps}
		\item Guess the right interval $Y = (T_{\op{fail}},T_{\op{ok}}] \subseteq W$ for two jumps $T_{\op{fail}},T_{\op{ok}}$ of two classes $i_a,i_b \in I_{\exp}^+$ which jump in $X$ such that no \emph{other} class jumps in $Y$, in time $\oh(c\log c)$
		\label{alg:runtime_improvement:third_guess}
		\item Choose a suitable makespan in this interval in constant time
		\label{alg:runtime_improvement:compute_opt}
	\end{enumerate}
\end{algorithm}

We go through the interesting details of \Cref{alg:runtime_improvement} to get the idea of the search routine.

\Sref{alg:runtime_improvement:first_guess}
Note that $T \in [2\tilde{s}_{i-1},2\tilde{s}_i)$ means $\tilde{s}_0 \leq \dots \leq \tilde{s}_{i-2} \leq \tilde{s}_{i-1} \leq \tfrac12T < \tilde{s}_i \leq \tilde{s}_{i+1} \leq \dots \leq \tilde{s}_{c+1}$ and so any makespan in an interval $[2\tilde{s}_{i-1},2\tilde{s}_i)$ causes the \emph{same} partition $I_{\exp} \cupdot I_{\chp}$. The running time can be obtained with binary search.

\Sref{alg:runtime_improvement:find_fastest_jumping_class}
We call $T$ a \emph{jump} of a class $i \in I_{\exp}$ if $2P_i/T$ is integer. That means all machines containing jobs of class $i$ are filled up to line $s_i + \tfrac12T$. So $T$ represents a point in time such that any $T' < T$ will cause at least one more (obligatory) setup to schedule class $i$. It takes a time of $\oh(c)$ to find some class $\fastClass \in I_{\exp}$ with $P_{\fastClass} = \max_{i \in I_{\exp}} P_i$.

\Sref{alg:runtime_improvement:second_guess}
Just remark that $2P_{\fastClass}/(\mValue_{\fastClass}(T_1)+k)$ and $2P_{\fastClass}/(\mValue_{\fastClass}(T_1)+k+1)$ are two consecutive jumps of class $\fastClass$.

\Sref{alg:runtime_improvement:find_and_sort_jumps}
In the analysis we will see that $X$ contains at most $c$ jumps in total (of all classes). To find a jump of a class $\iota \in I_{\exp}$ in $X$ just look at $\mValue_\iota(T_2)$. If $2P_\iota/\mValue_\iota(T_2) < X$ then there is no jump of class $\iota$ in $X$. Otherwise $T_\iota := 2P_\iota/\mValue_\iota(T_2)$ is the only jump of class $\iota$ in $X$.

\Sref{alg:runtime_improvement:compute_opt}
So $T_{\op{ok}}$ was accepted while $T_{\op{fail}}$ got rejected and there are no jumps of any other classes between them. Let $L_{\op{split}}(T_{\op{fail}})$ be the load which is required to place $T_{\op{fail}}$ and set $T_{\op{new}} := \tfrac1m L_{\op{split}}(T_{\op{fail}})$.

\noindent\textbf{Case} $m < m_{\exp}(T_{\op{fail}})$. So the jump causes too many required machines and hence $T < T_{\op{ok}}$ means $T < \OPT$. We return $T_{\op{ok}}$.

\noindent\textbf{Case} $m \geq m_{\exp}(T_{\op{fail}})$. We do another case distinction as follows.

If $T_{\op{new}} \geq T_{\op{ok}}$ we find that $T_{\op{ok}}$ is smaller than the smallest makespan that may be suitable to place $L_{\op{split}}(T_{\op{fail}})$. Therefore, we return $T_{\op{ok}}$.

If $T_{\op{new}} < T_{\op{ok}}$ it follows $T_{\op{fail}} = \tfrac{mT_{\op{fail}}}m < \tfrac1m L_{\op{split}}(T_{\op{fail}}) = T_{\op{new}} < T_{\op{ok}}$ since $m \geq m_{\exp}(T_{\op{fail}})$ and the rejection of $T_{\op{fail}}$ implies that $mT_{\op{fail}} < L_{\op{split}}(T_{\op{fail}})$. So we have $T_{\op{new}} \in (T_{\op{fail}},T_{\op{ok}})$ and we get $m \geq m_{\exp}(T_{\op{fail}}) = m_{\exp}(T_{\op{new}})$ and $mT_{\op{new}} = L_{\op{split}}(T_{\op{fail}}) = L_{\op{split}}(T_{\op{new}})$. Because of \Cref{lemma:splittable:algorithm} we return $T_{\op{new}}$.

The interesting part of the analysis is the fact, that there are no more than $\oh(c)$ jumps in $X$ and we want to show the following lemma.

\begin{lemma}\label{splittable:fastClass}
	If $T'$ is a jump of $\fastClass$, i.e. $T' = 2P_{\fastClass}/\mValue_{\fastClass}(T')$, and $T''$ is a jump of a different class $\slowClass$, i.e. $T'' = 2P_{\slowClass}/\mValue_{\slowClass}(T'')$, such that $T'' \leq T'$, then the next jump of class $i$ is smaller than the next jump of $\fastClass$, which can be written as
	\[
	\frac{2P_{\slowClass}}{\mValue_{\slowClass}(T'')+1} \leq \frac{2P_{\fastClass}}{\mValue_{\fastClass}(T')+1}.
	\]
\end{lemma}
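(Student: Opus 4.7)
My plan is to substitute the jump equalities into both sides, take reciprocals to decouple the two pieces of information we have ($T'' \leq T'$ and $P_i \leq P_f$), and then observe that the resulting inequality splits cleanly into two term-by-term comparisons.

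\textbf{Step 1: Rewrite using the jump conditions.} Since $T'$ is a jump of $\fastClass$, $\mValue_{\fastClass}(T') = 2P_{\fastClass}/T'$ is an integer, so $2P_{\fastClass} = T'\,\mValue_{\fastClass}(T')$. Similarly $2P_{\slowClass} = T''\,\mValue_{\slowClass}(T'')$. Substituting,
\[
\frac{2P_{\slowClass}}{\mValue_{\slowClass}(T'')+1} \;=\; \frac{1}{\tfrac{1}{T''}+\tfrac{1}{2P_{\slowClass}}}, \qquad \frac{2P_{\fastClass}}{\mValue_{\fastClass}(T')+1} \;=\; \frac{1}{\tfrac{1}{T'}+\tfrac{1}{2P_{\fastClass}}}.
\]

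\textbf{Step 2: Take reciprocals.} The claim is thus equivalent to
\[
\frac{1}{T'}+\frac{1}{2P_{\fastClass}} \;\leq\; \frac{1}{T''}+\frac{1}{2P_{\slowClass}}.
\]

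\textbf{Step 3: Apply the two hypotheses term-by-term.} From $T'' \leq T'$ we get $1/T' \leq 1/T''$. From the choice of $\fastClass$ as a fastest jumping class, $P_{\fastClass} = \max_{i \in I_{\exp}} P_i \geq P_{\slowClass}$, hence $1/(2P_{\fastClass}) \leq 1/(2P_{\slowClass})$. Adding the two inequalities gives the display above, completing the proof.

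There is no real obstacle here — the only nontrivial move is recognizing that replacing each $\beta$ by its expression $2P/T$ at the jump turns both sides into harmonic-mean-like quantities $1/(\tfrac{1}{T}+\tfrac{1}{2P})$, which is monotone in both $T$ and $P$ in the right direction to let the two hypotheses act independently.
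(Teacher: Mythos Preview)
Your proof is correct and essentially the same as the paper's: both substitute the jump equalities and then use the two hypotheses $T'' \leq T'$ and $P_{\fastClass} \geq P_{\slowClass}$ separately. Your reciprocal/harmonic-mean presentation is in fact the exact route the paper takes for the analogous preemptive lemma (Lemma~\ref{preemptive:fastClass}); for the splittable lemma the paper instead multiplies numerator and denominator by $P_{\fastClass}$ and chains two inequalities, but this is only a cosmetic difference.
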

\begin{proof}
	Since $T'' \leq T'$ we have
	\begin{equation}\label{equ:fastVSslow}
	\mValue_{\slowClass}(T'') P_{\fastClass} = \frac{2P_{\slowClass}}{T''} P_{\fastClass} \geq \frac{2P_{\slowClass}}{T'} P_{\fastClass} = \mValue_{\fastClass}(T') P_{\slowClass}
	\end{equation}
	and thus it follows that
	\begin{align*}
	\frac{2P_{\slowClass}}{\mValue_{\slowClass}(T'')+1}
	= \frac{2P_{\fastClass}P_{\slowClass}}{\mValue_{\slowClass}(T'')P_{\fastClass}+P_{\fastClass}}
	&\stackrel{\eqref{equ:fastVSslow}}{\leq} \frac{2P_{\fastClass}P_{\slowClass}}{\mValue_{\fastClass}(T')P_{\slowClass}+P_{\fastClass}}\\
	&\!\!\!\!\stackrel{P_{\fastClass} \geq P_{\slowClass}}{\leq}\!\!\!\! \frac{2P_{\fastClass}P_{\slowClass}}{\mValue_{\fastClass}(T')P_{\slowClass} + P_{\slowClass}}
	= \frac{2P_{\fastClass}}{\mValue_{\fastClass}(T')+1}.
	\end{align*}
\end{proof}
\begin{proof}[Proof of \Cref{splittable:running_time_improvement}]
	Due to \Cref{splittable:fastClass} any class that jumps in $X$ jumps \emph{outside} of $X$ the next time. So for every class $i \in I_{\exp}$ there is at most one jump in $X$ and hence $X$ contains at most $|I_{\exp}| \leq c$ jumps. Apparently the sum of the running times of each step is $\oh(n + c\log(c+m))$ and the returned value $T \in \set{T_{\op{ok}},T_{\op{new}}}$ holds $T \leq \OPT_{\op{split}}$ while being accepted by \Cref{lemma:splittable:algorithm}\ref{lemma:splittable:algorithm:T_check_true} such that the algorithm for the splittable case computes a feasible schedule with ratio $\tfrac32$ in time $\oh(n)$. The total running time is $\oh(n + c\log(c+m))$.
\end{proof}

\section{Preemptive Scheduling}
\label{preemptive-algorithm}

	One basic tool will be \emph{Batch Wrapping}, i.e. the wrapping of \emph{wrap sequences} into \emph{wrap templates}. See \Cref{principle:mcnaughton} for the short details.
	For this section let $T_{\min} := \max\set{\tfrac1m N, \max_{i\in[c]}(s_i+t_{\max}^{(i)})}$ where $N = \sum_{i=1}^c s_i + \sum_{j \in J} t_j$ and $t_{\max}^{(i)} = \max_{j \in C_i} t_j$.

\subsection{Nice Instances}
\label{nice_instances}

In the following we take a closer look on $I_{\exp}$ and $I_{\chp}$ so we split them again.
As stated before we divide the expensive classes into three disjoint subsets $I_{\exp}^+$, $I_{\exp}^0$ and $I_{\exp}^-$ such that $i \in I_{\exp}$ holds $i \in I_{\exp}^+$ iff. $T \leq s_i + P(C_i)$, $i \in I_{\exp}^0$ iff. $\tfrac34T < s_i + P(C_i) < T$ and $i \in I_{\exp}^-$ iff. $s_i + P(C_i) \leq \tfrac34T$.
Also we divide the cheap classes into $I_{\chp}^+$, $I_{\chp}^-$ s.t. $i \in I_{\chp}$ holds $i \in I_{\chp}^+$ iff. $\tfrac14T \leq s_i \leq \tfrac12T$ and $i \in I_{\chp}^-$ iff. $s_i < \tfrac14T$.
Now denote the big jobs of a class $i \in I_{\chp}^-$ as $C_i^* = \set{j \in C_i | s_i + t_j > \tfrac12T}$ and let $I_{\chp}^* = \set{i \in I_{\chp}^- | 1 \leq |C_i^*|} \subseteq I_{\chp}^-$ be the set of classes that contain at least one of these jobs.

\restateniceinstances
\noindent Let $\alpha_i' := \floor*{P(C_i)/(T-s_i)} \leq \ceil*{P(C_i)/(T-s_i)} = \alpha_i$ and remark that $\alpha_i' \geq 1$ for all $i \in I_{\exp}^+$. The following theorem will be of great use to find a $(3/2)$-ratio also for general instances.

\restatepreemptivesimple

\begin{algorithm}
	\caption{A $\tfrac32$-dual Approximation for Nice Instances}
	\label{preemptive:algorithm:simple}
	\begin{enumerate}[1.]
		\item Schedule $J(I_{\exp}^+)$ on $\sum_{i \in I_{\exp}}\alpha_i'$ new machines with $\alpha_i'$ machines for each class $i$
		\label{preemptive:algorithm:simple:exp+}
		\item Schedule $J(I_{\exp}^-)$ in pairs of two classes on $\ceil{\tfrac12|I_{\exp}^-|}$ new machines
		\label{preemptive:algorithm:simple:exp-}
		\item Wrap $J(I_{\chp})$ onto the residual machines starting on machine $\mu$ (last machine of step 2.)
		\label{preemptive:algorithm:simple:chp}
	\end{enumerate}
\end{algorithm}

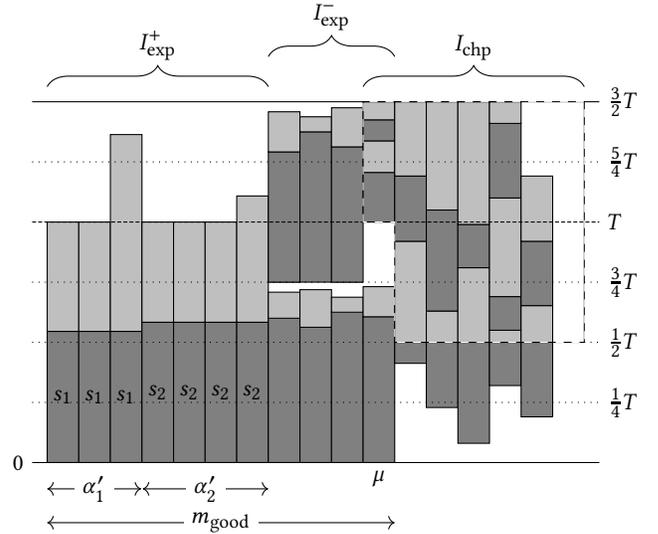
\begin{figure}
	\centering
	\begin{tikzpicture}
  \usetikzlibrary{patterns}

\def\width{0.42}

\draw [-, decorate, decoration={brace,amplitude=8pt}] (0*\width,5) -- node[above=7pt] {$I_{\exp}^+$} (7*\width,5);
\draw [-, decorate, decoration={brace,amplitude=8pt}] (7*\width,5.4) -- node[above=7pt] {$I_{\exp}^-$} (11*\width,5.4);
\draw [-, decorate, decoration={brace,amplitude=8pt}] (10*\width,5) -- node[above=7pt] {$I_{\chp}$} (17*\width,5);

\DrawMachines{
{
	{6/11}/$s_1$/[fill=gray],
	{5/11}//
}/\width/[fill=lightgray],
{
	{6/11}/$s_1$/[fill=gray],
	{5/11}//
}/\width/[fill=lightgray],
{
	{6/11}/$s_1$/[fill=gray],
	{9/11}//
}/\width/[fill=lightgray],
{
	{7/12}/$s_2$/[fill=gray],
	{5/12}//
}/\width/[fill=lightgray],
{
	{7/12}/$s_2$/[fill=gray],
	{5/12}//
}/\width/[fill=lightgray],
{
	{7/12}/$s_2$/[fill=gray],
	{5/12}//
}/\width/[fill=lightgray],
{
	{7/12}/$s_2$/[fill=gray],
	{6.3/12}//
}/\width/[fill=lightgray],
{
	{7.2/12}//[fill=gray],
	{1.3/12}//,
    {0.5/12}//[opacity=0],
	{6.5/12}//[fill=gray],
	{2./12}//
}/\width/[fill=lightgray],
{
	{9/16}//[fill=gray],
	{2.5/16}//,
    {0.5/16}//[opacity=0],
	{10/16}//[fill=gray],
	{1/16}//
}/\width/[fill=lightgray],
{
	{10/16}//[fill=gray],
	{1/16}//,
    {1/16}//[opacity=0],
	{9/16}//[fill=gray],
	{2.6/16}//
}/\width/[fill=lightgray],
{
	{9.7/16}//[fill=gray],
	{2/16}//,
	{4.3/16}//[opacity=0],
    {3.3/16}//[fill=gray],
	{2.1/16}//,
	{1.4/16}//[fill=gray],
	{1.2/16}//
}/\width/[fill=lightgray],
{
	{6.6/16}//[opacity=0],
	{1.4/16}//[fill=gray],
	{3.36/8}//,
	{2.17/8}//[fill=gray],
	{2.47/8}//
}/\width/[fill=lightgray],
{
	{3.66/16}//[opacity=0],
	{2.17/8}//[fill=gray],
{0.13}//,
{0.42}//[fill=gray],
{0.45}//
}/\width/[fill=lightgray],
{
	{0.08}//[opacity=0],
{0.42}//[fill=gray],
	{0.31}//,
{0.18}//[fill=gray],
{0.51}//
}/\width/[fill=lightgray],
{
	{0.32}//[opacity=0],
	{0.18}//[fill=gray],
	{0.05}//,
{0.14}//[fill=gray],
{0.41}//,
{0.31}//[fill=gray],
{0.09}//
}/\width/[fill=lightgray],
{
	{0.19}//[opacity=0],
{0.31}//[fill=gray],
	{1.22/8}//,
	{2.14/8}//[fill=gray],
	{2.17/8}//
}/\width/[fill=lightgray]
}

\draw (-0.2,0) node [left] {$0$} -- (17*\width+0.2, 0);
\draw (-0.2,0.8) -- (17*\width+0.2, 0.8) [dotted] node [right] {$\tfrac14T$};
\draw (-0.2,1.6) -- (17*\width+0.2, 1.6) [dotted] node [right] {$\tfrac12T$};
\draw (-0.2,2.4) -- (17*\width+0.2, 2.4) [dotted] node [right] {$\tfrac34T$};
\draw (-0.2,3.2) -- (17*\width+0.2, 3.2) [dash pattern={on 1.5pt off 0.8pt}] node [right] {$T$};
\draw (-0.2,4) -- (17*\width+0.2, 4) [dotted] node [right] {$\tfrac54T$};
\draw (-0.2,4.8) -- (17*\width+0.2, 4.8) node [right] {$\tfrac32T$};

\draw (16*\width,1.6) -- (17*\width,1.6) -- (17*\width,4.8);
\draw [dashed,white] (10*\width,3.2) -- (10*\width,4.8) -- (17*\width,4.8) -- (17*\width,1.6) -- (11*\width,1.6) -- (11*\width,3.2) -- (10*\width,3.2);

\draw [<->] (0,-0.35) -- (3*\width-0.01,-0.35) node[fill=white, pos=.5] {$\alpha_1'$};
\draw [<->] (0.01+3*\width,-0.35) -- (7*\width,-0.35) node[fill=white, pos=.5] {$\alpha_2'$};
\draw [<->] (0,-0.8) -- (11*\width,-0.8) node[fill=white, pos=.5] {$m_{\op{good}}$};

\node (mu) at (10.5*\width,-0.5*\width) {$\mu$};

\end{tikzpicture}

	\caption{An example solution after using \Cref{preemptive:algorithm:simple} with $I_{\exp}^+ = \set{1,2}$}
	\label{fig:preemptions:simple}
\end{figure}

\Sref{preemptive:algorithm:simple:exp+}
First, we look at the classes $i \in I_{\exp}$ with $s_i + P(C_i) \geq T$, i.e. $i \in I_{\exp}^+$. We define a wrap template $\omega^{(i)}$ of length $|\omega^{(i)}| = \floor{P(C_i)/(T-s_i)} = \alpha_i'$ for each class $i \in I_{\exp}^+$ as follows. Let $\omega^{(i)}_1 = (u_i,0,T)$ and $\omega^{(i)}_{1+r} = (u_i+r,s_i,T)$ for all $1 \leq r < \alpha_i'$. The first machines $u_i$ have to be chosen distinct to all machines of the other wrap templates. We construct simple wrap sequences $Q^{(i)} = [s_i,C_i]$ for each class $i \in I_{\exp}^+$ consisting of an initial setup $s_i$ followed by an arbitrary order of all jobs in $C_i$. For all $i \in I_{\exp}^+$ we use $\textsc{Wrap}(Q^{(i)},\omega^{(i)})$ to wrap $Q^{(i)}$ into $\omega^{(i)} = (\omega_1^{(i)},\dots,\omega_{\alpha_i'}^{(i)})$. The last machine $\bar{u}_i := u_i + \alpha_i' - 1$ will have a load of at most $T$ but its job load will be less than $\tfrac12T$ since $s_i > \tfrac12T$. We move these jobs to the second last machine and place them on top. So the new load will be greater than $T$ but at most $\tfrac32T$. Finally we remove the setup time $s_i$ on the last machine.

\Sref{preemptive:algorithm:simple:exp-}
Second, we turn to the classes $i \in I_{\exp}$ with $s_i + P(C_i) \leq \tfrac34T$, i.e. $i \in I_{\exp}^-$. We place them paired on one new machine $u$. So $u$ will have a load $\load(u) = s_{i_1} + P(C_{i_1}) + s_{i_2} + P(C_{i_2})$ for different classes $i_1, i_2$ that holds $T = \tfrac12T + \tfrac12T < s_{i_1} + s_{i_2} < \load(u) \leq \tfrac34T + \tfrac34T = \tfrac32T$. Note that the number of such classes can be odd. In this case we schedule one class separate on a new machine $\mu$. Otherwise we choose an unused machine and name it $\mu$. Be aware that this is for the ease of notation; in fact, an unused machine may not exist. So in both cases $\mu$ will hold $L(\mu) \leq \tfrac34T$. Apparently this step uses $\ceil{\tfrac12|I_{\exp}^-|}$ new machines.

\Sref{preemptive:algorithm:simple:chp}
The third and last step is to place the jobs of cheap classes. We build a simple wrap template $\omega$ as with a case distinction as follows. If $|I_{\exp}^-|$ is odd, we set $\omega_1 = (\mu,T,\tfrac32T)$ and $\omega_{1+r} = (\mu+r,\tfrac12T,\tfrac32T)$ for $1 \leq r \leq m-m_{\op{nice}}$. Otherwise we set $\omega_r = (\mu+r,\tfrac12T,\tfrac32T)$ for all $0 \leq r < m - m_{\op{nice}}$. So in any case we have $|\omega| \leq m - m_{\op{nice}} + 1$. We define $Q$ to be the simple wrap sequence $Q = [s_i, C_i]_{i \in I_{\chp}}$ that contains all jobs of cheap classes. So we wrap $Q$ into $\omega = (\omega_1,\dots,\omega_{|\omega|})$ using $\textsc{Wrap}(Q,\omega)$.

\Cref{fig:preemptions:simple} shows an example schedule after the last step. Be aware that setups are dark gray and that jobs of a class are not explicitly drawn.

\begin{proof}[Proof of \Cref{preemptive:simple}]
	\ref{preemptive:decision:T_check_false}. We show that $T \geq \OPT_{\op{pmtn}}(I)$ implies $mT \geq L_{\op{nice}}$ and $m \geq m_{\op{nice}}$. Let $T \geq \OPT_{\op{pmtn}}(I)$. Then there is a feasible schedule $\sigma$ with makespan $T$. Let $\load(\sigma) = \sum_{i=1}^c (\lambda_i^{\sigma}s_i + P(C_i))$. Apparently \Cref{lemma:a_i-lower-bound} implies that \[mT \geq \load(\sigma) \geq P(J) + \sum_{i=1}^c \alpha_i s_i \geq P(J) + \!\!\sum_{i \in I_{\exp}^+}\!\! \alpha_i' s_i + \!\!\!\!\!\!\!\sum_{i \in I_{\exp}^- \cup I_{\chp}} \!\!\!\!\!\!\! s_i = L_{\op{nice}}.\]	
	Due to \Cref{lemma:a_i-lower-bound,lemma:machine_number_expensive_classes_multiplicities} we know that $m \geq \sum_{i \in I_{\exp}} \lambda_i^{\sigma} \geq \sum_{i \in I_{\exp}} \alpha_i \ge \sum_{i \in I_{\exp}} \alpha_i'$ and hence
	\[
	m \geq \sum_{i \in I_{\exp}^+} \alpha_i' + \sum_{i \in I_{\exp}^-} \alpha_i'
	\geq \sum_{i \in I_{\exp}^+} \alpha_i' + \ceil{\tfrac12|I_{\exp}^-|}
	= m_{\op{nice}}.
	\]
	
	\ref{preemptive:decision:T_check_true}.
	One may easily confirm that the wrap templates $\omega^{(i)}$ suffice to wrap the sequences $Q^{(i)}$ into them, i.e. $S(\omega^{(i)}) \geq \load(Q^{(i)})$ for all $i \in I_{\exp}^+$. Apparently the total number of machines used by step \ref{preemptive:algorithm:simple:exp+} and \ref{preemptive:algorithm:simple:exp-} is $m_{\op{nice}} \leq m$ so there are enough machines for the first two steps.
	It remains to show that wrap template $\omega$ is sufficient to wrap $Q$ into it, i.e. $S(\omega) \geq \load(Q)$. We find that
	\begin{equation}\label{eq:alpha_i'}
	\alpha_i's_i + P(C_i) \geq \alpha_i's_i + \floor*{\frac{P(C_i)}{T-s_i}}(T-s_i) = \alpha_i's_i + \alpha_i'(T-s_i) = \alpha_i'T
	\end{equation}
	for all $i \in I_{\exp}^+$. Considering that $|I_{\exp}^-|$ is odd we show the inequality.
	\begin{align*}
		S(\omega) &= \tfrac12T + (m-m_{\op{nice}})T\\
		&\ge L_{\op{nice}} + \tfrac12T - m_{\op{nice}}T \qquad\qquad\qquad\qquad\qquad \text{// } mT \geq L_{\op{nice}}\\
		&= L_{\op{nice}} + \tfrac12T - (\ceil{\tfrac12|I_{\exp}^-|} + \!\!\sum_{i \in I_{\exp}^+} \!\! \alpha_i')T\\
		&\ge \!\!\! \sum_{i \in I_{\chp}} \!\! (s_i + P(C_i)) + \!\!\!\!\sum_{i \in I_{\exp}^-} \!\! (s_i + P(C_i)) + \tfrac12T - \ceil{\tfrac12|I_{\exp}^-|}T \quad\,\, \text{// } \eqref{eq:alpha_i'}\\
		&\ge \load(Q) + \sum_{i \in I_{\exp}^-} \tfrac12T + \tfrac12T - \ceil{\tfrac12|I_{\exp}^-|}T \,\,\, = \, \load(Q)
	\end{align*}
	This gets even easier if $|I_{\exp}^-|$ is even.
\end{proof}

\subsection{General Instances}

\noindent Consider a makespan $T \geq T_{\min}$ and remember the partitions $I_{\exp} = I_{\exp}^+ \cupdot I_{\exp}^0 \cupdot I_{\exp}^-$ and $I_{\chp} = I_{\chp}^+ \cupdot I_{\chp}^-$ as well as the machine numbers $\alpha_i' \leq \alpha_i$ as mentioned in \vref{nice_instances}. We state the algorithm and then we go through the details.

\begin{algorithm}[h]
	\caption{A $\tfrac32$-dual Approximation for Preemptive Scheduling}
	\label{preemptive:algorithm}
	\begin{enumerate}[1.]
		\item Schedule $J(I_{\exp}^0)$ on $l=|I_{\exp}^0|$ machines using one machine per class (the \emph{large machines})
		\label{preemptive:algorithm:exp0}
		\item Find the free time $F$ for $J(I_{\chp}^-)$ on the residual machines in order to apply \Cref{preemptive:algorithm:simple} on a nice instance and split each big job of $J(I_{\chp}^-)$ in two pieces (due to \Cref{note:too_much_load_for_large_machines} below)
		\label{preemptive:algorithm:split}
		\item Find a feasible placement for $J(I_{\chp}^-)$ on the residual $m - l$ empty machines and the free time at the bottom of the large machines of step \ref{preemptive:algorithm:exp0} In more detail:
		
		\textbf{If} $F < \sum_{i \in I_{\chp}^*} (s_i + P(C_i))$ \textbf{then}
		\begin{enumerate}
			\item Solve an appropriate knapsack instance for the decision, place a nice instance (containing the solution, $J(I^+_{\exp})$, $J(I^-_{\exp})$ and $J(I^+_{\chp})$) with \Cref{preemptive:algorithm:simple}, and place the unselected items at the bottom of the large machines
			\label{preemptive:algorithm:chp-:a}
		\end{enumerate}
		\textbf{else}
		\begin{enumerate}
			\setcounter{enumii}{1}
			\item Use a greedy approach and the last placement idea of \ref{preemptive:algorithm:chp-:a}
			\label{preemptive:algorithm:chp-:b}
		\end{enumerate}
		\label{preemptive:algorithm:chp-}
	\end{enumerate}
\end{algorithm}

\Sref{preemptive:algorithm:exp0}
First we consider all classes $i \in I_{\exp}$ with $\tfrac34T < s_i + P(C_i) < T$, i.e. $i \in I_{exp}^0$. We place every class on its own machine $u$, i.e. $\load(u) = s_i + P(C_i)$, starting at time $\tfrac12T$. Note that the used machines have less than $\tfrac14T$ free time to schedule any other jobs in a $T$-feasible schedule. Let $l = |I_{\exp}^0|$ be the number of these machines. We refer to them as the \emph{large machines}. \Cref{fig:preemptions:exp0} illustrates the situation. The lighter drawn items indicate the future use of \Cref{preemptive:algorithm:simple}, whereas the question marks indicate the areas where we need to decide the placement of $J(I_{\chp}^-)$.

\begin{figure}[h]
	\centering
	\scalebox{0.93}{\begin{tikzpicture}
  \usetikzlibrary{patterns}

\def\width{0.3}

\draw [-, decorate, decoration={brace,amplitude=8pt}] (0,5) -- node[above=7pt] {$I_{\exp}^0$} (10*\width,5);
\draw [-, decorate, decoration={brace,amplitude=8pt},opacity=0.5] (10*\width,5) -- node[above=7pt] {$I_{\exp}^+$} (17*\width,5);
\draw [-, decorate, decoration={brace,amplitude=8pt},opacity=0.5] (17*\width,5) -- node[above=7pt] {$I_{\exp}^-$} (21*\width,5);

\DrawMachines{
{
	{0.5}//[opacity=0],
	{17/32}//[fill=gray],
	{8/32}//
}/\width/[fill=lightgray],
{
	{0.5}//[opacity=0],
	{23/32}//[fill=gray],
	{5/32}//
}/\width/[fill=lightgray],
{
	{0.5}//[opacity=0],
	{25/32}//[fill=gray],
	{1/32}//
}/\width/[fill=lightgray],
{
	{0.5}//[opacity=0],
	{20/32}//[fill=gray],
	{10/32}//
}/\width/[fill=lightgray],
{
	{0.5}//[opacity=0],
	{18/32}//[fill=gray],
	{8/32}//
}/\width/[fill=lightgray],
{
	{0.5}//[opacity=0],
	{17/32}//[fill=gray],
	{10/32}//
}/\width/[fill=lightgray],
{
	{0.5}//[opacity=0],
	{19/32}//[fill=gray],
	{7/32}//
}/\width/[fill=lightgray],
{
	{0.5}//[opacity=0],
	{25/32}//[fill=gray],
	{5/32}//
}/\width/[fill=lightgray],
{
	{0.5}//[opacity=0],
	{18/32}//[fill=gray],
	{10/32}//
}/\width/[fill=lightgray],
{
	{0.5}//[opacity=0],
	{21/32}//[fill=gray],
	{4/32}//
}/\width/[fill=lightgray],
{
	{6/11}/$s_1$/[fill=gray],
	{5/11}//
}/\width/[{fill=lightgray,opacity=0.5}],
{
	{6/11}/$s_1$/[fill=gray],
	{5/11}//
}/\width/[{fill=lightgray,opacity=0.5}],
{
	{6/11}/$s_1$/[fill=gray],
	{9/11}//
}/\width/[{fill=lightgray,opacity=0.5}],
{
	{7/12}/$s_2$/[fill=gray],
	{5/12}//
}/\width/[{fill=lightgray,opacity=0.5}],
{
	{7/12}/$s_2$/[fill=gray],
	{5/12}//
}/\width/[{fill=lightgray,opacity=0.5}],
{
	{7/12}/$s_2$/[fill=gray],
	{5/12}//
}/\width/[{fill=lightgray,opacity=0.5}],
{
	{7/12}/$s_2$/[fill=gray],
	{6.3/12}//
}/\width/[{fill=lightgray,opacity=0.5}],
{
	{7.2/12}//[fill=gray],
	{1.3/12}//,
	{0.5/12}//[opacity=0],
	{6.5/12}//[fill=gray],
	{2./12}//
}/\width/[{fill=lightgray,opacity=0.5}],
{
	{9/16}//[fill=gray],
	{2.5/16}//,
	{0.5/16}//[opacity=0],
	{10/16}//[fill=gray],
	{1/16}//
}/\width/[{fill=lightgray,opacity=0.5}],
{
	{10/16}//[fill=gray],
	{1/16}//,
	{1/16}//[opacity=0],
	{9/16}//[fill=gray],
	{2.6/16}//
}/\width/[{fill=lightgray,opacity=0.5}],
{
	{9.7/16}//[fill=gray],
	{2/16}//,
	{4.3/16}//[opacity=0],
	{3.3/16}//[fill=gray],
	{2.1/16}//,
	{1.4/16}//[fill=gray],
	{1.2/16}//
}/\width/[{fill=lightgray,opacity=0.5}],
{
	{6.6/16}//[opacity=0],
	{1.4/16}//[fill=gray],
	{3.36/8}//,
	{2.17/8}//[fill=gray],
	{2.47/8}//
}/\width/[{fill=lightgray,opacity=0.5}],
{
	{3.66/16}//[opacity=0],
	{2.17/8}//[fill=gray],
	{0.13}//,
	{0.42}//[fill=gray],
	{0.45}//
}/\width/[{fill=lightgray,opacity=0.5}],
{
	{0.08}//[opacity=0],
	{0.42}//[fill=gray],
	{0.31}//,
	{0.18}//[fill=gray],
	{0.51}//
}/\width/[{fill=lightgray,opacity=0.5}],
{
	{0.32}//[opacity=0],
	{0.18}//[fill=gray],
	{0.05}//,
	{0.14}//[fill=gray],
	{0.41}//,
	{0.31}//[fill=gray],
	{0.09}//
}/\width/[{fill=lightgray,opacity=0.5}],
{
	{0.19}//[opacity=0],
	{0.31}//[fill=gray],
	{1.22/8}//,
	{2.14/8}//[fill=gray],
	{2.17/8}//
}/\width/[{fill=lightgray,opacity=0.5}]
}

\draw (-0.2,0) node [left] {$0$} -- (26*\width+0.2, 0);
\draw (-0.2,0.8) -- (26*\width+0.2, 0.8) [dotted] node [right] {$\tfrac14T$};
\draw (-0.2,1.6) -- (26*\width+0.2, 1.6) [dotted] node [right] {$\tfrac12T$};
\draw (-0.2,2.4) -- (26*\width+0.2, 2.4) [dotted] node [right] {$\tfrac34T$};
\draw (-0.2,3.2) -- (26*\width+0.2, 3.2) [dash pattern={on 1.5pt off 0.8pt}] node [right] {$T$};
\draw (-0.2,4) -- (26*\width+0.2, 4) [dotted] node [right] {$\tfrac54T$};
\draw (-0.2,4.8) -- (26*\width+0.2, 4.8) node [right] {$\tfrac32T$};

\draw [<->] (0,-0.4) -- (10*\width,-0.4) node[fill=white, pos=.5] {$l = |I_{\exp}^0|$};
\draw [<->] (0,-0.8) -- (26*\width,-0.8) node[fill=white, pos=.5] {$m$};

\node [opacity=0.5] (mu) at (20.5*\width,-0.4*\width) {$\mu$};

\node at (5*\width,0.8) {? \quad ? \quad ? \quad ? \quad ? \quad ?};

\node at (22.5*\width,4) {? \quad ? \quad ?};
\node at (23.5*\width,2.4) {? \quad ? \quad ?};

\end{tikzpicture}
}
	\caption{An example situation after step \ref{preemptive:algorithm:exp0} with $I_{\exp}^+ = \set{1,2}$}
	\label{fig:preemptions:exp0}
\end{figure}

\begin{lemma}\label{note:too_much_load_for_large_machines}
	In a $T$-feasible schedule a job $j \in C_i^*$ of a class $i \in I_{\chp}^*$ can not be scheduled on large machines only. Furthermore, $j$ can be processed on large machines with a total processing time of at most $\tfrac12T - s_i$.
\end{lemma}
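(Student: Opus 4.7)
The plan is to derive both parts of the lemma from a per-large-machine analysis combined with the non-parallelization of $j$. By \Cref{lemma:machine_number_expensive_classes_multiplicities}, in any $T$-feasible schedule the classes of $I_{\exp}^0$ occupy pairwise disjoint machines, and on each large machine $u$ the non-preemptible setup of length $s_{i'_u} > \tfrac12T$ forms a single contiguous block $S_u \subseteq [0,T]$. Since any interval of length strictly greater than $\tfrac12T$ inside $[0,T]$ must straddle the midpoint, $\tfrac12T \in S_u$ holds for every large machine.

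First I would establish the per-machine estimate. If $j$ has positive processing $P_u^j$ on a large machine $u$, then the setups $s_{i'_u}$ and $s_i$ (the latter being required to process a class-$i$ job on $u$) together with the $j$-processing on $u$ are pairwise time-disjoint subsets of $[0,T]$, so $T$-feasibility yields $s_{i'_u} + s_i + P_u^j \leq T$, and therefore $P_u^j < \tfrac12T - s_i$.

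To pass from this per-machine bound to the claimed bound on the total $\sum_u P_u^j$, I would invoke the non-parallelization of $j$: the pieces $J_{u_\ell}$ of $j$ on the different large machines $u_\ell$ that $j$ visits are pairwise time-disjoint in $[0,T]$, and each $J_{u_\ell}$ avoids $S_{u_\ell}$. Because every $S_{u_\ell}$ contains $\tfrac12T$, the free region on each large machine splits into a part lying strictly before $\tfrac12T$ and a part lying strictly after $\tfrac12T$ of combined length $T - s_{i'_{u_\ell}} < \tfrac12T$. Combining this structural split with the per-machine budget and with the time-disjointness of the pieces should force $\sum_u P_u^j \leq \tfrac12T - s_i$. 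The main obstacle is exactly this combination step, because a naive sum of per-machine bounds loses a factor equal to the number of large machines used; the extra leverage has to come from the common midpoint constraint $\tfrac12T \in S_{u_\ell}$, which blocks the time-disjoint pieces from exceeding the budget of a single machine.

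The first statement of the lemma then follows immediately: since $j \in C_i^*$ satisfies $s_i + t_j > \tfrac12T$, we have $t_j > \tfrac12T - s_i$; if $j$ were scheduled on large machines only, we would obtain $t_j = \sum_u P_u^j \leq \tfrac12T - s_i$, a contradiction.
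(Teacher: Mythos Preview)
Your proposal has a genuine gap, and you yourself flag it: the ``combination step'' is never carried out, and in fact it cannot be carried out from the ingredients you isolate. You work only with the expensive \emph{setup} block $S_u$ of length $s_{i'_u}>\tfrac12T$, and from this you extract the midpoint fact $\tfrac12T\in S_u$. But knowing that every $S_u$ contains $\tfrac12T$ only tells you that no $j$-piece crosses the single point $\tfrac12T$; together with time-disjointness this gives no nontrivial bound on $\sum_u P_u^j$. Concretely, nothing in your hypotheses rules out two large machines with $S_{u_1}=[0,0.51T]$ and $S_{u_2}=[0.49T,T]$, a tiny setup $s_i$ on each, and $j$ processed on $u_2$ during $[s_i,0.49T]$ and on $u_1$ during $[0.51T+s_i,T]$; these pieces are time-disjoint, satisfy your per-machine bound, and sum to almost $T$, far above $\tfrac12T-s_i$.

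What the paper uses---and what you are missing---is that on each large machine the \emph{entire} load of the class in $I_{\exp}^0$ (setup plus all its jobs) sits there as one contiguous block of length strictly greater than $\tfrac34T$. Any such block inside $[0,T]$ necessarily covers the whole middle interval $[\tfrac14T,\tfrac34T]$, so the free region on \emph{every} large machine is contained in the \emph{same} set $[0,\tfrac14T)\cup(\tfrac34T,T]$. Now non-parallelization bites: the time-disjoint $j$-pieces all live in this common set of measure $\tfrac12T$, and the obligatory setup $s_i$ before the first bottom piece (and another before the first top piece) shaves this down to at most $2(\tfrac14T-s_i)<\tfrac12T-s_i$. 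Replacing your $S_u$ by the full $(\tfrac34T)$-block is exactly the missing leverage; your midpoint observation is the right instinct, but applied to an interval that is too short.
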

\begin{proof}
	Remark that the load of the large machines is at least $\tfrac34T$. The placed setup times are greater than $\tfrac12T$ so we can not pause their jobs execution to schedule other jobs because that would need at least one more setup time. So the load of the large machines has to be scheduled consecutively and hence there is at most $\tfrac14T$ time to place other load at the bottom or on top of them. Due to that and since at least one setup $s_i$ is required, it is easy to see that there can be scheduled at most $2(\tfrac14T - s_i) < \tfrac12T - s_i < t_j$ time of job $j$ on large machines (on one at the top and on another one at the bottom) since $j$ is not allowed to be processed on different machines at the same time.
\end{proof}

\Sref{preemptive:algorithm:split}
Since the setups of the classes $I_{\chp}^+$ are too big to place any of their jobs on a large machine, we definitely will place the jobs $J(I_{\exp}^+ \cup I_{\exp}^- \cup I_{\chp}^+)$ entirely on the residual $m - l$ machines. We need to obtain the free time $F$ for $J(I_{\chp}^-)$ on the residual machines; in fact, we want to place as much load as possible, so, looking at \Cref{preemptive:algorithm:simple} we find that the time
\begin{equation}\label{eq:preemptive:free_time}
F = (m-l)T - \sum_{i \in I_{\exp}^+} (\alpha_i's_i + P(C_i))\,\, - \!\!\!\!\!\!\sum_{i\in I_{\exp}^- \cup I_{\chp}^+}\!\!\!\! (s_i + P(C_i))
\end{equation}
can be used to place the jobs of $J(I_{\chp}^-)$ in step \ref{preemptive:algorithm:simple:chp} of \Cref{preemptive:algorithm:simple}.

Apparently the free processing time on the large machines is $\tilde{F} := \sum_{u = 1}^{l} (T - \load(u))$, so for a suitable value of $T$ the residual available processing time $\tilde{F} + F$ suffices to schedule the residual jobs $J(I_{\chp}^-)$.	
Remember that $C_i^* = \set{j \in C_i | s_i + t_j > \tfrac12T}$ are the big jobs of a class $i \in I_{\chp}^-$, and $I_{\chp}^* \subseteq I_{\chp}^-$ denotes the classes that contain at least one of these jobs. As stated out in \Cref{note:too_much_load_for_large_machines}, we can not place them on large machines only. So we split them in the following way. For all jobs $j \in C_i^*$ with $i \in I_{\chp}^*$ we create new job pieces $j^{(1)}$ and $j^{(2)}$ with processing times $t_j^{(1)}$ and $t_j^{(2)}$, satisfying $t_j^{(1)} = \tfrac12T - s_i$ as well as $t_j^{(2)} = s_i + t_j - \tfrac12T$ and hence, $t_j^{(1)} + t_j^{(2)} = t_j$. Note that $s_i + t_j^{(1)} \leq \tfrac12T$ and $t_j^{(2)} \leq T - \tfrac12T = \tfrac12T$.
Due to \Cref{note:too_much_load_for_large_machines} we have to schedule a processing time of at least $t_j^{(2)}$ of job $j \in C_i^*$ \emph{outside} the large machines. So we also have an obligatoy setup time $s_i$ outside the large machines.
The now following case distinction of step \ref{preemptive:algorithm:chp-} is a bit more complicated.

\noindent\textbf{Case \ref{preemptive:algorithm:chp-:a}:} $F < \sum_{i \in I_{\chp}^*} (s_i + P(C_i))$. Now we have to use large machines to schedule all jobs of $J(I_{\chp}^*)$. The task is to optimize the use of setup times. We do this by minimizing the total load of necessary new setup times to be placed on the large machines $1,\dots,l$. Each class that can be scheduled entirely outside the large machines will not cause a setup time on large machines. So the setup optimization can be done by maximizing the total sum of setup times of classes we schedule \emph{entirely} outside large machines.
The obligatory job load \emph{outside} large machines for a class $i \in I_{\chp}^*$ is 
\begin{equation}\label{obligatory_load_of_class_outside_large_machines}
L^*_i = \sum_{j \in C_i^*} t_j^{(2)} = \sum_{j \in C_i^*} (s_i + t_j - \tfrac12T) = P(C_i^*) - |C_i^*|(\tfrac12T - s_i).
\end{equation}
Therefore the total obligatory load outside large machines of all classes in $I_{\chp}^*$ is
\begin{equation}\label{obligatory_load_outside_large_machines}
L^* = \sum_{i \in I_{\chp}^*} (s_i + L^*_i) = \sum_{i \in I_{\chp}^*} (s_i + P(C_i^*) - |C_i^*|(\tfrac12T - s_i)).
\end{equation}
Now we can interpret the maximization problem as a knapsack problem by setting $\mathcal{I} := I_{\chp}^*$, capacity $Y := F - L^*$, profit $p_i := s_i$ and weight $w_i := P(C_i) - L^*_i$ for all $i \in I_{\chp}^*$. We compute an optimal solution $x_{\cks}$ for the \emph{continuous knapsack problem} with split item $e \in I_{\chp}^*$ that leads to a nearly optimal solution $x_{\ks}$ for $\ILP_{\ks}$ (general knapsack problem) computable in time $\oh(|I_{\chp}^*|) \leq \oh(c)$.	
So $0 < (x_{\cks})_e < 1$ may be critical because this means we need to schedule an extra setup time $s_e$ although it might not be necessary in an optimal schedule. We overcome this issue later.
Remark that $\sum_{i \in I_{\chp}^*} (x_{\ks})_i w_i = Y - (x_{\cks})_e w_e$ so $(x_{\cks})_e w_e$ is the time to fill with job load of class $e$. Therefore, we create new job pieces as follows. For all $j \in C_e$ let $j^{[1]}$ and $j^{[2]}$ be jobs with processing times $t_j^{[1]}$ and $t_j^{[2]}$ holding
\begin{equation}
t_j^{[2]} = \begin{cases}
(x_{\cks})_e t_j & : \quad j \in C_e \setminus C_e^* \\
(x_{\cks})_e t_j^{(1)} + t_j^{(2)} & : \quad j \in C_e^*
\end{cases}
\end{equation}
as well as $t_j^{[1]} = t_j - t_j^{[2]}$. Now we define a new instance $I^{(\op{new})}$ containing all classes $i \in I_{\exp}^+ \cup I_{\exp}^- \cup I_{\chp}^+$ with all of their jobs $C_i^{(\op{new})} := C_i$, all selected classes $i \in I_{\chp}^*$ holding $(x_{\cks})_i = 1$ with all of their jobs $C_i^{(\op{new})} := C_i$, all unselected classes $i \in I_{\chp}^*\setminus\set{e}$ holding $(x_{\cks})_i = 0$ with just their obligatory load $C_i^{(\op{new})} := \set{j^{(2)}|j \in C_i^*}$, and the split item class $e \in I_{\chp}^*$ with just the load $C_e^{(\op{new})} := \set{j^{[2]}|j \in C_e}$. Last we set $m^{(\op{new})} := m-l$. Apparently $I^{(\op{new})}$ is a nice instance and we schedule it on the residual $m^{(\op{new})}$ machines using \Cref{preemptive:algorithm:simple}.
Later we will see that this load fills the gap of $(x_{\cks})_e w_e$ to $Y$ since the obligatory job load of $\sum_{j \in C_e^*} t_j^{(2)}$ is enlarged by exactly $(x_{\cks})_e w_e$.
So with $x_{\cks}$ we found a (sub-)schedule that fills up the free time $Y$ outside the large machines in an optimal way; in fact, we maximized the setup times of the selected classes such that the sum of the setup times of unselected classes got minimized. Hence, the residual load can be scheduled feasibly in the free time $\tilde{F}$ on the large machines, if $T$ is suitable.
Let $K$ be the set of the residual jobs and job pieces, i.e.
\begin{equation}
K = \set{j^{[1]}\!| j \in C_e\!}\, \cup \!\!\!\!\!\!\!\bigcup_{\substack{i \in I_{\chp}^*\setminus\set{e}\\(x_{\cks})_i = 0}} \!\!\!\!\!\!\!\!\! \left(\set{j^{(1)}\!| j \in C_i^*\!} \cup (C_i \! \setminus \! C_i^*)\right) \,\cup\, J(I_{\chp}^- \!\setminus I_{\chp}^*).
\end{equation}
In \Cref{note:pmtn:residual_jobs_are_small} we will see that all jobs (or job pieces) $\iota \in K$ of a class $i$ hold $s_i + t_\iota \le \tfrac12T$. In the following we schedule the jobs of $K$ at the bottom of the large machines.

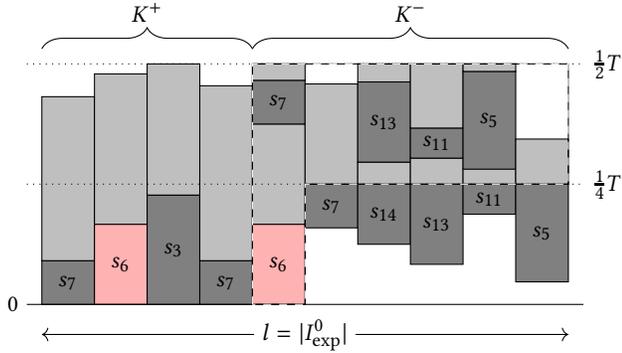
\begin{figure}[h]
	\centering
	\begin{tikzpicture}
  \usetikzlibrary{patterns}

\def\width{0.7}

\draw [-, decorate, decoration={brace,amplitude=8pt}] (0,3.4) -- node[above=7pt] {$K^+$} (4*\width,3.4);
\draw [-, decorate, decoration={brace,amplitude=8pt}] (4*\width,3.4) -- node[above=7pt] {$K^-$} (10*\width,3.4);

\DrawMachines{
{
	{2/11}/$s_7$/[fill=gray],
	{7.5/11}//
}/\width/[fill=lightgray],
{
	{4/12}/$s_6$/[fill=red!30!white],
	{7.5/12}//
}/\width/[fill=lightgray],
{
	{5/11}/$s_3$/[fill=gray],
	{6/11}//
}/\width/[fill=lightgray],
{
	{2/11}/$s_7$/[fill=gray],
	{8/11}//
}/\width/[fill=lightgray],
{
	{4/12}/$s_6$/[fill=red!30!white],
	{5/12}//,
    {2/11}/$s_7$/[fill=gray],
	{3/44}//
}/\width/[fill=lightgray],
{
	{7/22}//[opacity=0],
	{2/11}/$s_7$/[fill=gray],
	{5/12}//
}/\width/[fill=lightgray],
{
	{3/12}//[opacity=0],
	{3/12}/$s_{14}$/[fill=gray],
	{1.1/12}//,
	{4/12}/$s_{13}$/[fill=gray],
	{.9/12}//
}/\width/[fill=lightgray],
{
	{2/12}//[opacity=0],
	{4/12}/$s_{13}$/[fill=gray],
	{1.3/12}//,
	{1.5/12}/$s_{11}$/[fill=gray],
	{3.2/12}//
}/\width/[fill=lightgray],
{
	{4.5/12}//[opacity=0],
	{1.5/12}/$s_{11}$/[fill=gray],
	{1/16}//,
	{13/32}/$s_5$/[fill=gray],
	{0.03125}//
}/\width/[fill=lightgray],
{
	{3/32}//[opacity=0],
	{13/32}/$s_5$/[fill=gray],
	{3/16}//
}/\width/[fill=lightgray]
}

\draw (-0.2,0) node [left] {$0$} -- (10*\width+0.2, 0);

\draw (-0.2,1.6) -- (10*\width+0.2, 1.6) [dotted] node [right] {$\tfrac14T$};
\draw (-0.2,3.2) -- (10*\width+0.2, 3.2) [dotted] node [right] {$\tfrac12T$};

\draw (4*\width,0) -- (4*\width,3.2) -- (10*\width,3.2) -- (10*\width,1.6) -- (5*\width,1.6) -- (5*\width,0) -- (4*\width,0);
\draw [dashed,white] (4*\width,0) -- (4*\width,3.2) -- (10*\width,3.2) -- (10*\width,1.6) -- (5*\width,1.6) -- (5*\width,0) -- (4*\width,0);

\draw [<->] (0,-0.4) -- (10*\width,-0.4) node[fill=white, pos=.5] {$l = |I_{\exp}^0|$};

\end{tikzpicture}

	\caption{An example solution at the bottom of the large machines after using \Cref{preemptive:algorithm} with $\set{3,5,6,7,11,13,14} \subseteq I_{\chp}^-$ and $e=6$}
	\label{fig:preemptions:bottom_of_large_machines}
\end{figure}

\begin{note}\label{note:pmtn:residual_jobs_are_small}
	All jobs (or job pieces) $\iota \in K$ of class $i$ hold $s_i + t_\iota \leq \tfrac12T$.	
\end{note}
\begin{proof}
	First we show that this is true for all $\iota \in \set{j^{[1]}| j \in C_e}$.
	For $j \in C_e \setminus C_e^*$ we have $s_e + t_j \leq \tfrac12T$ and hence
	\[s_e + t_j^{[1]} = s_e + t_j - (x_{\cks})_e t_j \leq \tfrac12T - (x_{\cks})_e t_j \leq \tfrac12T.\]
	For $j \in C_e^*$ we use $t_j^{(2)} = s_e + t_j - \tfrac12T$ to see that
	\[
	s_e + t_j^{[1]} = s_e + t_j - (x_{\cks})_e t_j^{(1)} - (s_e + t_j - \tfrac12T)
	= \tfrac12T - (x_{\cks})_e t_j^{(1)}
	\leq \tfrac12T.
	\]
	Let $i \in I_{\chp}^*$ be a class with $(x_{\cks})_i = 0$. We have $s_i + t_j^{(1)} = s_i + \tfrac12T - s_i = \tfrac12T$ for all $j \in C_i^*$ and per definition of $C_i^*$ the bound holds for all jobs in $C_i \setminus C_i^*$.
	Also as a direct result of the definition of $I_{\chp}^*$ we get the bound for all jobs of $J(I_{\chp}^- \setminus I_{\chp}^*)$.
\end{proof}

We split $K = K^+ \cupdot K^-$ into big jobs $K^+ = \set{\iota \in K| t_\iota > \tfrac14T}$ and small jobs $K^- = \set{\iota \in K| t_\iota \leq \tfrac14T}$. Due to \Cref{preemptive:large_machines} it suffices to fill large machines with an obligatory load of at least $T$. Since the large machines already have a load of at least $\tfrac34T$, it is enough to add an obligatory load of at least $\tfrac14T$.
We start with the jobs of $K^+$. On the one hand all $\iota \in K^+$ of a class $i$ hold $t_\iota > \tfrac14T$ and on the other hand they can be placed entirely at the bottom of a large machine since $s_i + t_\iota \leq \tfrac12T$. So this is what we do. We place the jobs of $K^+$ on the first $l' \leq l$ large machines $1,\dotsc,l'$ with an initial associated setup time at time $0$ directly followed by the job (or job piece).
The very last step is to schedule the jobs of $K^-$. We remember that we need to schedule one setup time $s_e$ extra iff $(x_{\cks})_e > 0$. To avoid a case distinction we define a wrap template which is slightly larger than required for the obligatory load. Since all jobs (or job pieces) $\iota \in K^-$ need at most $\tfrac14T$ time, they can be wrapped without parallelization using a wrap template $\omega$ with $|\omega| = l - l'$ defined by $\omega_1 = (l'+1,0,\tfrac12T)$ and $\omega_{1+r} = (l'+1+r,\tfrac14T,\tfrac12T)$ for all $1 \leq r < l-l'$.
To construct a wrap sequence $Q$, we order the jobs and/or job pieces in $K^-$ by class, beginning with class $e$, and insert a suitable setup before the jobs of each class. Remark that $Q$ starts with $s_e$ followed by an arbitrary order of $\set{j^{[1]}|j \in C_e} \cap K^-$. Finally we wrap $Q$ into $\omega$ using $\textsc{Wrap}(Q,\omega)$.

See \Cref{fig:preemptions:bottom_of_large_machines} for an example solution at the bottom of the large machines. The split item class setup $e$ is colored red. Note that we even got two setups $s_6 = s_e$ since there was a big job in $K^+ \cap \set{j^{[1]}|j\in C_e}$. Also notice that setup $s_{14}$ was a critical item on machine $6$ and therefore it was moved below the next gap.

\noindent\textbf{Case \ref{preemptive:algorithm:chp-:b}:} $F \geq \sum_{i \in I_{\chp}^*} (s_i + P(C_i))$. Then there is enough time to schedule the jobs $J(I_{\chp}^*)$ entirely outside the large machines. Taking the previous case as a more complex model for this one, split $J(I_{\chp}^- \setminus I_{\chp}^*)$ into two well-defined wrap sequences $Q_1$, $Q_2$ such that $\load(Q_1) = F - \sum_{i \in I_{\chp}^*} (s_i + P(C_i))$ and there is at most one class $e \in I_{\chp}^- \setminus I_{\chp}^*$ with jobs (or job pieces) in \emph{both} sequences. Such a splitting can be obtained by a simple greedy approach. Then $J(I_{\exp}^+ \cup I_{\exp}^- \cup I_{\chp}^*)$ and the job pieces of $Q_1$ lead to a nice instance for the residual $m - l$ machines while the job pieces of $Q_2$ can be named $K$, be split into $K = K^+ \cupdot K^-$ and be handled just like before.

\subsection{Analysis}

We study case \ref{preemptive:algorithm:chp-:a} ($F < \sum_{i \in I_{\chp}^*} (s_i + P(C_i))$) only since the opposite case is much easier. We want to show the following theorem.

\begin{theorem}
	\label{preemptive:decision}
	Let $I$ be an instance and let $T$ be a makespan. Let $\alpha_i' = \floor*{\tfrac{P(C_i)}{T-s_i}}$ for all $i \in I_{\exp}^+$ and $x_{\cks}$ be the optimal solution to the knapsack problem of step \ref{preemptive:algorithm:chp-} and let
	\[L_{\op{pmtn}} = P(J) + \sum_{i \in I_{\exp}^+} \alpha_i' s_i + \sum_{i \in [c] \setminus I_{\exp}^+} s_i + \sum_{\substack{i \in I_{\chp}^*\\(x_{\cks})_i = 0}} s_i\]
	and $m' = |I_{\exp}^0| + \sum_{i \in I_{\exp}^+} \alpha_i' + \ceil{\tfrac12|I_{\exp}^-|}$. Then the following hold.
	\begin{enumerate}[label=(\roman*)]
		\item If $mT < L_{\op{pmtn}}$ or $m < m'$, then it is true that $T < \OPT_{\op{pmtn}}(I)$.
		\label{preemptive:decision:T_check_false}
		\item Otherwise a feasible schedule with makespan at most $\tfrac32T$ can be computed in time $\oh(n)$.
		\label{preemptive:decision:T_check_true}
	\end{enumerate}
\end{theorem}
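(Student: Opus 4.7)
The plan is to split the argument into the two directions of the statement. For the lower bound \ref{preemptive:decision:T_check_false}, I would show by contraposition that if either $mT < L_{\op{pmtn}}$ or $m < m'$ then no $T$-feasible schedule can exist. For the upper bound \ref{preemptive:decision:T_check_true}, I would verify that whenever both tests pass the construction of \Cref{preemptive:algorithm} fits on the $m$ machines with makespan at most $\tfrac32 T$ and runs in $\oh(n)$ time. Throughout I focus on Case~(a) of step \ref{preemptive:algorithm:chp-}, since the authors already note Case~(b) is strictly easier.

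For $m \geq m'$, I would combine \Cref{lemma:machine_number_expensive_classes_multiplicities} (expensive classes occupy pairwise disjoint machines) with \Cref{lemma:a_i-lower-bound} giving $\lambda_i^\sigma \geq \alpha_i \geq \alpha_i'$ on $I_{\exp}^+$, the fact that each class of $I_{\exp}^0$ needs its own machine, and that any single machine carries at most two setups from $I_{\exp}^-$, forcing $\ceil{\tfrac12 |I_{\exp}^-|}$ machines for that part. For $mT \geq L_{\op{pmtn}}$, the three standard summands are accounted for by $\load(\sigma) \geq P(J) + \sum_i \lambda_i^\sigma s_i$ using $\lambda_i^\sigma \geq \alpha_i'$ on $I_{\exp}^+$ and $\lambda_i^\sigma \geq 1$ elsewhere. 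The extra knapsack term is the key contribution: by \Cref{note:too_much_load_for_large_machines}, any class $i \in I_{\chp}^*$ that places load on a large machine must carry at least one setup outside the large machines \emph{and} at least one more on a large machine. The problem of selecting which $I_{\chp}^*$-classes avoid large machines altogether under the capacity $F$ is precisely the knapsack instance of \Cref{preemptive:algorithm}, and the continuous LP relaxation upper-bounds the total setup that any integer schedule can save. Hence the setups that $\sigma$ cannot avoid sum to at least $\sum_{i : (x_{\cks})_i = 0} s_i$, yielding the final summand of $L_{\op{pmtn}}$.

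For \ref{preemptive:decision:T_check_true} I would trace the algorithm step by step. Step~1 places $I_{\exp}^0$ on $l = |I_{\exp}^0|$ large machines, each of load in $(\tfrac34T, T]$ and free space $< \tfrac14 T$ on top. Step~2 computes $F$ via \eqref{eq:preemptive:free_time} and splits the big jobs of $I_{\chp}^*$ as in \Cref{note:too_much_load_for_large_machines}. The continuous knapsack then defines the nice sub-instance $I^{(\op{new})}$ on the residual $m - l$ machines. By construction of $F$ and the knapsack calibration, $mT \geq L_{\op{pmtn}}$ translates into $(m - l) T \geq L_{\op{nice}}(I^{(\op{new})})$, and $m \geq m'$ gives $m - l \geq m_{\op{nice}}(I^{(\op{new})})$, so \Cref{preemptive:simple} produces a $\tfrac32 T$-schedule for $I^{(\op{new})}$ in $\oh(n)$ time. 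Finally, the leftover jobs $K$ fit at the bottom of the large machines: \Cref{note:pmtn:residual_jobs_are_small} gives $s_i + t_\iota \leq \tfrac12 T$ for every $\iota \in K$, so each big piece of $K^+$ fits entirely on a distinct large machine with its setup, and the small pieces of $K^-$ wrap into a template of height $\tfrac12 T$ on the remaining large machines. The total free time $\tilde F + F$ is by construction calibrated to cover the load of $K$ including the extra setups.

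The principal obstacle is the knapsack step inside \ref{preemptive:decision:T_check_false}: one has to argue that the \emph{continuous} relaxation of the knapsack is a valid upper bound on the setup time any $T$-feasible schedule can save. This rests on \Cref{note:too_much_load_for_large_machines}, which forces each $I_{\chp}^*$-class touching a large machine to leave an obligatory amount of size $\geq L_i^*$ outside, reducing the adversary's choice to a per-class binary decision with weights $P(C_i) - L_i^*$ and profits $s_i$; the integer optimum of that problem is at most the continuous optimum, which is exactly the tight LP bound used in $L_{\op{pmtn}}$. The remaining verifications --- the reduction to \Cref{preemptive:simple} on $I^{(\op{new})}$ and the wrapping of $K$ at the bottom of the large machines --- are routine calculations confirming that $F$, $\tilde F$ and the wrap templates are sized precisely enough.
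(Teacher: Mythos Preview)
Your proposal is correct and follows essentially the same route as the paper: part~\ref{preemptive:decision:T_check_false} via \Cref{lemma:a_i-lower-bound,lemma:machine_number_expensive_classes_multiplicities} for the machine and load counts together with the knapsack-optimality argument through \Cref{note:too_much_load_for_large_machines} for the extra setup term, and part~\ref{preemptive:decision:T_check_true} by checking that $I^{(\op{new})}$ meets the hypotheses of \Cref{preemptive:simple} and that $K$ wraps at the bottom of the large machines. One small slip to fix: in a $T$-feasible schedule a single machine carries at most \emph{one} (not two) setup from $I_{\exp}^-$, since each such setup already exceeds $\tfrac12T$; the desired bound $\ceil{\tfrac12|I_{\exp}^-|}$ then follows simply from $|I_{\exp}^-|\ge\ceil{\tfrac12|I_{\exp}^-|}$, which is exactly how the paper derives it.
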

\begin{proof}
	\ref{preemptive:decision:T_check_false}.
	We show that $T \geq \OPT_{\op{pmtn}}$ implies $mT \geq L_{\op{pmtn}}$ and $m \geq m'$. Let $T \geq \OPT_{\op{pmtn}}$. Then there is a feasible schedule $\sigma$ with makespan $T$. Let $\load(\sigma) = \sum_{i=1}^c (\lambda_i^{\sigma}s_i + P(C_i))$.		
	Since $F < \sum_{i \in I_{\chp}^*} (s_i + P(C_i^*))$, we know that we will need an extra setup $s_i$ for all unselected classes $i \in I_{\chp}^*$ holding $(x_{\cks})_i = 0$, due to \Cref{note:too_much_load_for_large_machines}. Together with \Cref{lemma:a_i-lower-bound} we get that
	\begin{align*}
	mT \geq \load(\sigma)
	&\geq P(J) + \sum_{i=1}^c \alpha_i s_i\\ &\geq P(J) + \sum_{i \in I_{\exp}^+} \alpha_i' s_i \, + \!\!\!\sum_{i \in [c] \setminus I_{\exp}^+} \!\!\!\!\!\!s_i \, + \!\!\!\sum_{\substack{i \in I_{\chp}^*\\(x_{\cks})_i = 0}} \!\!\!\!\!s_i = L_{\op{pmtn}}.
	\end{align*}
	Due to \Cref{lemma:a_i-lower-bound,lemma:machine_number_expensive_classes_multiplicities} we know that $m \geq \sum_{i \in I_{\exp}} \lambda_i^{\sigma} \geq \sum_{i \in I_{\exp}} \alpha_i$ and hence
	\[
	m \geq \!\sum_{i \in I_{\exp}^0}\! \alpha_i + \!\sum_{i \in I_{\exp}^+}\! \alpha_i + \!\sum_{i \in I_{\exp}^-}\! \alpha_i
	\geq |I_{\exp}^0| + \!\sum_{i \in I_{\exp}^+}\! \alpha_i' + \ceil{\tfrac12|I_{\exp}^-|}
	= m'.
	\]
	
	\ref{preemptive:decision:T_check_true}.
	Let $mT \geq L_{\op{pmtn}}$ and $m \geq m'$. Apparently there are enough machines $m \geq m' \geq |I_{\exp}^0| = l$ for step \ref{preemptive:algorithm:exp0}. It is important to see that this simple placement of one machine per class is legitimate only due to \Cref{preemptive:large_machines}. Now we study step \ref{preemptive:algorithm:split}.
	As mentioned in the description of the algorithm, $I^{(\op{new})}$ is a nice instance but we want to see that it is placed feasibly on the last $m-l$ machines. So we look on the split item class $e \in I_{\chp}^*$ again. We find that
	\begin{align*}
	\sum_{j \in C_e} t_j^{[2]}
	&= \sum_{j \in C_e \setminus C_e^*} \!\!\!\! (x_{\cks})_e t_j + \sum_{j \in C_e^*} \left((x_{\cks})_e t_j^{(1)} + t_j^{(2)}\right)\\
	&= \sum_{j \in C_e^*} t_j^{(2)} + (x_{\cks})_e \left(\sum_{j \in C_e \setminus C_e^*} t_j + \sum_{j \in C_e^*} t_j^{(1)}\right)\\
	&= L_e^* + (x_{\cks})_e \left(\sum_{j \in C_e} \! t_j - \!\!\sum_{j \in C_e^*} \! t_j^{(2)}\right) \quad\,\,\, \text{//  \eqref{obligatory_load_of_class_outside_large_machines}, } t_j^{(1)} = t_j - t_j^{(2)}\\
	&= L_e^* + (x_{\cks})_e w_e \hspace{6em} \text{// } w_e = P(C_e) - \sum_{j \in C_e^*} t_j^{(2)}
	\end{align*}
	and this means that the jobs $\set{j^{[2]}|j \in C_e}$ do expand the obligatory load $L_e^*$ outside the large machines of $e$ by exactly $(x_{\cks})_e w_e$, as mentioned before. Turning back to instance $I^{\op{new}}$, we name the cheap load $L_{\chp}^{(\op{new})}$ and find that
	\begin{equation}\label{L-nice-new}
	L^{(\op{new})}_{\op{nice}} = \sum_{i \in I_{\exp}^+} (\alpha_i's_i + P(C_i)) + \sum_{i\in I_{\exp}^- \cup I_{\chp}^+} (s_i + P(C_i)) + L_{\chp}^{(\op{new})}.
	\end{equation}
	We use the continuous knapsack characteristic $\sum_{i \in I_{\chp}^*} (x_{\cks})_i w_i + (x_{\cks})_e w_e = Y = F - L^*$ as well as $w_i = P(C_i) - L_i^*$ and hence $P(C_i) = L_i^* + w_i$ to show the following equality.
	\begin{align*}
	L_{\chp}^{(\op{new})}
	&= \!\!\!\sum_{\substack{i \in I_{\chp}^*\\(x_{\cks})_i = 1}} \!\! (s_i + P(C_i)) + \!\!\!\!\!\!\! \sum_{\substack{i \in I_{\chp}^* \setminus \set{e}\\(x_{\cks})_i = 0}} \!\!\! (s_i + P(\set{j^{(2)}|j \in C_i^*}))\\\vspace*{-3em}
	&\hspace{14em} + s_e + P(\set{j^{[2]}|j \in C_e})\\
	&= \sum_{\substack{i \in I_{\chp}^*\\(x_{\cks})_i = 1}} \!\!\!\!\! (s_i + L_i^* + w_i) + \!\!\!\!\!\!\! \sum_{\substack{i \in I_{\chp}^* \setminus \set{e}\\(x_{\cks})_i = 0}} \!\!\!\!\!\!\! (s_i + L_i^*) + s_e + L_e^* + (x_{\cks})_e w_e\\
	&= \sum_{i \in I_{\chp}^*} (s_i + L_i^*) + \sum_{\substack{i \in I_{\chp}^*\\(x_{\cks})_i = 1}} w_i + (x_{\cks})_e w_e\\
	&= L^* + \sum_{i \in I_{\chp}^*} (x_{\cks})_i w_i + (x_{\cks})_e w_e
	\,\,=\,\, L^* + F - L^*
	\,\,=\,\, F
	\end{align*}
	So with \Cref{L-nice-new,obligatory_load_outside_large_machines} it follows directly that $m^{(\op{new})}T = (m-l)T = L^{(\op{new})}_{\op{nice}}$. Also we have
	\[
	m^{(\op{new})} = m - l \geq m' - l = l + \sum_{i \in I_{\exp}^+} \alpha_i' + \ceil{\tfrac12|I_{\exp}^-|} - l = m_{\op{nice}}
	\]
	so \Cref{preemptive:simple}\ref{preemptive:simple:decision:T_check_true} is satisfied and hence $I^{\op{new}}$ is scheduled feasibly on the last $m-l$ machines with a makespan of at most $\tfrac32T$.
	It remains to show that $K$ can be placed at the bottom of the large machines. As already stated in the description, this is ensured by the optimality of the continuous knapsack solution. One may formally confirm that \[l\cdot \tfrac14T \geq \tilde{F} \geq P(K) + \sum_{\substack{i \in I_{\chp}^*\\(x_{\cks})_i = 0}} s_i + \sum_{i \in I_{\chp}^- \setminus I_{\chp}^*} s_i.\]
	
	Apparently each job in $K^+$ has a load of at least $\tfrac14T$ and is placed on exactly one large machine $u$, which holds $T-\load(u) < \tfrac14T$. According to this, the wrap template $\omega$ suffices to wrap the residual jobs $K^-$. Even big jobs of the split item class (or its setups) are no problem, since big jobs fill up large machines more than possible (in a $T$-feasible schedule). So the only setup to worry about is the setup $s_e$ wrapped into $\omega$. One can see that it is not part of the above inequality, since there is no time reserved for it. Fortunately, the time period $S(\omega)$ provided by $\omega$ is large enough, though. This is true, since
	\begin{align*}
	S(\omega) = \tfrac12T + (l-l'-1)\tfrac14T &= (l-l'+1)\tfrac14T\\
	&\geq (l-|K^+|)\tfrac14T + s_e = |K^-|\tfrac14T + s_e.
	\end{align*}
	\noindent Remark that jobs do never run in parallel, according to sufficient gap heights. However, it remains to obtain the running time. The total sum of the lengths of all used wrap templates \emph{and} wrap sequences is in $\oh(n)$. So they are wrapped in a total time of $\oh(n)$. Also we need at time of $\oh(n)$ to compute the knapsack instance. To solve it, we have linear time again. Overall the running time is $\oh(n)$.
\end{proof}

	\subsection{Class Jumping}
\label{improvedsearch}

Here we use our idea of \emph{Class Jumping} (cf. \Cref{class-jumping-splittable}) to show the following theorem.

\restatepreemptiverunningtimeimprovement

The idea for the splittable case can be applied to the preemptive one with a small modification as follows. We need to replace step \ref{preemptive:algorithm:simple:exp+} of \Vref{preemptive:algorithm:simple} such that the jumps of $I_{\exp}^+$ depend less on the setup time $s_i$. As in the algorithm for the splittable case, we define a gap of size $\tfrac12T$ above each setup. If a last machine got a total load of at least $T$ the machines are filled well. If a last machine got load less than $T$ its job load will be at most $T-s_i$. It turns out that the machines $u$ before hold $\tfrac32T - \load(u) = \tfrac32T - (s_i + \tfrac12T) = T - s_i$. So we simply move the job load of the last machine to the top of the second last machine (and remove the setup on the last machine). See \Cref{fig:preemptions:runtime_improvement} for an example. To define the associated machine number $\gamma_i$ for all classes $i \in I_{\exp}^+$ we set
\def\goodClassA{i_a}
\def\goodClassB{i_b}
\def\mValue{\gamma}
\[
\beta_i' = \floor*{\frac{2P(C_i)}T} \text{ and } 
\mValue_i = \begin{cases}
\max\set{\beta_i',1} & P(C_i) - \beta_i' \cdot \tfrac12T \leq T - s_i\\
\beta_i & \text{otherwise}
\end{cases}.
\]
Remark that $\gamma_i \leq \beta_i$. One can see that class $i$ jumps right after the last machine got a job load of exactly $T-s_i$. In more detail we obtain that a makespan $T$ is a jump of class $i$ if $P_i = \mValue_i(T) \cdot \tfrac12T + (T - s_i)$ and this may be rearranged to $T = 2(s_i + P_i)/(\mValue_i(T)+2)$.

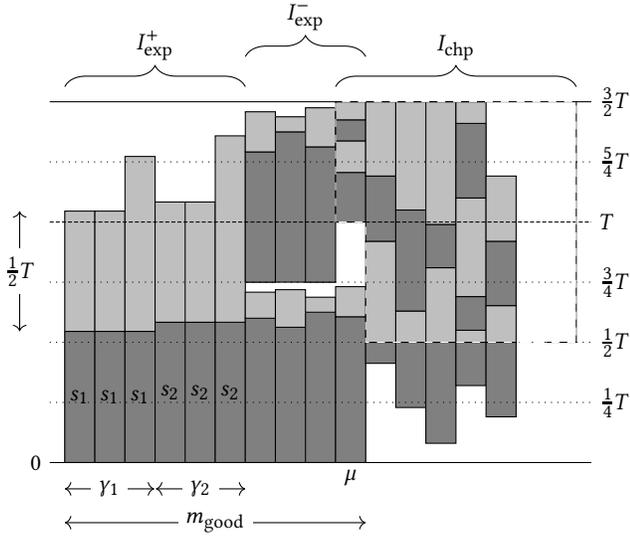
\begin{figure}[h]
	\centering
	\begin{tikzpicture}
  \usetikzlibrary{patterns}

\def\width{0.4}

\draw [-, decorate, decoration={brace,amplitude=8pt}] (0*\width,5) -- node[above=7pt] {$I_{\exp}^+$} (6*\width,5);
\draw [-, decorate, decoration={brace,amplitude=8pt}] (6*\width,5.4) -- node[above=7pt] {$I_{\exp}^-$} (10*\width,5.4);
\draw [-, decorate, decoration={brace,amplitude=8pt}] (9*\width,5) -- node[above=7pt] {$I_{\chp}$} (17*\width,5);

\draw [<->] (-0.6,3.2*6/11) -- (-0.6,3.2*6/11+1.6) node[fill=white, pos=.5] {$\tfrac12T$};

\DrawMachines{
{
	{6/11}/$s_1$/[fill=gray],
	{5.5/11}//
}/\width/[fill=lightgray],
{
	{6/11}/$s_1$/[fill=gray],
	{5.5/11}//
}/\width/[fill=lightgray],
{
	{6/11}/$s_1$/[fill=gray],
	{8/11}//
}/\width/[fill=lightgray],
{
	{7/12}/$s_2$/[fill=gray],
	{6/12}//
}/\width/[fill=lightgray],
{
	{7/12}/$s_2$/[fill=gray],
	{6/12}//
}/\width/[fill=lightgray],
{
	{7/12}/$s_2$/[fill=gray],
	{9.3/12}//
}/\width/[fill=lightgray],
{
	{7.2/12}//[fill=gray],
	{1.3/12}//,
    {0.5/12}//[opacity=0],
	{6.5/12}//[fill=gray],
	{2./12}//
}/\width/[fill=lightgray],
{
	{9/16}//[fill=gray],
	{2.5/16}//,
    {0.5/16}//[opacity=0],
	{10/16}//[fill=gray],
	{1/16}//
}/\width/[fill=lightgray],
{
	{10/16}//[fill=gray],
	{1/16}//,
    {1/16}//[opacity=0],
	{9/16}//[fill=gray],
	{2.6/16}//
}/\width/[fill=lightgray],
{
	{9.7/16}//[fill=gray],
	{2/16}//,
	{4.3/16}//[opacity=0],
    {3.3/16}//[fill=gray],
	{2.1/16}//,
	{1.4/16}//[fill=gray],
	{1.2/16}//
}/\width/[fill=lightgray],
{
	{6.6/16}//[opacity=0],
	{1.4/16}//[fill=gray],
	{3.36/8}//,
	{2.17/8}//[fill=gray],
	{2.47/8}//
}/\width/[fill=lightgray],
{
	{3.66/16}//[opacity=0],
	{2.17/8}//[fill=gray],
{0.13}//,
{0.42}//[fill=gray],
{0.45}//
}/\width/[fill=lightgray],
{
	{0.08}//[opacity=0],
{0.42}//[fill=gray],
	{0.31}//,
{0.18}//[fill=gray],
{0.51}//
}/\width/[fill=lightgray],
{
	{0.32}//[opacity=0],
	{0.18}//[fill=gray],
	{0.05}//,
{0.14}//[fill=gray],
{0.41}//,
{0.31}//[fill=gray],
{0.09}//
}/\width/[fill=lightgray],
{
	{0.19}//[opacity=0],
{0.31}//[fill=gray],
	{1.22/8}//,
	{2.14/8}//[fill=gray],
	{2.17/8}//
}/\width/[fill=lightgray]
}

\draw (-0.2,0) node [left] {$0$} -- (17*\width+0.2, 0);
\draw (-0.2,0.8) -- (17*\width+0.2, 0.8) [dotted] node [right] {$\tfrac14T$};
\draw (-0.2,1.6) -- (17*\width+0.2, 1.6) [dotted] node [right] {$\tfrac12T$};
\draw (-0.2,2.4) -- (17*\width+0.2, 2.4) [dotted] node [right] {$\tfrac34T$};
\draw (-0.2,3.2) -- (17*\width+0.2, 3.2) [dash pattern={on 1.5pt off 0.8pt}] node [right] {$T$};
\draw (-0.2,4) -- (17*\width+0.2, 4) [dotted] node [right] {$\tfrac54T$};
\draw (-0.2,4.8) -- (17*\width+0.2, 4.8) node [right] {$\tfrac32T$};

\draw (16*\width,1.6) -- (17*\width,1.6) -- (17*\width,4.8);
\draw [dashed,white] (9*\width,3.2) -- (9*\width,4.8) -- (17*\width,4.8) -- (17*\width,1.6) -- (10*\width,1.6) -- (10*\width,3.2) -- (9*\width,3.2);

\draw [<->] (0,-0.35) -- (3*\width-0.01,-0.35) node[fill=white, pos=.5] {$\gamma_1$};
\draw [<->] (0.01+3*\width,-0.35) -- (6*\width,-0.35) node[fill=white, pos=.5] {$\gamma_2$};
\draw [<->] (0,-0.8) -- (10*\width,-0.8) node[fill=white, pos=.5] {$m_{\op{good}}$};

\node (mu) at (9.5*\width,-0.5*\width) {$\mu$};

\end{tikzpicture}

	\caption{An example solution after using the modification of \Cref{preemptive:algorithm:simple} with $I_{\exp}^+ = \set{1,2}$}
	\label{fig:preemptions:runtime_improvement}
\end{figure}

\begin{algorithm}[h]
	\caption{Class Jumping for Preemptive Scheduling}
	\label{alg:preemptive:runtime_improvement}
	\begin{enumerate}[1.]
		\item Compute $P_i = P(C_i)$ for all $i \in [c]$ in time $\oh(n)$
		\label{alg:preemptive:runtime_improvement:compute_processing_times}
		\item Use consecutive binary search routines to find a right interval $V = (A_1,T_1]$ in time $\oh(n \log c)$ such that each $T \in (A_1, T_1]$ causes the same sets $I_{\exp}^+$, $I_{\exp}^0$, $I_{\exp}^-$ and $\set{i \in I_{\chp}^*| (x_{\cks})_i = 0}$
		\label{alg:preemptive:runtime_improvement:first_guess}
		\item Find a \emph{fastest jumping class} $\fastClass \in I_{\exp}^+$, i.e. $s_{\fastClass} + P_{\fastClass} \geq s_{\slowClass} + P_{\slowClass}$ for all $\slowClass \in I_{\exp}^+$, in time $\oh(c)$
		\label{alg:preemptive:runtime_improvement:find_fastest_jumping_class}
		\item Guess the right interval $W = \left(\tfrac{2(s_{\fastClass}+P_{\fastClass})}{\mValue_{\fastClass}(T_1)+k+3},\tfrac{2(s_{\fastClass} + P_{\fastClass})}{\mValue_{\fastClass}(T_1)+k+2}\right]$ for some integer $k < m$ such that $X := V \cap W \neq \emptyset$, in time $\oh(c\log m)$ and set $T_2 := \tfrac{2(s_{\fastClass}+P_{\fastClass})}{\mValue_{\fastClass}(T_1)+k+2}$
		\label{alg:preemptive:runtime_improvement:second_guess}
		\item Find and sort the $\oh(c)$ jumps in $X$ in time $\oh(c\log c)$
		\label{alg:preemptive:runtime_improvement:find_and_sort_jumps}
		\item Guess the right interval $Y = (T_{\op{fail}},T_{\op{ok}}] \subseteq W$ for two jumps $T_{\op{fail}},T_{\op{ok}}$ of two classes $i_a,i_b \in I_{\exp}^+$ which jump in $X$ such that no \emph{other} class jumps in $Y$, in time $\oh(c\log c)$
		\label{alg:preemptive:runtime_improvement:third_guess}
		\item Choose a suitable makespan in this interval in constant time
		\label{alg:preemptive:runtime_improvement:compute_opt}
	\end{enumerate}
\end{algorithm}

\para{Analysis of \Cref{alg:preemptive:runtime_improvement}.}
Again we focus on the crucial claim, that $X$ contains no more than $c$ jumps in total.

\begin{lemma}\label{preemptive:fastClass}
	If $T'$ is a jump of $\fastClass$, i.e. $T' = 2(s_{\fastClass}+P_{\fastClass})/(\mValue_{\fastClass}(T')+2)$, and $T''$ is a jump of a different class $\slowClass$, i.e. $T'' = 2(s_{\slowClass}+P_{\slowClass})/(\mValue_{\slowClass}(T'')+2)$, such that $T'' \leq T'$, then the next jump of class $\slowClass$ is smaller than the next jump of $\fastClass$, which may be written as
	\[
	\frac{2(s_{\slowClass}+P_{\slowClass})}{\mValue_{\slowClass}(T'')+3} \leq \frac{2(s_{\fastClass}+P_{\fastClass})}{\mValue_{\fastClass}(T')+3}.
	\]
\end{lemma}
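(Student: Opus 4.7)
The plan is to mirror the argument used for \Cref{splittable:fastClass} exactly, with the substitution $P_i \rightsquigarrow s_i + P_i$ and with the relevant offsets shifted by $2$ (because the jump equation for the preemptive variant is $T = 2(s_i + P_i)/(\gamma_i(T) + 2)$ rather than $T = 2 P_i / \beta_i(T)$). So the choice of $\fastClass$ as a fastest jumping class now means $s_\fastClass + P_\fastClass \geq s_\slowClass + P_\slowClass$, which is exactly the property one reads off from Step \ref{alg:preemptive:runtime_improvement:find_fastest_jumping_class} of \Cref{alg:preemptive:runtime_improvement}.

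First I would rewrite the two jump identities in ``multiplicative'' form: $(\mValue_\slowClass(T'') + 2) T'' = 2(s_\slowClass + P_\slowClass)$ and $(\mValue_\fastClass(T') + 2) T' = 2(s_\fastClass + P_\fastClass)$. Multiplying the first by $(s_\fastClass + P_\fastClass)/T''$ and the second by $(s_\slowClass + P_\slowClass)/T'$ and exploiting $T'' \leq T'$ yields the key inequality
\[
\bigl(\mValue_\slowClass(T'') + 2\bigr)(s_\fastClass + P_\fastClass) \;\geq\; \bigl(\mValue_\fastClass(T') + 2\bigr)(s_\slowClass + P_\slowClass),
\]
in direct analogy with equation~\eqref{equ:fastVSslow}.

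Then I would chain equalities/inequalities as in the splittable proof. Starting from $\tfrac{2(s_\slowClass + P_\slowClass)}{\mValue_\slowClass(T'') + 3}$, I multiply top and bottom by $(s_\fastClass + P_\fastClass)$ so that the displayed inequality can be applied to the factor $(\mValue_\slowClass(T'')+2)(s_\fastClass+P_\fastClass)$ in the denominator, giving an upper bound of $\tfrac{2(s_\fastClass+P_\fastClass)(s_\slowClass+P_\slowClass)}{(\mValue_\fastClass(T')+2)(s_\slowClass+P_\slowClass) + (s_\fastClass+P_\fastClass)}$. Finally, using the fastest-jumping-class property $s_\fastClass + P_\fastClass \geq s_\slowClass + P_\slowClass$ to bound the single residual summand in the denominator from below by $(s_\slowClass + P_\slowClass)$, the expression collapses to $\tfrac{2(s_\fastClass + P_\fastClass)}{\mValue_\fastClass(T') + 3}$, which is exactly the claim.

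The only subtlety (and really the only place the algebra is not a verbatim copy of \Cref{splittable:fastClass}) is keeping track of the ``$+2$'' inside the multiplicative identity versus the ``$+3$'' appearing in the statement: the former records that the jump happens when $\gamma_i$ machines are filled up to line $s_i + \tfrac12T$ and one extra machine carries the residual $T - s_i$, while the ``$+3$'' asks about the \emph{next} jump after one further machine is added. Provided one uses the formula $T = 2(s_i + P_i)/(\gamma_i + 2)$ consistently, the rest is routine manipulation and the lemma follows with no further case distinction.
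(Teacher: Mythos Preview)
Your argument is correct: the cross-multiplication inequality $(\mValue_\slowClass(T'')+2)(s_\fastClass+P_\fastClass)\ge(\mValue_\fastClass(T')+2)(s_\slowClass+P_\slowClass)$ together with $s_\fastClass+P_\fastClass\ge s_\slowClass+P_\slowClass$ gives the claim exactly as in \Cref{splittable:fastClass}. One small remark: the paper does \emph{not} mirror the splittable proof here. Instead it rewrites each jump as $\mValue_i+2 = 2(s_i+P_i)/T$ and observes that
\[
\frac{2(s_i+P_i)}{\mValue_i+3}=\frac{1}{\tfrac{1}{T}+\tfrac{1}{2(s_i+P_i)}},
\]
after which $T''\le T'$ and $s_\slowClass+P_\slowClass\le s_\fastClass+P_\fastClass$ each increase one summand in the denominator, yielding the bound directly without any cross-multiplication. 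Both routes use the same two hypotheses and are equally elementary; the paper's reciprocal form is a touch slicker, while your version has the virtue of being a verbatim transcription of the splittable argument.
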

\begin{proof}
	By simply rearranging the equations for $T'$ and $T''$ we find that
	\[\mValue_{\fastClass}(T') = \frac{2(s_{\fastClass}+P_{\fastClass})}{T'}-2
	\quad\, \text{and} \,\quad
	\mValue_{\slowClass}(T'') = \frac{2(s_{\slowClass}+P_{\slowClass})}{T''}-2
	\]
	and therefore we get
	\begin{align*}
		\frac{2(s_{\slowClass}+P_{\slowClass})}{\mValue_{\slowClass}(T'')+3}
		= \frac{2(s_{\slowClass}+P_{\slowClass})}{\frac{2(s_{\slowClass}+P_{\slowClass})}{T''}-2+3}
		&= \frac1{\frac1{T''} + \frac1{2(s_{\slowClass}+P_{\slowClass})}}\\
		&\leq \frac1{\frac1{T'} + \frac1{2(s_{\slowClass}+P_{\slowClass})}} \qquad \text{// } T'' \leq T'\\
		&\leq \frac1{\frac1{T'} + \frac1{2(s_{\fastClass}+P_{\fastClass})}}\\
		&= \frac{2(s_{\fastClass}+P_{\fastClass})}{\frac{2(s_{\fastClass}+P_{\fastClass})}{T'}-2+3}
		= \frac{2(s_{\fastClass}+P_{\fastClass})}{\mValue_{\fastClass}(T')+3}.
	\end{align*}
\end{proof}

\begin{proof}[Proof of \Cref{preemptive:running_time_improvement}]
	Due to \Cref{preemptive:fastClass} any class that jumps in $X$ jumps \emph{outside} of $X$ the next time. So for every class $i \in I_{\exp}^+$ there is at most one jump in $X$ and hence $X$ contains at most $|I_{\exp}^+| \leq c$ jumps. Apparently the sum of the running times of each step is $\oh(n\log(c+m))$ and the returned value $T \in \set{T_{\op{ok}},T_{\op{new}}}$ holds $T \leq \OPT_{\op{pmtn}}$ while being accepted by \Cref{preemptive:decision} \ref{preemptive:decision:T_check_true} such that \Cref{preemptive:algorithm} computes a feasible schedule with ratio $\tfrac32$ in time $\oh(n)$. So we get a total running time of $\oh(n\log(c+m)) \leq \oh(n \log n)$ since $c \leq n$ and $m < n$.
\end{proof}

\section{Conclusion}

There are several open questions. First of all: can we find any polynomial time approximation scheme for the preemptive scheduling problem? Remark that the splittable and preemptive case only differ in the (non-)parallelization of jobs. However, the preemptive problem turns out to be much harder to approximate. Especially because Jansen et al. \cite{DBLP:conf/innovations/JansenKMR19} have not found an (E)PTAS using n-folds, this remains as a very interesting open question. It might be an option to fix $m$ (cf. \Vref{table:known-results}) for first results. Also there may be constant bounds less than $\tfrac32$ with similar small running times as well.

Another remarkable point may be the investigation of unrelated/uniform machines; in fact, there is a known result of Correa et al. \cite{DBLP:conf/ipco/CorreaMMSSVV14} for the splittable case on unrelated machines.

Also we only discussed sequence-independent setups here. Considering sequence-\emph{dependent} setups the setup times are given as a matrix $S \in \naturals^{c\times c}$ of values $s_{(i_1,i_2)}$ which means that processing jobs of class $i_2$ on a machine currently set up for class $i_1$ will cost a setup time of $s_{(i_1,i_2)}$.
For example there is a very natural application to TSP for $m=1$ and $C_i = \set{j_i}$ with $t_{j_i}=0$ where the jobs/classes identify cities. Selecting setups dependent by the previous job as well as the next job, we have the classical TSP (path version).	

There may be similar approximation results by (re)using the ideas of this paper.

\begin{acks}
	We want to thank our reviewers for all of their profound reviews and many useful comments which helped us a lot to improve our paper.
\end{acks}


\bibliographystyle{ACM-Reference-Format}
\bibliography{../../.common/ref}

\appendix
	
	\crefalias{section}{appsec}
	\crefalias{subsection}{appsec}
	
	\vspace{4em}
	
	\section{More Preliminaries}
	
	Here we give the definition of \emph{Batch Wrapping} as well as some simple upper bounds for all three problem contexts.
	
	\subsection{Batch Wrapping}
	\label{principle:mcnaughton}
	
	Robert McNaughton solved \problem{P}{pmtn}{C_{\max}} in linear time \cite{McNaughton:1959:SDL:2780402.2780403}. McNaughton's \emph{wrap-around rule} simply schedules all jobs greedily from time $0$ to time $T = \max\set{t_{\max}, \tfrac1m\sum_{j \in J} t_j}$ splitting jobs whenever they cross the border $T$. Indeed, this is not applicable for our setup time problems. However, our idea of \emph{Batch Wrapping} can be understood as a generalization of McNaughton's wrap-around rule suitable for scheduling with setup times. To define it we need \emph{wrap templates} and \emph{wrap sequences} as follows.
	
	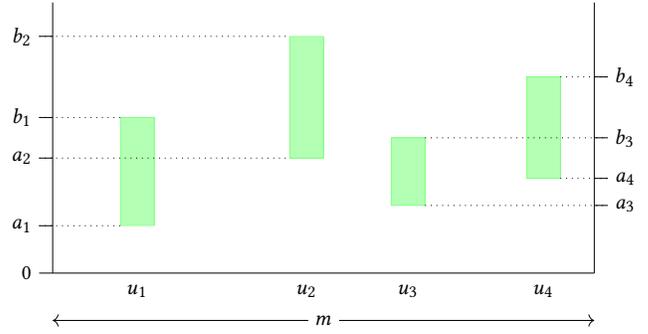
\begin{figure}
		\centering
		\scalebox{0.9}{\begin{tikzpicture}
  \usetikzlibrary{patterns}

\pgfmathsetmacro{\AONE}{0.7}
\pgfmathsetmacro{\BONE}{2.3}

\pgfmathsetmacro{\ATWO}{1.7}
\pgfmathsetmacro{\BTWO}{3.5}

\pgfmathsetmacro{\ATHREE}{1}
\pgfmathsetmacro{\BTHREE}{2}

\pgfmathsetmacro{\AFOUR}{1.4}
\pgfmathsetmacro{\BFOUR}{2.9}

  \draw (-0.2,0) node[left] {$0$} -- (8,0);
  \draw (0,0) -- (0,4);
  \draw (8,0) -- (8,4);

  \draw [color=green,fill=green,opacity=0.3] (1,\AONE) rectangle (1.5,\BONE);
  \draw [color=green,fill=green,opacity=0.3] (3.5,\ATWO) rectangle (4,\BTWO);
  \draw [color=green,fill=green,opacity=0.3] (5,\ATHREE) rectangle (5.5,\BTHREE);
  \draw [color=green,fill=green,opacity=0.3] (7,\AFOUR) rectangle (7.5,\BFOUR);

  \draw [dotted] (-0.2,\AONE) node[left] {$a_1$} -- (1,\AONE);
  \draw [dotted] (-0.2,\BONE) node[left] {$b_1$} -- (1,\BONE);
  \draw [dotted] (-0.2,\ATWO) node[left] {$a_2$} -- (3.5,\ATWO);
  \draw [dotted] (-0.2,\BTWO) node[left] {$b_2$} -- (3.5,\BTWO);
  \draw [dotted] (5.5,\ATHREE) -- (8.2,\ATHREE) node[right] {$a_3$};
  \draw [dotted] (5.5,\BTHREE) -- (8.2,\BTHREE) node[right] {$b_3$};
  \draw [dotted] (7.5,\AFOUR) -- (8.2,\AFOUR) node[right] {$a_4$};
  \draw [dotted] (7.5,\BFOUR) -- (8.2,\BFOUR) node[right] {$b_4$};

  \draw (-0.2,\AONE) -- (0,\AONE);
  \draw (-0.2,\BONE) -- (0,\BONE);
  \draw (-0.2,\ATWO) -- (0,\ATWO);
  \draw (-0.2,\BTWO) -- (0,\BTWO);
  \draw (8,\ATHREE) -- (8.2,\ATHREE);
  \draw (8,\BTHREE) -- (8.2,\BTHREE);
  \draw (8,\AFOUR) -- (8.2,\AFOUR);
  \draw (8,\BFOUR) -- (8.2,\BFOUR);

  \draw (1.25,-0.26) node {$u_1$};
  \draw (3.75,-0.26) node {$u_2$};
  \draw (5.25,-0.26) node {$u_3$};
  \draw (7.25,-0.26) node {$u_4$};

  \draw [<->] (0,-0.7) -- (8,-0.7) node[fill=white, pos=.5] {$m$};

\end{tikzpicture}
}
		\caption{An example of a wrap template $\omega$ with $|\omega| = 4$}
		\label{fig:pipe_example}
	\end{figure}
	
	\begin{definition}
		A \emph{wrap template} is a list $\omega = (\omega_1,\dotsc,\omega_{|\omega|}) \in ([m] \times \rationals \times \rationals)^*$ of triples $\omega_r = (u_r,a_r,b_r) \in [m]\times\rationals\times\rationals$ for $1 \leq r \leq |\omega|$ that hold the following properties:
		\[
		\text{(i)} \,\, u_r < u_{r+1} \text{ f.a. } 1 \leq r < |\omega|
		\quad
		\text{(ii)} \,\, 0 \leq a_r < b_r \text{ f.a. } 1 \leq r \leq |\omega|
		\]
		Let $S(\omega) := \sum_{r=1}^{|\omega|} (b_r-a_r)$ denote the provided period of time. A \emph{wrap sequence} is a sequence $Q = [s_{i_l},C_l']_{l \in [k]}$ where
		\[
		[s_{i_l},C_l']_{l \in [k]} = (s_{i_1}, j^1_1, \dotsc, j^1_{n_1}, s_{i_2}, j^2_1, \dotsc, j^2_{n_2}, \,\dotsc\,, s_{i_k}, j^k_1, \dotsc, j^k_{n_k})
		\]
		and $C_l' = \set{j^l_1, \dotsc, j^l_{n_l}}$ with $n_l = |C_l'|$ is a set of jobs and/or job pieces of a class $i_l \in [c]$ for $1 \leq l \leq k$. Let $\load(Q) := \sum_{l=1}^k (s_{i_l} + P(C_l'))$ denote the load of $Q$.
	\end{definition}
	
	These technical definitions need some intuition. Have a look at \Cref{fig:pipe_example} to see that a wrap template simply stores some free time gaps (colored green in \Cref{fig:pipe_example}) in a schedule where jobs may be placed. Remark that there can be at most one gap on each machine per definition. However, a wrap sequence is just a sequence of batches.
	
	We use wrap templates to schedule wrap sequences in the following manner. Let $X = \bigcup_{l=1}^k C_l'$ be a set of jobs and job pieces of $k$ different classes $i_1, \dotsc, i_k \in [c]$ where $C_l' = \set{j^l_1, \dotsc, j^l_{n_l}}$ is a set of jobs and job pieces of $C_{i_l}$ for all $1 \leq l \leq k$. Furthermore, let $Q = [s_{i_l},C_l']_{l \in [k]}$ be a wrap sequence. Hence, $Q$ has a length of $|Q| = \sum_{l=1}^k (1+n_l) = k + \sum_{l=1}^k n_l$. Now let $\omega = ((u_1,a_1,b_1),\dotsc,(u_{|\omega|},a_{|\omega|},b_{|\omega|}))$ be a wrap template. We want to schedule $Q$ in McNaughton's wrap-around style using the gaps $[a_r,b_r]$ for $1 \leq l \leq |\omega|$. The critical point is when an item $q$ hits the border $b_r$. If $q$ is a setup, the solution is simple. In this case we simply move $q$ below the next gap to be sure that the following jobs (or job pieces) get scheduled feasibly. If the critical item $q$ is a job (piece) of a class $i$, we split $q$ at time $b_r$ into two new jobs (just like McNaughton's wrap-around rule) to place one job piece at the end of the current gap and the other job piece at the beginning of the next gap. As before, we add a setup $s_i$ below the next gap to guarantee feasibility. We refer to the described algorithm as \textsc{Wrap}. Note that the critical job piece can be large such that it needs multiple gaps (and splits) to be placed. An algorithm for this splitting is given as \Cref{alg:split}, namely $\textsc{Split}$. $\textsc{Split}$ gets the current gap number $r$ as well as the point in time $t$ where $q$ should be processed. It returns the gap number and the point in time where the next item should start processing. A simple comparison with McNaughton's wrap-around rule will prove the following lemma.
	
	\begin{algorithm}
		\caption{Split a critical job (piece) to the subsequent gaps and add necessary setups}
		\label{alg:split}
		\begin{algorithmic}
			\Procedure{Split}{$q,\omega,r,t$}
			\State Let $i$ be the class of job (piece) $q$
			\State $p \gets q$; $t' \gets t+t_q$
			\While{$t' > b_r$}
			\State Split $q$ into new job pieces $q_1, q_2$
			\State\quad with $t_{q_1} = t' - b_r$ and $t_{q_2} = b_r - t$
			\State Place job piece $q_2$ at time $t$ on machine $u_r$
			\State $r \gets r+1$;
			$q \gets q_1$;
			$t \gets a_r$;
			$t' \gets a_r + t_q$
			\State\Comment{turn to the next gap}
			\State Place setup $s_i$ at time $t-s_i$ on machine $u_r$
			\EndWhile
			\State\textbf{done}
			\State Place job piece $q$ at time $t$ on machine $u_r$	\State\Comment{$q$ ($=q_1$) fits in the gap $[a_r,b_r]$}
			\State \Return{$(r,t')$}
			\EndProcedure
		\end{algorithmic}
	\end{algorithm}
	
	
	\begin{lemma}\label{lemma:wrap_soundness}
		Let $Q$ be a wrap sequence containing a largest setup $s_{\max}^{(Q)}$ and $\omega$ be a wrap template with $\load(Q) \leq S(\omega)$. Then $\textsc{Wrap}$ will place the last job (piece) of $Q$ in a gap $\omega_r$ with $r \leq |\omega|$. If there was a free time of at least $s_{\max}^{(Q)}$ below each gap but the first, the load gets placed feasibly.\qed
	\end{lemma}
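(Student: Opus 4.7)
The plan is to mirror McNaughton's classical wrap-around analysis and handle the two new difficulties of our setting: setups cannot be split, and any change of class inside a gap must be preceded by a setup. If every item of $Q$ could be split freely, concatenating the gaps of $\omega$ into a single timeline of total length $S(\omega)$ would make $\load(Q) \leq S(\omega)$ immediately sufficient. The whole argument therefore reduces to controlling how much gap capacity is lost when non-splittable setups are moved out of a gap, and how much room this moving requires below the next gap.

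For the non-overflow claim I would account for the gap capacity used by $\textsc{Wrap}$ in two cases. At a critical job event the piece $q$ is split across $\omega_r$ and $\omega_{r+1}$, so its full length $t_q$ still ends up inside gaps and no gap time is wasted. At a critical setup event the setup $s_{i_l}$ is moved entirely below $a_{r+1}$ and the residue $b_r - t$ of $\omega_r$ is left unused; crucially the trigger $t + s_{i_l} > b_r$ yields $b_r - t < s_{i_l}$. Writing $B_{\mathrm{moved}}$ for the total size of setups relocated below gaps and $W$ for the total wasted gap residue, the consumed gap capacity equals $\load(Q) - B_{\mathrm{moved}} + W$, and since the contribution to $W$ at each critical setup event is strictly smaller than the corresponding contribution to $B_{\mathrm{moved}}$ we get $W \leq B_{\mathrm{moved}}$ and hence a total gap consumption of at most $\load(Q) \leq S(\omega)$. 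Thus $\textsc{Wrap}$ never exhausts $\omega$ and the last job piece lands in some $\omega_r$ with $r \leq |\omega|$.

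For the feasibility claim I would next observe that each transition from $\omega_r$ to $\omega_{r+1}$ deposits \emph{at most one} extra setup into the free space below $a_{r+1}$: either the critical setup itself (case of a setup crossing) or a freshly inserted setup of the splitting job's class (case of a job crossing); in particular, one pass through the while-loop of $\textsc{Split}$ places exactly one setup below $a_{r+1}$. Both are bounded in length by $s_{\max}^{(Q)}$, so the assumed $s_{\max}^{(Q)}$ free time below every non-initial gap suffices to accommodate them. The first gap $\omega_1$ needs no extra setup below $a_1$ because $Q$ begins with the setup $s_{i_1}$, which is the very first item placed by $\textsc{Wrap}$ at time $a_1$ on $u_1$. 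Together with the monotone advance of $t$ inside each gap and the template condition $u_r < u_{r+1}$ (so items on different gaps live on different machines), every job piece inside a gap is immediately preceded on its machine by a setup of its class, and the schedule is feasible.

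The main obstacle is the accounting at critical setup events, where one has to exploit the strict inequality $b_r - t < s_{i_l}$ to keep the gap-capacity bookkeeping from exceeding $S(\omega)$; once this is in hand, the rest is a direct translation of McNaughton's argument together with the per-transition bound on below-gap setups.
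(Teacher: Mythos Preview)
Your argument is correct and, in fact, supplies the details that the paper deliberately omits: the paper's ``proof'' consists solely of the remark that a simple comparison with McNaughton's wrap-around rule suffices, followed by a \qed. Your two-part accounting---showing that wasted gap residue at a critical setup event is strictly dominated by the size of the relocated setup (so $W \le B_{\mathrm{moved}}$ and hence total gap consumption is at most $\load(Q)\le S(\omega)$), and observing that each gap transition deposits at most one setup of size at most $s_{\max}^{(Q)}$ below the next gap---is exactly the kind of bookkeeping the paper's hint points to, and it is sound.
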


	\begin{lemma}\label{lemma:wrapsplit_running_time}
		If $\load(Q) \leq S(\omega)$ then $\textsc{Wrap}(Q,\omega)$ has a running time of $\oh(|Q|+|\omega|)$.
	\end{lemma}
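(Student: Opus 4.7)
The plan is to perform an amortized analysis separating the per-item work in the outer loop of $\textsc{Wrap}$ from the gap-advancing work performed inside calls to $\textsc{Split}$. First I would argue that $\textsc{Wrap}$ processes the items of $Q$ one by one and, apart from invoking $\textsc{Split}$, performs only constant work per item (comparing the current cumulative time against $b_r$, placing the item, adjusting $t$, and in the setup-case moving to the next gap). This contributes $\mathcal{O}(|Q|)$ total. Whenever a setup $q = s_i$ is the critical item, it jumps to the next gap in constant time, so setups cost $\mathcal{O}(1)$ each.

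Next I would bound the total work performed inside all calls to $\textsc{Split}$ across the whole execution. Each iteration of the \texttt{while}-loop in \Cref{alg:split} does $\mathcal{O}(1)$ work and strictly increments the global gap counter $r$ by one, because the placed piece $q_2$ fills the remainder $[t,b_r]$ of the current gap and the algorithm then moves to $a_{r+1}$. Since $r$ never decreases (neither across successive calls to $\textsc{Split}$ nor when the outer loop continues), the counter can advance at most $|\omega|$ times in total. Hence the aggregated cost of all \texttt{while}-iterations across the entire execution of $\textsc{Wrap}$ is $\mathcal{O}(|\omega|)$.

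Finally I would use the hypothesis $\load(Q) \leq S(\omega)$ together with \Cref{lemma:wrap_soundness} to guarantee that $r$ never has to exceed $|\omega|$, so neither the outer loop nor $\textsc{Split}$ ever attempts to access a non-existent gap; in particular the amortization above is well-defined and no extra boundary work occurs. Summing the two contributions gives the claimed running time of $\mathcal{O}(|Q|+|\omega|)$.

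The main obstacle is making the amortization globally consistent: the potentially-expensive \texttt{while}-loop inside $\textsc{Split}$ is invoked once per critical job piece, so a naive per-call bound would give $\mathcal{O}(|Q|\cdot|\omega|)$. The key observation to overcome this is that the gap index $r$ is a monotonically non-decreasing global quantity shared between the outer loop of $\textsc{Wrap}$ and the inner loop of $\textsc{Split}$, so charging one unit of work to each advancement of $r$ bounds the combined cost by $|\omega|$ once and for all, independently of how the $|Q|$ items are distributed across the gaps.
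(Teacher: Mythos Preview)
Your proposal is correct and follows essentially the same approach as the paper: both argue that the outer loop over the items of $Q$ costs $\mathcal{O}(|Q|)$ apart from calls to \textsc{Split}, and that every iteration of the \texttt{while}-loop in \textsc{Split} increments the global gap index $r$, which by \Cref{lemma:wrap_soundness} stays within $|\omega|$, bounding the total \textsc{Split} work by $\mathcal{O}(|\omega|)$. Your write-up is simply more explicit about the amortization and the monotonicity of $r$ across calls.
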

	\begin{proof}
		Whenever a critical job (piece) is obtained, $\textsc{Wrap}$ runs $\textsc{Split}$. Apparently each turn of the while loop in $\textsc{Split}$ results in a switch to the next entry of $\omega$, i.e. $r \gets r+1$, such that the total number of loop turns over all $q \in Q$ is bounded by $|\omega|$ thanks to \Cref{lemma:wrap_soundness}. A turn of the for loop in $\textsc{Wrap}$ needs constant time except for the execution of $\textsc{Split}$ such that we get a total running time of $\oh(|Q|+|\omega|)$.
	\end{proof}
	
	\subsection{Simple Upper Bounds}
	\label{simple_upper_bounds}
	
	For all three problems there is a $2$-approximation with running time $\oh(n)$. The algorithm for the splittable case is a simple application of wrap templates. For the non-preemptive and preemptive case one can use a slightly modified greedy solution.
	\begin{lemma}\label{splittable:2-approximation}
		There is a $2$-approximation for the splittable case running in time $\oh(n)$.
	\end{lemma}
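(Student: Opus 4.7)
The plan is to combine Batch Wrapping with a tight lower bound on the optimum. Set $T := \max\{\tfrac{1}{m}N, s_{\max}\}$; by the preliminaries $T \leq \OPT_{\op{split}}(I)$, so it suffices to build a feasible schedule of makespan at most $2T$.

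First, I would define the wrap template $\omega$ of length $m$ by $\omega_1 = (1, 0, 2T)$ and $\omega_r = (r, s_{\max}, 2T)$ for $2 \leq r \leq m$. This provides $s_{\max}$ of free time below every gap except the first, matching the precondition of \Cref{lemma:wrap_soundness}, and has total capacity
\[
S(\omega) = 2T + (m-1)(2T - s_{\max}) \geq 2T + (m-1)T = (m+1)T \geq N,
\]
using $T \geq s_{\max}$ and $mT \geq N$. Second, I would define the wrap sequence $Q = [s_i, C_i]_{i \in [c]}$ listing each class preceded by its setup in any order, so that $\load(Q) = N$ and $s_{\max}^{(Q)} = s_{\max}$.

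Then I would run $\textsc{Wrap}(Q, \omega)$. Since $\load(Q) \leq S(\omega)$, \Cref{lemma:wrap_soundness} guarantees that the last item of $Q$ is placed in some gap $\omega_r$ with $r \leq m$, and because each non-first gap leaves $s_{\max} \geq s_{\max}^{(Q)}$ of free space below it, every wrap-around setup inserted by $\textsc{Split}$ fits without conflict. Every machine ends up with load at most $2T \leq 2\OPT_{\op{split}}(I)$, yielding the desired ratio of $2$.

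For the running time, $T$ and the sums needed for $N$ are computed in $\oh(n + c) = \oh(n)$ using $c \leq n$, and by \Cref{lemma:wrapsplit_running_time} the wrap itself runs in $\oh(|Q| + |\omega|)$ where $|Q| = n + c = \oh(n)$. The main obstacle is that in the splittable setting $m$ may exceed $n$, so one cannot afford to materialize all $m$ entries of $\omega$ eagerly; instead I would generate $\omega$ lazily as the wrap advances and halt the wrap as soon as $Q$ is exhausted. The capacity bound shows that the wrap terminates well within the $m$ available gaps, and the empty trailing machines need not appear in the output, keeping the overall work linear.
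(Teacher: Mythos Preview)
Your approximation-ratio argument is correct and essentially identical to the paper's: both set $T=\max\{\tfrac1m N,s_{\max}\}$, build a single wrap sequence $Q=[s_i,C_i]_{i\in[c]}$, and wrap it into a template whose gaps sit above a reserved band of height $s_{\max}$. The only cosmetic difference is that the paper uses the uniform gaps $\omega_r=(r,s_{\max},s_{\max}+\tfrac1mN)$, which makes $S(\omega)=N$ exactly and the makespan $s_{\max}+\tfrac1mN\le 2T$, whereas you allow the looser bound $2T$ directly. Either works.

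The gap is in your running-time argument. Lazy generation of $\omega$ and stopping once $Q$ is exhausted does \emph{not} bound the work by $\oh(|Q|)$: a single job can force \textsc{Split}'s while loop to iterate over $\Theta(m)$ gaps. Concretely, take $c=n=1$, $s_1=1$, $t_j=M$, and $m=M$ for large $M$. Then $N=M+1$, $T=\tfrac{N}{m}\approx 1$, and each non-first gap has size $2T-s_{\max}\approx 1$, so wrapping the one job touches $\Theta(M)=\Theta(m)$ gaps while $n=1$. Early termination buys nothing here because $Q$ is not exhausted until the last piece of that job is placed.

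The paper handles exactly this point by noting that the gaps are \emph{parallel} (identical $[a,b]$) and replacing the while loop in \textsc{Split} by an $\oh(1)$ computation of the multiplicity $\mu_j=\lfloor (t_j-(b-t))/(b-a)\rfloor$ of fully-filled middle gaps; the schedule is then output as machine configurations with multiplicities rather than one machine at a time. With that modification the wrap runs in $\oh(n+c)=\oh(n)$. Your template has the same parallel structure on machines $2,\dots,m$, so the same trick applies---you just need to invoke it instead of the lazy-generation argument.
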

	\begin{proof}
		Let $I$ be an instance and let $T^{(1)}_{\min} := \max\set{\tfrac1m N, s_{\max}} \leq \OPT_{\op{split}}(I)$ where $N = \sum_{i=1}^c s_i + \sum_{j \in J} t_j$. We construct a wrap template $\omega = (\omega_1,\dots,\omega_m)$ of length $|\omega| = m$ by setting $\omega_r := (r,s_{\max},s_{\max}+\tfrac1m N)$ for all $r \in [m]$, to wrap a wrap sequence $Q = [s_i,C_i]_{i \in [c]}$ containing all classes/jobs. Apparently we have $S(\omega) = \sum_{r=1}^m \tfrac1m N = N = \load(Q)$ and obviously the time $s_{\max}$ below each gap is sufficient to place the possibly missing setups. Hence, thanks to \Cref{lemma:wrap_soundness} this wrapping builds a feasible schedule with a makespan of at most $s_{\max} + \tfrac1m N \leq 2T_{\min}^{(1)} \leq 2\OPT_{\op{split}}(I)$. The attentive reader recognizes that $\textsc{Wrap}$ runs in time $\oh(|Q|+|\omega|) = \oh(c+n+m) > \oh(n)$ (cf. \Cref{lemma:wrapsplit_running_time}) but actually a smarter implementation of $\textsc{Split}$ is able to overcome this issue. For details see the proof of \Vref{lemma:splittable:algorithm}.
	\end{proof}
	
	Due to the same lower bounds (cf. \Cref{preemptive_simple_lower_bound,non-preemptive_simple_lower_bound}) the non-preemptive and preemptive case can be approximated using the same algorithm, stated in the proof of the following \Cref{non-preemptive:2-approximation}.
	
	\begin{lemma}\label{non-preemptive:2-approximation}
		There is a $2$-approximation for both the non-preemptive and preemptive case running in time $\oh(n)$.
	\end{lemma}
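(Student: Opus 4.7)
The plan is to use the common lower bound $T_{\min} := \max\{\tfrac{1}{m}N,\ \max_{i\in[c]}(s_i + t_{\max}^{(i)})\} \leq \OPT$, which follows from \Cref{preemptive_simple_lower_bound,non-preemptive_simple_lower_bound}, and design a single greedy next-fit algorithm that is simultaneously valid in both the preemptive and non-preemptive settings by never splitting any individual job. I would process the classes in arbitrary order while maintaining a current machine $u$ and a current end-time $t$. Before starting a class $i$, if $t + s_i + t_{j_1}^{(i)} > 2T_{\min}$ (setup plus first job exceeds the target), advance to the next machine and reset $t \gets 0$. Then place a setup $s_i$ and append jobs of $C_i$ one by one; whenever the next job $j$ would push $t$ beyond $2T_{\min}$, move to a fresh machine and place another setup $s_i$ there before continuing. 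Since every whole job is placed intact on a single machine, the schedule is feasible in both contexts.

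The makespan bound $\leq 2T_{\min} \leq 2\OPT$ is immediate from construction, and the running time is $\oh(n)$ since each job and each class is inspected a constant number of times. The nontrivial content is to show that at most $m$ machines are used. Let $k$ be this number. For each advanced machine $u \in \{1,\ldots,k-1\}$ I would do a case analysis on the reason for advancing. If we advanced at a class boundary because $s_{i'} + t_{j_1}^{(i')}$ did not fit, then $L_u > 2T_{\min} - (s_{i'} + t_{\max}^{(i')}) \geq T_{\min}$. If we advanced inside a class $i$ because some job $t_j$ did not fit, then $t_j \leq t_{\max}^{(i)} \leq T_{\min} - s_i$ yields the stronger estimate $L_u > 2T_{\min} - t_{\max}^{(i)} \geq T_{\min} + s_i$, where the extra $s_i$ exactly matches the redundant setup placed on the next machine.

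Writing $H := \sum_{i=1}^c (k_i - 1) s_i$ for the overhead from classes split across machines (with $k_i$ the number of machines used by class $i$), the total placed load equals $\sum_{u=1}^k L_u = N + H$. Summing the per-machine lower bounds over the advanced machines gives
\[
\sum_{u=1}^{k-1} L_u \;>\; (k-1)\,T_{\min} + H,
\]
because the $+\,s_i$ terms coming from split-advances assemble precisely into $H$. Combining with $\sum_{u=1}^{k-1} L_u \leq N + H$ yields $(k-1) T_{\min} < N \leq m T_{\min}$, hence $k \leq m$.

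The main obstacle is exactly this accounting step: one needs the sharpened bound $L_u > T_{\min} + s_i$ for machines that split their class rather than the naive $L_u > T_{\min}$, so that the accumulated $s_i$ contributions on the left-hand side cancel the overhead $H$ on the right-hand side. Without that cancellation the weaker inequality only gives $k \leq m+1$, which would overshoot the machine budget; with it, the 2-approximation holds for both the preemptive and the non-preemptive problem via the same algorithm.
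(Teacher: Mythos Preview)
Your argument is correct. The accounting is the key point and you handle it properly: for a within-class advance the strict bound $L_u > 2T_{\min} - t_j \ge T_{\min} + s_i$ holds because $t_j \le t_{\max}^{(i)} \le T_{\min}-s_i$, and the extra $s_i$ you collect on the left exactly matches the redundant setup that contributes to $H$ on the right, so $(k-1)T_{\min} < N \le mT_{\min}$ and $k\le m$ follows.

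Your route differs from the paper's. The paper runs next-fit with threshold $T_{\min}$ (not $2T_{\min}$), placing exactly one setup per class; since the total load in this first pass is precisely $N \le mT_{\min}$ and every non-final machine strictly exceeds $T_{\min}$, the machine bound $k\le m$ is immediate without any overhead accounting. In a second pass it moves each $T_{\min}$-crossing item to the next machine, prepending a setup if the item is a job, and observes that each machine gains at most $\max_i(s_i+t_{\max}^{(i)}) \le T_{\min}$, giving makespan $\le 2T_{\min}$. So the paper trades a trivial machine-count argument for a two-phase construction, whereas you run a single phase against $2T_{\min}$ and pay for it with the $H$-cancellation bookkeeping. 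Both are linear-time and both exploit the same lower bound $T_{\min}$; your version is perhaps conceptually cleaner as an algorithm, while the paper's version makes the machine bound a one-liner.
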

	
	\begin{proof}
		Let $I$ be an instance and let $T_{\min}$ be a makespan with $T_{\min} = \max\set{\tfrac1m N, \max_{i \in [c]} (s_i + t_{\max}^{(i)})}$ where $t_{\max}^{(i)} = \max_{j \in C_i} t_j$ and $N = \sum_{i=1}^c s_i + \sum_{j \in J} t_j$. Consider the non-preemptive case and remark that $T_{\min} \le \OPT_{\op{nonp}}(I)$. First, we group the jobs by classes. Let $C_i = \set{j_1^i,\dotsc,j_{n_i}^i}$  with $n_i = |C_i|$ for all classes $i \in [c]$.
		Beginning on machine $1$, we add one setup for each class followed by all jobs of the class. Whenever the load of the current machine exceeds $T_{\min}$, we keep the last placed item and proceed to the next machine. So we add the items $s_1,j_1^1,\dotsc,j_{n_1}^1,s_2,j_1^2,\dotsc,j_{n_2}^2,\dotsc,s_c,j_1^c,\dotsc,j_{n_c}^c$ to the machines using a next-fit strategy with threshold $T_{\min}$ (see \Cref{fig1} on the left).
		\begin{figure}[h]
			\centering
			\scalebox{1.00}{\begin{tikzpicture}
\usetikzlibrary{patterns,arrows.meta}

\def\width{0.53}

\draw (-0.2,0) -- (5*\width,0) node[pos=-0.06] {$0$};

\draw [fill=red!30!white] (0,0) rectangle (\width,1) node[pos=.5] {$s_1$};
\draw [fill=red!30!white] (0,1) rectangle (\width,1.8);
\draw [fill=red!30!white] (0,1.8) rectangle (\width,2.3);
\draw [fill=red!30!white] (0,2.3) rectangle (\width,2.9);
\draw [preaction={fill,red!30!white}, pattern=north east lines] (0,2.9) rectangle (\width,3.7);
\draw [fill=red!30!white] (\width,0) rectangle (2*\width,0.4);
\draw [fill=red!30!white] (\width,0.4) rectangle (2*\width,0.7);
\draw [fill=green!30!white] (\width,0.7) rectangle (2*\width,1.4) node[pos=.5] {$s_2$};
\draw [fill=green!30!white] (\width,1.4) rectangle (2*\width,1.6);
\draw [fill=green!30!white] (\width,1.6) rectangle (2*\width,1.7);
\draw [fill=green!30!white] (\width,1.7) rectangle (2*\width,2.1);
\draw [fill=green!30!white] (\width,2.1) rectangle (2*\width,2.7);
\draw [fill=blue!30!white] (\width,2.7) rectangle (2*\width,3.3) node[pos=.5] {$s_3$};
\draw [preaction={fill,blue!30!white}, pattern=north east lines] (\width,3.3) rectangle (2*\width,3.9);
\draw [fill=blue!30!white] (2*\width,0) rectangle (3*\width,0.3);
\draw [fill=yellow!30!white] (2*\width,0.3) rectangle (3*\width,2.3) node[pos=.5] {$s_4$};
\draw [fill=yellow!30!white] (2*\width,2.3) rectangle (3*\width,2.6);
\draw [preaction={fill,yellow!30!white}, pattern=north east lines] (2*\width,2.6) rectangle (3*\width,3.8);
\draw [fill=yellow!30!white] (3*\width,0) rectangle (4*\width,0.6);
\draw [fill=yellow!30!white] (3*\width,0.6) rectangle (4*\width,2.1);
\draw [fill=yellow!30!white] (3*\width,2.1) rectangle (4*\width,2.2);
\draw [preaction={fill,orange!30!white}, pattern=north east lines] (3*\width,2.2) rectangle (4*\width,3.6) node[pos=.5] {$s_5$};
\draw [fill=orange!30!white] (4*\width,0) rectangle (5*\width,0.2);
\draw [fill=orange!30!white] (4*\width,0.2) rectangle (5*\width,0.3);
\draw [fill=orange!30!white] (4*\width,0.3) rectangle (5*\width,0.7);
\draw [fill=orange!30!white] (4*\width,0.7) rectangle (5*\width,1);
\draw [dashed] (-0.2,3.5) -- (5.2*\width,3.5) node[right] {$T_{\min}$};
\draw (-0.2,7) -- (5.2*\width,7) node[right] {$2T_{\min}$};
\draw [<->] (0,-0.5) -- (5*\width,-0.5) node[fill=white, pos=.5] {$m$};

\draw[-{Latex[scale=2.0]}] (6*\width-0.2,2) -- (7*\width+0.2,2);

\draw (7.8*\width,0) -- (13*\width,0) node[pos=-0.06] {$0$};

\draw [fill=red!30!white] (8*\width,0) rectangle (9*\width,1) node[pos=.5] {$s_1$};
\draw [fill=red!30!white] (8*\width,1) rectangle (9*\width,1.8);
\draw [fill=red!30!white] (8*\width,1.8) rectangle (9*\width,2.3);
\draw [fill=red!30!white] (8*\width,2.3) rectangle (9*\width,2.9);

\draw [fill=red!30!white] (9*\width,0) rectangle (10*\width,1) node[pos=.5] {$s_1$};
\draw [preaction={fill,red!30!white}, pattern=north east lines] (9*\width,1) rectangle (10*\width,1.8);
\draw [fill=red!30!white] (9*\width,1.8) rectangle (10*\width,2.2);
\draw [fill=red!30!white] (9*\width,2.2) rectangle (10*\width,2.5);
\draw [fill=green!30!white] (9*\width,2.5) rectangle (10*\width,3.2) node[pos=.5] {$s_2$};
\draw [fill=green!30!white] (9*\width,3.2) rectangle (10*\width,3.4);
\draw [fill=green!30!white] (9*\width,3.4) rectangle (10*\width,3.5);
\draw [fill=green!30!white] (9*\width,3.5) rectangle (10*\width,3.9);
\draw [fill=green!30!white] (9*\width,3.9) rectangle (10*\width,4.5);
\draw [fill=blue!30!white] (9*\width,4.5) rectangle (10*\width,5.1) node[pos=.5] {$s_3$};
\draw [fill=blue!30!white] (10*\width,0) rectangle (11*\width,0.6) node[pos=.5] {$s_3$};
\draw [preaction={fill,blue!30!white}, pattern=north east lines] (10*\width,0.6) rectangle (11*\width,1.2);
\draw [fill=blue!30!white] (10*\width,1.2) rectangle (11*\width,1.5);
\draw [fill=yellow!30!white] (10*\width,1.5) rectangle (11*\width,3.5) node[pos=.5] {$s_4$};
\draw [fill=yellow!30!white] (10*\width,3.5) rectangle (11*\width,3.8);
\draw [fill=yellow!30!white] (11*\width,0) rectangle (12*\width,2) node[pos=.5] {$s_4$};
\draw [preaction={fill,yellow!30!white}, pattern=north east lines] (11*\width,2) rectangle (12*\width,3.2);
\draw [fill=yellow!30!white] (11*\width,3.2) rectangle (12*\width,3.8);
\draw [fill=yellow!30!white] (11*\width,3.8) rectangle (12*\width,5.3);
\draw [fill=yellow!30!white] (11*\width,5.3) rectangle (12*\width,5.4);
\draw [preaction={fill,orange!30!white}, pattern=north east lines] (12*\width,0) rectangle (13*\width,1.4) node[pos=.5] {$s_5$};
\draw [fill=orange!30!white] (12*\width,1.4) rectangle (13*\width,1.6);
\draw [fill=orange!30!white] (12*\width,1.6) rectangle (13*\width,1.7);
\draw [fill=orange!30!white] (12*\width,1.7) rectangle (13*\width,2.1);
\draw [fill=orange!30!white] (12*\width,2.1) rectangle (13*\width,2.4);
\draw [dashed] (7.8*\width,3.5) -- (13.2*\width,3.5) node[right] {$T_{\min}$};
\draw (7.8*\width,7) -- (13.2*\width,7) node[right] {$2T_{\min}$};
\draw [<->] (8*\width,-0.5) -- (13*\width,-0.5) node[fill=white, pos=.5] {$m$};

\end{tikzpicture}
}
			\caption{Example for a next-fit schedule with $m = c = 5$}
			\label{fig1}
		\end{figure}
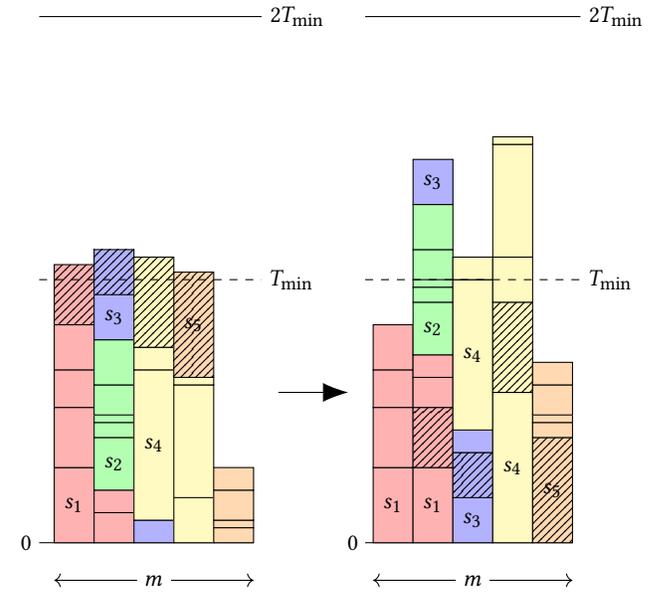
		The idea of the next step is to move the items (both jobs and setups) that cross the border $T_{\min}$ to the beginning of the next machine. For each moved item that was a job $j \in C_k$ we place an additional setup $s_k$ right before $j$. All other load is shifted up as much as required to place the moved items (see \Cref{fig1} on the right).
		In a last step one can remove unnecessary setups, i.e. setups which are scheduled last on a machine. In \Cref{fig1} this reduces the load of machine $2$ by the setup time $s_3$. Remark that the $T$-crossing items of the first step are hatched.
		
		\para{Analysis.}
		First consider the first step. The load placed is exactly \[\sum_{i=1}^c (s_i + P(C_i)) = \sum_{i=1}^c s_i + \sum_{j \in J} t_j = N = m\cdot\tfrac1m N \leq m \cdot T_{\min}\]
		and this states that the load of the last machine is at most $T_{\min}$. Apparently the makespan of the resulting schedule is at most $T_{\min} + \max(s_{\max}, t_{\max})$.
		Now turn to the second step and consider a machine $u < m$. Passing away the item that exceeds $T_{\min}$, the load of $u$ reduces to at most $T_{\min}$. Finally it increases to at most $T_{\min} + \max_{i \in [c]}(s_i + t_{\max}^{(i)}) \le T_{\min} + T_{\min} = 2T_{\min}$ since $u$ potentially receives an item $q$ from machine $u - 1$ as well as an initial setup if $q$ is a job. As mentioned above, the last machine $u = m$ already has a load of at most $T_{\min}$. So after the reassignment its load holds the bound of $2T_{\min}$ as well.
		Hence, in total the schedule has a makespan of at most $2T_{\min} \leq 2\OPT$ and apparently both steps do run in time $\oh(n)$ such that the described algorithm runs in time $\oh(n)$.
		
		Every solution to the non-preemptive case is a solution to the preemptive case and we obtained the same lower bounds for the preemptive case as for the non-preemptive one, i.e. $T_{\min} \le \OPT_{\op{pmtn}}(I)$. So the approximation can be used in the preemptive case as well.
	\end{proof}

	\section{Preemptive Scheduling}
	\label{preemptivescheduling}
	
	
	\subsection{The Soundness of Large Machines}
	
	The proof of \Cref{preemptive:decision} implicitly uses the fact that the use of our large machines is reasonable. To be convinced we prove the following lemma.
	\begin{lemma}\label{preemptive:large_machines}
		For every feasible schedule with makespan $T$ there is a feasible schedule $\sigma$ with a makespan of at most $\tfrac32T$ such that each class $i \in I_{\exp}^0$ is placed on exactly one machine $u_i$ which holds $s_i + P(C_i) \leq \load_\sigma(u_i) \leq T$.
	\end{lemma}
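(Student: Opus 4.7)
The plan is to start from an arbitrary feasible $T$-schedule $\sigma_0$ and perform a local ``swap'' on each expensive-$0$ class in order to consolidate it onto a single machine, without blowing past makespan $\tfrac32 T$. The key structural fact I will use is that for $i \in I_{\exp}^0$ we have $s_i > \tfrac12 T$ and $P(C_i) < T - s_i < \tfrac12 T$, so $\alpha_i = 1$; moreover, by Lemma~\ref{lemma:machine_number_expensive_classes_multiplicities}, the sets $M_i \subseteq [m]$ of machines carrying class-$i$ load in $\sigma_0$ are pairwise disjoint as $i$ ranges over $I_{\exp}$. This disjointness is crucial: it means the swaps I will perform for different classes act on disjoint machine sets and therefore do not interfere.

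For each $i \in I_{\exp}^0$, let $\lambda_i := \lambda_i^{\sigma_0} = |M_i| \geq 1$. I will split into two cases. If $\lambda_i = 1$, then class $i$ already lives on a single machine $u_i \in M_i$, and $\sigma_0$ itself satisfies $s_i + P(C_i) \leq \load_{\sigma_0}(u_i) \leq T$, so no modification is needed. If $\lambda_i \geq 2$, pick any two machines $u_i, v_i \in M_i$ and let $p_i^{u_i}, p_i^{v_i}$ be the amount of class-$i$ job load currently on $u_i$ and $v_i$ respectively. Write $X_i := \load_{\sigma_0}(u_i) - s_i - p_i^{u_i} \leq T - s_i - p_i^{u_i}$ for the non-class-$i$ content on $u_i$. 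I will then: (a)~strip all class-$i$ setups and job pieces from every machine in $M_i$ and reinstall class $i$ entirely on $u_i$ as one setup followed by all of $C_i$; (b)~move the displaced content $X_i$ from $u_i$ over to $v_i$.

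After step (a) the load on $u_i$ is exactly $s_i + P(C_i) \leq T$, satisfying the lemma. After step (b), using $s_i > \tfrac12 T$ on both terms,
\[
\load_\sigma(v_i) \;\leq\; \bigl(\load_{\sigma_0}(v_i) - s_i - p_i^{v_i}\bigr) + X_i \;\leq\; 2T - 2s_i - (p_i^{u_i} + p_i^{v_i}) \;<\; T.
\]
Any remaining machine $w \in M_i \setminus \{u_i, v_i\}$ loses class-$i$ content, so $\load_\sigma(w) \leq \load_{\sigma_0}(w) \leq T$, and machines outside $\bigcup_{i \in I_{\exp}^0} M_i$ are untouched. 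Hence the global makespan remains $\leq T \leq \tfrac32 T$, as required.

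The only subtlety is preemption feasibility of the moved content. The displaced mass on $u_i$ consists exclusively of cheap-class setups and (pieces of) cheap-class jobs, since by Lemma~\ref{lemma:machine_number_expensive_classes_multiplicities} no other expensive class can share $u_i$ with $i$. When we transplant these onto $v_i$, a cheap-class setup already present on $v_i$ can be merged (only decreasing load), and job pieces of the same cheap job that previously appeared on both $u_i$ and some third machine can be reordered within $v_i$'s available time window to avoid parallel overlap. This is the step I expect to be the most delicate to argue rigorously, but because $v_i$'s new load is strictly below $T$, there is ample slack to serialize any cheap job whose pieces now share $v_i$ with pieces on other machines; the factor $\tfrac32$ in the overall makespan provides more than enough room throughout.
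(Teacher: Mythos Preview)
Your approach is genuinely different from the paper's. You attempt a \emph{local} repair: for each $i\in I_{\exp}^0$ with more than one host machine, consolidate class~$i$ onto $u_i$ and dump the displaced cheap content $X_i$ onto a sibling machine $v_i$, leaving everything else untouched. The paper instead performs a \emph{global} reconstruction: it keeps (after the shift of \Cref{preemptive:large_machines_modification}) the machines hosting $I_{\exp}^{0,1}$ classes, dedicates one fresh machine to each $I_{\exp}^{0,2}$ class, and then \emph{reschedules from scratch} all of $I_{\exp}^+\cup I_{\exp}^-$ together with the residual cheap load on the remaining $m-|I_{\exp}^0|$ machines by invoking \Cref{preemptive:simple} for the resulting nice instance. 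The load accounting (each $i\in I_{\exp}^{0,2}$ formerly consumed $\geq 2s_i>T$ of setup, so dedicating one machine to it and freeing the rest leaves enough room) is what makes \Cref{preemptive:simple}\ref{preemptive:simple:decision:T_check_true} applicable. Because the residual load is rescheduled by a known-correct algorithm, the paper never has to reason about preserving non-parallelism of moved pieces against their siblings.

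Your local route has a real gap exactly where you flag it. The inequality $\load_\sigma(v_i)<T$ is a statement about \emph{load}, not about \emph{timing}: pieces in $X_i$ may belong to cheap jobs whose other pieces sit on machines you do not touch, and those sibling pieces are pinned to their original intervals inside $[0,T]$. You must place the transplanted pieces on $v_i$ in intervals avoiding all such pins and avoiding the retained content $X_{v_i}$. Worse, when several classes $i,i'\in I_{\exp}^0$ are swapped simultaneously, a single cheap job may contribute pieces to both $X_{u_i}$ and $X_{u_{i'}}$; both are transplanted, and their new positions on $v_i$ and $v_{i'}$ must be coordinated against each other \emph{and} against untouched siblings. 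The sentence ``there is ample slack\dots the factor $\tfrac32$ provides more than enough room'' is not a proof of this; it is precisely the hard part. Without it you have only a load bound, not a feasible preemptive schedule, so the claim ``the global makespan remains $\leq T$'' is unjustified. (A minor side point: $\lambda_i^{\sigma_0}=|M_i|$ need not hold, since one machine can carry two setups of the same class; the case split should be on $|M_i|$.)
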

	
	\noindent We start with a more simple property.
	\begin{lemma}\label{preemptive:large_machines_modification}
		Let $\sigma$ be a feasible schedule with makespan $T$. Then there is a feasible schedule $\sigma'$ with a makespan of at most $\tfrac32T$ that holds the following properties.
		\begin{enumerate}
			\item If a class $i \in I_{\exp}^0$ is scheduled on exactly one machine $u_i$ in $\sigma$ (i.e. $\lambda^{\sigma}_i = 1$) then it is scheduled on exactly one machine $u_i'$ in $\sigma'$ such that setup $s_i$ starts processing at time $\tfrac12T$ and there is no more load above $C_i$ while $\load'(u_i') = \load(u_i) \leq T$.
			\item On all other machines of $\sigma'$ no job (piece) starts processing before time $\tfrac12T$.
		\end{enumerate}
	\end{lemma}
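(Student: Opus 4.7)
I would construct $\sigma'$ from $\sigma$ by applying three constant time-shifts to disjoint portions of $\sigma$, chosen so that the required structural properties hold by construction and preemption reduces to a short zone-separation argument.

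First, for each class $i \in I_{\exp}^0$ with $\lambda_i^\sigma = 1$, let $u_i$ denote the unique machine carrying class $i$ in $\sigma$. Because the lone setup $s_i$ on $u_i$ is not preempted and $C_i$ can only be processed on $u_i$ (no further setup of class $i$ exists), the class-$i$ items form a single contiguous block on some interval $[a_i,\, a_i + s_i + P(C_i)]$ of length greater than $\tfrac34 T$ (using $i \in I_{\exp}^0$). Hence the remaining items on $u_i$---call this the \emph{other load} of $u_i$---have total duration strictly less than $\tfrac14 T$, split into a \emph{below-block} part in $[0, a_i)$ and an \emph{above-block} part in $(a_i + s_i + P(C_i),\, T]$, and in particular force $a_i < \tfrac14 T$. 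I call such $u_i$ \emph{designated}, and every other machine \emph{non-designated}.

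Next I would define $\sigma'$ by the following shift rules. On every non-designated machine, add $+\tfrac12 T$ to the time of every item. On each designated $u_i$: leave every below-block item at its original time; move the class-$i$ block rigidly to $[\tfrac12 T,\, \tfrac12 T + s_i + P(C_i)]$ (i.e.\ shift it by $\tfrac12 T - a_i$); and shift every above-block item by $-\tfrac12 T$. The key quantitative bounds I would record are: below-block images on $u_i$ lie in $[0, a_i) \subseteq [0, \tfrac14 T)$; above-block images on $u_i$ lie in $(a_i + s_i + P(C_i) - \tfrac12 T,\, \tfrac12 T] \subseteq (\tfrac14 T,\, \tfrac12 T]$ (using $s_i + P(C_i) > \tfrac34 T$); the block occupies $[\tfrac12 T,\, \tfrac12 T + s_i + P(C_i)] \subseteq [\tfrac12 T,\, \tfrac32 T)$; and any non-designated machine carries load inside $[\tfrac12 T,\, \tfrac32 T]$. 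These bounds are pairwise disjoint on $u_i$, which immediately gives a valid single-machine schedule on $u_i$, preserved load $\load'(u_i') = \load(u_i) \leq T$, setup $s_i$ starting at $\tfrac12 T$ with nothing above $C_i$, no job starting before $\tfrac12 T$ on a non-designated machine, and a total makespan at most $\tfrac32 T$. Thus properties (1) and (2) and the makespan bound hold by construction.

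What remains is preemption feasibility: for any job $j$ split across two machines, the two image intervals in $\sigma'$ must be disjoint. I would do this by case-splitting on the shifts applied to the two pieces. Two pieces subject to the same shift---both $+\tfrac12 T$, both $0$, or both $-\tfrac12 T$---inherit disjointness directly from $\sigma$. Any two pieces with different shifts land in the pairwise disjoint time zones $[0, \tfrac14 T)$, $(\tfrac14 T, \tfrac12 T]$, $[\tfrac12 T, \tfrac32 T]$ identified above, so they are trivially time-separated. Class-$i$ block pieces are not an issue since $\lambda_i^\sigma = 1$ pins them to $u_i$ alone.

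The main obstacle I expect---and the reason the construction is delicate---is the subcase where two pieces of a single job sit as \emph{above-block} other load on two \emph{different} designated machines $u_i$ and $u_{i'}$. A naive per-machine compression that packs other load into $[0, L_i)$ on each $u_i$ independently would use per-machine-dependent shifts and could destroy disjointness. Using the uniform $-\tfrac12 T$ shift for every above-block item on every designated machine, together with the bound $a_i + s_i + P(C_i) > \tfrac34 T$, is precisely what keeps the three zones $[0,\tfrac14 T)$, $(\tfrac14 T,\tfrac12 T]$, $[\tfrac12 T,\tfrac32 T]$ clean and makes the preemption check go through.
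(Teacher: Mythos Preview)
Your construction is exactly the paper's: keep $A_i$ in place, shift $B_i$ down by $\tfrac12T$, move the class-$i$ block to start at $\tfrac12T$, and shift every non-designated machine up by $\tfrac12T$. Your preemption-feasibility check via the three disjoint time zones $[0,\tfrac14T)$, $(\tfrac14T,\tfrac12T]$, $[\tfrac12T,\tfrac32T]$ is in fact more explicit than the paper's, which only asserts that items in $A_i \cup A_{i'} \cup B_i \cup B_{i'}$ are ``scheduled in time just as before'' without separately handling the mixed $A$-vs-$B$ case or the designated-vs-non-designated case.
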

	\begin{proof}
		Let $\sigma$ be a feasible schedule with makespan $T$. Let $\load(\sigma) = \sum_{i=1}^c (\lambda^{\sigma}_i s_i + P(C_i))$. We do a simple machine modification. Consider a machine $u_i$ in $\sigma$ that schedules a class $i \in I_{\exp}^0$ holding $\lambda^{\sigma}_i = 1$. We reorder machine $u_i$ as follows.
		
		\begin{figure}[h]
			\centering
			\begin{tikzpicture}
  \usetikzlibrary{patterns,arrows.meta}

\def\diff{3}
\def\width{1.0}

\draw (-1,0) node [left] {$0$} -- (6, 0);
\draw (-1,0.8) -- (6, 0.8) [dotted] node [right] {$\tfrac14T$};
\draw (-1,1.6) -- (6, 1.6) [dotted] node [right] {$\tfrac12T$};
\draw (-1,2.4) -- (6, 2.4) [dotted] node [right] {$\tfrac34T$};
\draw (-1,3.2) -- (6, 3.2) [dash pattern={on 1.5pt off 0.8pt}] node [right] {$T$};
\draw (-1,4) -- (6, 4) [dotted] node [right] {$\tfrac54T$};
\draw (-1,4.8) -- (6, 4.8) node [right] {$\tfrac32T$};

\DrawMachines{
{
    {3.2/32}/{$A_i$}/,
	{17/32}/{$s_i$}/[fill=gray],
	{8/32}/{$C_i$}/,
    {3.8/32}/{$B_i$}/
}/\width/[fill=lightgray],
{}/\diff/,
{
    {3.2/32}/{$A_i$}/,
    {9/32}//[opacity=0],
    {3.8/32}/{$B_i$}/,
	{17/32}/{$s_i$}/[fill=gray],
	{8/32}/{$C_i$}/
}/\width/[fill=lightgray]
}

\fill[white] (2,-0.2) rectangle (3,5);

\node (mu) at (0.5,-0.4*\width) {$u_i$};

\draw[-{Latex[scale=2.0]}] (2,2) -- (3,2);

\end{tikzpicture}

			\caption{Modification of a large machine $u_i$}
			\label{modify_large_machine}
		\end{figure}
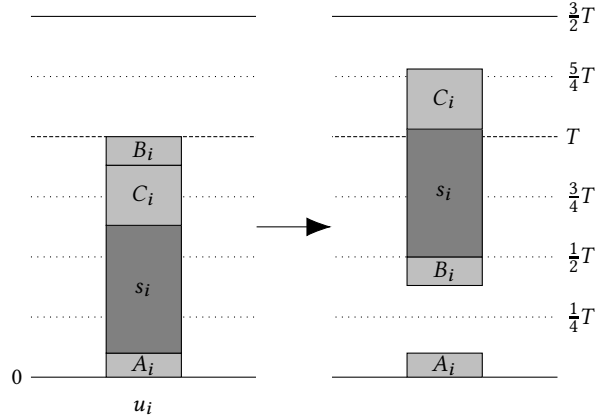
		
		We refer to the load below setup time $s_i$ as set $A_i$ and to the load above the last job of $C_i$ as set $B_i$. So $A_i$ and $B_i$ hold setup times, jobs and job pieces of classes $i' \neq i$.
		Now move up the setup time $s_i$ as well as all jobs of $C_i$ such that $s_i$ starts at time $\tfrac12T$ and the jobs of $C_i$ are scheduled consecutively right behind it. Also we move down each item of $B_i$ by exactly $\tfrac12T$ such that $A_i$ and $B_i$ are scheduled until time $\tfrac12T$. Since $s_i + P(C_i) > \tfrac34T$, we get $\load(A_i) + \load(B_i) < \tfrac14T$ and thus it follows $\load(A_i) < \tfrac14T$ as well as $\load(B_i) < \tfrac14T$. Hence the items of $A_i$ and $B_i$ do not intersect in time after this modification. Also notice that for different classes $i \neq i'$ there is no forbidden parallelization for preempted jobs of $A_i \cup A_{i'} \cup B_i \cup B_{i'}$ because relatively to each other they are scheduled in time just as before.	On all other machines of $\sigma$ we move up every item by exactly $\tfrac12T$ such that nothing is scheduled before line $\tfrac12T$. Apparently we get a feasible schedule with a makespan of at most $\tfrac32T$. 
	\end{proof}
	
	\begin{proof}[Proof of \Cref{preemptive:large_machines}]
		Let $\sigma$ be a feasible schedule with makespan $T$ and $\load(\sigma) = \sum_{i=1}^c (\lambda^{\sigma}_i s_i + P(C_i))$. First let us add some notation. We set $I_{\exp}^{0,1}(\sigma) \subseteq I_{\exp}^0$ as the set of classes $i \in I_{\exp}^0$ with $\lambda^{\sigma}_i = 1$ whereas $I_{\exp}^{0,2}(\sigma) = I_{\exp}^0 \setminus I_{\exp}^{0,1}(\sigma)$ denotes the set of classes $i \in I_{\exp}^0$ holding $\lambda^{\sigma}_i \geq 2$ such that $I_{\exp}^0 = I_{\exp}^{0,1}(\sigma) \cupdot I_{\exp}^{0,2}(\sigma)$.
		So $I_{\exp}^{0,1}(\sigma)$ is the set of classes already placed like intended. Nevertheless we need to modify their placement according to the feasibility of other classes.
		So for all $i \in I_{\exp}^{0,1}(\sigma)$ there is exactly one machine $u_i$ that schedules all jobs of $C_i$ in schedule $\sigma$. We modify them using \Cref{preemptive:large_machines_modification}. Apparently these machines may already schedule jobs or job pieces of $I_{\chp}$. To identify the residual load of $I_{\chp}$ we do the following. Using the notation of \Cref{preemptive:large_machines_modification}, let $t_j^{(1)}$ be the sum of the processing times of all job pieces in $\bigcup_{i \in I_{\exp}^{0,1}(\sigma)} X_i$ of a job $j \in J(I_{\chp})$ where $X_i = A_i \cup B_i$ for all $i \in I_{\exp}^{0,1}(\sigma)$. Remember that $X_i$ does not contain jobs or job pieces of class $i$. We create a new job piece $j^{(2)}$ for all jobs $j \in J(I_{\chp})$ with processing time $t_j^{(2)} := t_j - t_j^{(1)}$. So the residual jobs and job pieces of $I_{\chp}$ are $C_i' := \set{j^{(2)}| j \in C_i, t_j^{(2)} > 0}$ for all $i \in I_{\chp}$. Let $I_{\chp}' = \set{i \in I_{\chp}| 1 \leq |C_i'|}$ and since $\sigma$ is feasible, we obtain
		\begin{equation}\label{eq:appendixA:residual_machines_1}
		\begin{split}
		(m - |I_{\exp}^{0,1}(\sigma)|)T &\geq \load(\sigma) - \!\!\!\sum_{i I_{\exp}^{0,1}(\sigma)}\!\!\! \load(u_i)\\
		&= \!\!\!\!\!\!\!\!\!\sum_{i \in (I_{\exp} \setminus I_{\exp}^{0,1}(\sigma))} \!\!\!\!\!\!\!\!\! (\lambda_i s_i + P(C_i)) + \!\!\sum_{i \in I_{\chp}'} \! (s_i + P(C_i')).
		\end{split}
		\end{equation}	
		Each one of the $\lambda_i \geq 2$ machines used to schedule the jobs of a class $i \in I_{\exp}^{0,2}(\sigma)$ in $\sigma$ has a total load of different classes $i' \neq i$ of at most $\tfrac12T$ since $s_i > \tfrac12T$. We aim to schedule them on a single machine such that $\lambda_i' = 1$ if $\lambda_r'$ is the number of setup times used to schedule class $r$ in schedule $\sigma'$.	In fact, we can do this without scheduling load of different classes on the selected (single) machines. We extend the schedule as follows.
		Each class $i \in I_{\exp}^{0,2}$ is placed on a single machine with an initial setup time $s_i$ followed by $C_i$ with no load of other classes underneath or above.
		At first glance this seems rather wasteful because in schedule $\sigma$ there may be other load on machines scheduling $I_{\chp}^{0,2}(\sigma)$ in general. With a closer look we can convince us that its reasonable though. The idea is the following. A class $i \in I_{\exp}^{0,2}(\sigma)$ was placed in schedule $\sigma$ with a load of $L_i = \lambda^{\sigma}_i s_i + P(C_i) \geq 2s_i + P(C_i)$ on $\lambda^{\sigma}_i \geq 2$ machines whereas we sum up to a load of $L_i' = s_i + P(C_i)$ on only one machine in schedule $\sigma'$ now. Hence, we get at least one $s_i > \tfrac12T$ of processing time on a \emph{different} and so far unused machine since there cannot be two setup times of expensive classes on one machine. So we waste a time of $T - (s_i + P(C_i)) < \tfrac14T$ to schedule class $i$ while gaining at least $s_i > \tfrac12T$ of processing time.
		
		\begin{figure}[h]
			\centering
			\scalebox{0.93}{\begin{tikzpicture}
  \usetikzlibrary{patterns}

\def\width{0.3}

\draw [-, decorate, decoration={brace,amplitude=8pt}] (0,6) -- node[above=7pt] {$I_{\exp}^0$} (10*\width,6);
\draw [-, decorate, decoration={brace,amplitude=8pt}] (0,5) -- node[above=7pt] {$I_{\exp}^{0,1}(\sigma)$} (5*\width,5);
\draw [-, decorate, decoration={brace,amplitude=8pt}] (5*\width,5) -- node[above=7pt] {$I_{\exp}^{0,2}(\sigma)$} (10*\width,5);
\draw [-, decorate, decoration={brace,amplitude=8pt}] (10*\width,5) -- node[above=7pt] {$I_{\exp}^+$} (17*\width,5);
\draw [-, decorate, decoration={brace,amplitude=8pt}] (17*\width,5.4) -- node[above=7pt] {$I_{\exp}^-$} (21*\width,5.4);
\draw [-, decorate, decoration={brace,amplitude=8pt}] (20*\width,5) -- node[above=7pt] {$I_{\chp}'$} (26*\width,5);

\DrawMachines{
{
    {3.2/32}//,
    {9/32}//[opacity=0],
    {3.8/32}//,
	{17/32}//[fill=gray],
	{8/32}//
}/\width/[fill=lightgray],
{
    {4/32}//,
    {12/32}//[opacity=0],
    {0/32}//,
	{23/32}//[fill=gray],
	{5/32}//
}/\width/[fill=lightgray],
{
    {0/32}//,
    {10/32}//[opacity=0],
    {6/32}//,
	{25/32}//[fill=gray],
	{1/32}//
}/\width/[fill=lightgray],
{
    {0/32}//,
    {16/32}//[opacity=0],
    {0/32}//,
	{20/32}//[fill=gray],
	{10/32}//
}/\width/[fill=lightgray],
{
    {2/32}//,
    {13/32}//[opacity=0],
    {1/32}//,
	{18/32}//[fill=gray],
	{8/32}//
}/\width/[fill=lightgray],
{
	{17/32}//[fill=gray],
	{10/32}//
}/\width/[fill=lightgray],
{
	{19/32}//[fill=gray],
	{7/32}//
}/\width/[fill=lightgray],
{
	{25/32}//[fill=gray],
	{5/32}//
}/\width/[fill=lightgray],
{
	{18/32}//[fill=gray],
	{10/32}//
}/\width/[fill=lightgray],
{
	{21/32}//[fill=gray],
	{4/32}//
}/\width/[fill=lightgray],
{
	{6/11}/$s_1$/[fill=gray],
	{5/11}//
}/\width/[fill=lightgray],
{
	{6/11}/$s_1$/[fill=gray],
	{5/11}//
}/\width/[fill=lightgray],
{
	{6/11}/$s_1$/[fill=gray],
	{9/11}//
}/\width/[fill=lightgray],
{
	{7/12}/$s_2$/[fill=gray],
	{5/12}//
}/\width/[fill=lightgray],
{
	{7/12}/$s_2$/[fill=gray],
	{5/12}//
}/\width/[fill=lightgray],
{
	{7/12}/$s_2$/[fill=gray],
	{5/12}//
}/\width/[fill=lightgray],
{
	{7/12}/$s_2$/[fill=gray],
	{6.3/12}//
}/\width/[fill=lightgray],
{
	{7.2/12}//[fill=gray],
	{1.3/12}//,
    {0.5/12}//[opacity=0],
	{6.5/12}//[fill=gray],
	{2./12}//
}/\width/[fill=lightgray],
{
	{9/16}//[fill=gray],
	{2.5/16}//,
    {0.5/16}//[opacity=0],
	{10/16}//[fill=gray],
	{1/16}//
}/\width/[fill=lightgray],
{
	{10/16}//[fill=gray],
	{1/16}//,
    {1/16}//[opacity=0],
	{9/16}//[fill=gray],
	{2.6/16}//
}/\width/[fill=lightgray],
{
	{9.7/16}//[fill=gray],
	{2/16}//,
	{4.3/16}//[opacity=0],
	{3.3/16}//[fill=gray],
	{2.1/16}//,
	{1.4/16}//[fill=gray],
	{1.2/16}//
}/\width/[fill=lightgray],
{
	{6.6/16}//[opacity=0],
	{1.4/16}//[fill=gray],
	{3.36/8}//,
	{2.17/8}//[fill=gray],
	{2.47/8}//
}/\width/[fill=lightgray],
{
	{3.66/16}//[opacity=0],
	{2.17/8}//[fill=gray],
	{0.13}//,
	{0.42}//[fill=gray],
	{0.45}//
}/\width/[fill=lightgray],
{
	{0.08}//[opacity=0],
	{0.42}//[fill=gray],
	{0.31}//,
	{0.18}//[fill=gray],
	{0.51}//
}/\width/[fill=lightgray],
{
	{0.32}//[opacity=0],
	{0.18}//[fill=gray],
	{0.05}//,
	{0.14}//[fill=gray],
	{0.41}//,
	{0.31}//[fill=gray],
	{0.09}//
}/\width/[fill=lightgray],
{
	{0.19}//[opacity=0],
	{0.31}//[fill=gray],
	{1.22/8}//,
	{2.14/8}//[fill=gray],
	{2.17/8}//
}/\width/[fill=lightgray]
}

\draw [-, decorate, decoration={brace,amplitude=7pt}] (5*\width,-0.1) -- node[below=4pt] {$I_{\chp}$} (0,-0.1);

\draw (-0.2,0) node [left] {$0$} -- (26*\width+0.2, 0);
\draw (-0.2,0.8) -- (26*\width+0.2, 0.8) [dotted] node [right] {$\tfrac14T$};
\draw (-0.2,1.6) -- (26*\width+0.2, 1.6) [dotted] node [right] {$\tfrac12T$};
\draw (-0.2,2.4) -- (26*\width+0.2, 2.4) [dotted] node [right] {$\tfrac34T$};
\draw (-0.2,3.2) -- (26*\width+0.2, 3.2) [dash pattern={on 1.5pt off 0.8pt}] node [right] {$T$};
\draw (-0.2,4) -- (26*\width+0.2, 4) [dotted] node [right] {$\tfrac54T$};
\draw (-0.2,4.8) -- (26*\width+0.2, 4.8) node [right] {$\tfrac32T$};

\draw [<->] (0,-0.8) -- (10*\width-0.01,-0.8) node[fill=white, pos=.5] {$|I_{\exp}^0|$};
\draw [<->] (10*\width+0.01,-0.8) -- (26*\width,-0.8) node[fill=white, pos=.5] {$m^\#$};
\draw [<->] (0,-1.2) -- (26*\width,-1.2) node[fill=white, pos=.5] {$m$};

\node (mu) at (20.5*\width,-0.5*\width) {$\mu$};

\draw (10*\width,0) -- (10*\width,4.8);
\draw (26*\width,0) -- (26*\width,4.8);

\draw [dashed,white] (10*\width,0) -- (10*\width,4.8) -- (26*\width,4.8) -- (26*\width,0) -- (10*\width,0);

\draw (18*\width,2.4) node {};

\end{tikzpicture}
}
			\caption{An example solution after using \Cref{preemptive:large_machines} with $I_{\exp}^+ = \set{1,2}$}
			\label{fig:preemptive:pumping_lemma}
		\end{figure}
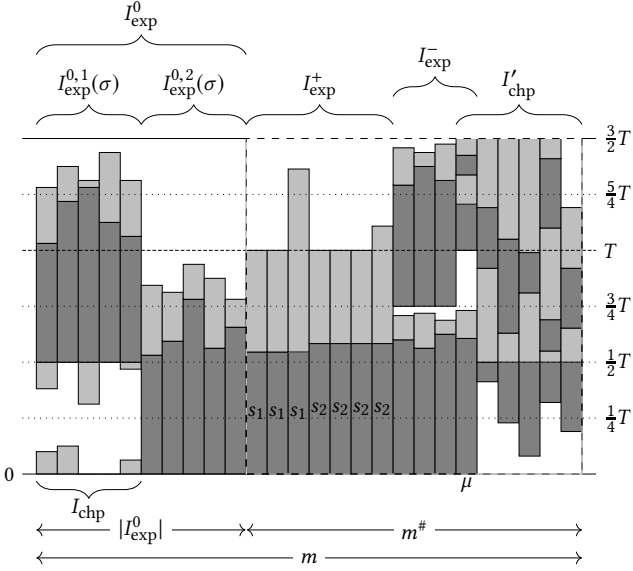
		
		To look at this issue in more detail we call $R$ the processing time of the residual load in $\sigma$ and we find
		$
		R = \sum_{i \in I_{\exp}^+ \cup I_{\exp}^-} (\lambda^{\sigma}_i s_i + P(C_i)) + \sum_{i \in I_{\chp}'} (s_i + P(C_i')).
		$
		Applying $\lambda^{\sigma}_i \geq 2$ for all $i \in I_{\exp}^{0,2}(\sigma)$ we can use
		\[
		\sum_{i \in I_{\exp}^{0,2}(\sigma)} \!\!\!\!\! (\lambda^{\sigma}_i s_i + P(C_i))
		\geq \!\!\!\!\!\!\sum_{i \in I_{\exp}^{0,2}(\sigma)} \!\!\!\!\! (2s_i + P(C_i))
		\geq \!\!\!\!\!\! \sum_{i \in I_{\exp}^{0,2}(\sigma)} \!\!\!\!\! (T + P(C_i))
		\geq |I_{\exp}^{0,2}(\sigma)|T
		\]
		and $I_{\exp} \setminus I_{\exp}^{0,1}(\sigma) = I_{\exp}^{0,2}(\sigma) \cupdot I_{\exp}^+ \cupdot I_{\exp}^-$ to see that
		\begin{align*}
			&(m - |I_{\exp}^{0,1}(\sigma)|)T\\ &\geq \!\sum_{i \in (I_{\exp} \setminus I_{\exp}^{0,1}(\sigma))} \!\!\!\!\!\! (\lambda^{\sigma}_i s_i + P(C_i)) + \sum_{i \in I_{\chp}'} (s_i + P(C_i')) \qquad\quad \quad\,\,\,\, \text{// \eqref{eq:appendixA:residual_machines_1}}\\
			&= \!\!\! \sum_{i \in I_{\exp}^{0,2}(\sigma)} \!\!\! (\lambda^{\sigma}_i s_i + P(C_i)) + \!\!\!\!\!\! \sum_{i \in I_{\exp}^+ \cup I_{\exp}^-} \!\!\!\!\!\! (\lambda^{\sigma}_i s_i + P(C_i)) + \sum_{i \in I_{\chp}'} (s_i + P(C_i'))\\
			&\geq |I_{\exp}^{0,2}(\sigma)|T + R
		\end{align*}
		and thus it follows
		$
		(m - |I_{\exp}^0|)T = (m - |I_{\exp}^{0,1}(\sigma)|) T - |I_{\exp}^{0,2}(\sigma)|T\geq R
		$.
		Hence, the residual $m-|I_{\exp}^0|$ machines provide a processing time of at least $R$. So we can build a residual instance $I^\#$ for the residual $m^\# := m - |I_{\exp}^0|$ machines to place $J^\# := J(I_{\exp}^+ \cup I_{\exp}^-) \cup \bigcup_{i \in I_{\chp}'} C_i'$ with $C_i^\# := C_i$ for all expensive classes $i \in I_{\exp}^+ \cup I_{\exp}^-$ as well as $C_i^\# := C_i'$ for all cheap classes $i \in I_{\chp}'$. Apparently $I^\#$ is a nice instance and it actually holds the requirements of \Cref{preemptive:simple} \ref{preemptive:simple:decision:T_check_true}. In more detail we obtain that
		\begin{align*}
			m^\#\cdot T &= (m - |I_{\exp}^0|)T\\
			&\ge R
			= \sum_{i \in (I_{\exp}^+ \cup I_{\exp}^-)} (\lambda^{\sigma}_i s_i + P(C_i)) + \sum_{i \in I_{\chp}'} (s_i + P(C_i'))\\
			&\quad\!\quad\ge P(J^\#) + \sum_{i \in I_{\exp}^+} \! \alpha_i' s_i + \!\!\!\!\!\! \sum_{i \in I_{\exp}^- \cup I_{\chp}'} \!\!\!\!\!\!\! s_i \quad\quad\, \text{// \Cref{lemma:a_i-lower-bound}}, \alpha_i \ge \alpha_i'
		\end{align*}
		and with \Cref{lemma:machine_number_expensive_classes_multiplicities,lemma:a_i-lower-bound} and $\alpha_i \ge \alpha_i', \alpha_i \ge 1$ we get that
		\begin{align*}
		m^\# = m - |I_{\exp}^0|
		&\ge \sum_{i \in I_{\exp}} \alpha_i - |I_{\exp}^0|\\
		&\ge \sum_{i \in I_{\exp}^+} \alpha_i' + |I_{\exp}^-|
		\geq \sum_{i \in I_{\exp}^+} \alpha_i' + \ceil*{\tfrac12|I_{\exp}^-|}.
		\end{align*}
		So \Cref{preemptive:simple} leads us to use \Cref{preemptive:algorithm:simple} to complete our schedule feasibly with a makespan of at most $\tfrac32T$. \Cref{fig:preemptive:pumping_lemma} illustrates the use of \Cref{preemptive:algorithm:simple} with dashed lines around the area of the $m^\#$ last machines.
	\end{proof}

	
	\section{Splittable Scheduling}
	\label{section:splittable}
	
	For this section let $T_{\min} := \max\set{\tfrac1m N, s_{\max}}$ where $N = \sum_{i=1}^c s_i + \sum_{j \in J} t_j$.
	Let $I$ be an instance and let $T \geq T_{\min}$ be a makespan.
	We describe the algorithm in two steps.

	\para{Step 1.} First we place all jobs of expensive classes. We define a wrap template $\omega^{(i)}$ of length $|\omega^{(i)}| = \ceil{2P(C_i)/T} = \beta_i$ for each class $i \in I_{\exp}$ as follows. Let $\omega^{(i)}_1 = (u_i,0,s_i+\tfrac12T)$ and $\omega^{(i)}_{1+r} = (u_i+r,s_i,s_i+\tfrac12T)$ for $1 \leq r < \beta_i$. Here the first machines $u_i$ have to be chosen distinct to all machines of the other wrap templates. We convert the expensive classes $i \in I_{\exp}$ into simple wrap sequences $Q^{(i)} = [s_i,C_i]$ consisting of an initial setup $s_i$ followed by an arbitrary order of all jobs in $C_i$. For all $i \in I_{\exp}$ we use $\textsc{Wrap}(Q^{(i)},\omega^{(i)})$ to wrap $Q^{(i)}$ into $\omega^{(i)} = (\omega_1^{(i)},\dots,\omega_{\beta_i}^{(i)})$. Remark that $\textsc{Wrap}$ places a setup $s_i$ at time $0$ on each machine $u_i+l$ where $1 \leq l < \beta_i$.
	See \Cref{fig:alg_splittable_exp} for an example.

	\para{Step 2.}
	The next and last step is to place the jobs of cheap classes. Let $\bar{u}_i$ be the \emph{last} machine used to wrap a sequence $Q^{(i)}$ in the previous step, i.e. $\bar{u}_i = u_i + \beta_i - 1$ for all $i \in I_{\exp}$. The idea is to use the free time left on the machines $\bar{u}_i$ while reserving a time of exactly $\tfrac12T$ for a cheap setup. Once these machines are filled, we turn to unused machines. In more detail we define one wrap template $\omega$ and one wrap sequence $Q$ to place all jobs $J(I_{\chp})$ as follows.
	Let $m_{\exp} = \sum_{i \in I_{\exp}} \beta_i$ be the number of machines used in step (1) and let $i_1,\dots,i_p \in I_{\exp}$ be all $p$ classes $i \in I_{\exp}$ that hold $\load(\bar{u}_i) < T$. We define $\omega_l = (\bar{u}_{i_l},\load(\bar{u}_{i_l}) + \tfrac12T,\tfrac32T)$ for all $1 \leq l \leq p$. To fill the residual (and empty) $k = m - m_{\exp}$ machines $r_1, \dots, r_k \in [m]$ we set $\omega_{p+l} = (r_l,\tfrac12T,\tfrac32T)$ for all $1 \leq l \leq k$.	
	The wrap sequence $Q = [s_i,C_i]_{i \in I_{\chp}}$ simply consists of all jobs of $J(I_{\chp}) = \bigcup_{i \in I_{\chp}} C_i$ with an initial setup $s_i$ before all the jobs of $C_i$ for a cheap class $i \in I_{\chp}$. As predicted, we wrap $Q$ into $\omega = (\omega_1,\dots,\omega_p,\omega_{p+1},\dots,\omega_{p+m-m_{\exp}})$ using $\textsc{Wrap}(Q, \omega)$.
	See \Cref{fig:alg_splittable_exp_chp} for an example.
	
	\subsection{Analysis}
	\label{section:splittable:analysis}
	
	We want to show the following lemma.
	
	\restatelemmasplittablealgorithm
	\begin{proof}
		\ref{lemma:splittable:algorithm:T_check_false}.
		We show that $T \geq \OPT_{\op{split}}$ implies $mT \geq L_{\op{split}}$ and $m \geq m_{\exp}$. Let $T \geq \OPT_{\op{split}}$. Then there is a feasible schedule $\sigma$ with makespan $T$ and $\load(\sigma) = \sum_{i=1}^c (\lambda^{(\sigma)}_i s_i +P(C_i))$. Due to \Cref{lemma:a_i-lower-bound}, we have
		\begin{align*}
		mT \geq \load(\sigma) &= \sum_{i=1}^c (\lambda_i^{\sigma} s_i + P(C_i))\\
		&\geq P(J) + \sum_{i=1}^c \alpha_i s_i \geq P(J) + \!\!\sum_{i \in I_{\chp}}\!\! s_i + \!\!\sum_{i \in I_{\exp}}\!\! \beta_i s_i = L_{\op{split}}.
		\end{align*}
		Also $m \geq m_{\exp}$ is a direct consequence of \Cref{lemma:a_i-lower-bound,lemma:machine_number_expensive_classes_multiplicities}.
		
		\ref{lemma:splittable:algorithm:T_check_true}.
		Let $mT \geq L_{\op{split}}$ and $m \geq m_{\exp}$. Note that the number of machines used in step (1) is $\sum_{i \in I_{\exp}} \beta_i = m_{\exp} \leq m$ and hence we have enough machines but we have to check for all $i \in I_{\exp}$ that the wrap template $\omega^{(i)}$ in step (1) is suitable to wrap $Q^{(i)} = [s_i,C_i]$ into it, i.e. $S(\omega^{(i)}) \geq \load(Q^{(i)})$. This is true since
		\[
		S(\omega^{(i)})
		= s_i + \beta_i \cdot \tfrac12T
		= s_i + \ceil*{\frac{P(C_i)}{\tfrac12T}} \cdot \tfrac12T
		\geq s_i + P(C_i)
		= \load(Q^{(i)}).
		\]
		Each wrap template $\omega^{(i)}$ is filled with exactly one class $i$ and reserves a time $s_i$ below each gap. So there is enough space to place a setup $s_i$ below all gaps of $\omega^{(i)}$.
		It remains to show that the wrap template $\omega$ in step (2) is suitable to wrap $Q$ into it. This needs a bit more effort. Apparently for all $i \in I_{\exp}$ the load $\load(\bar{u}_i)$ of the last machine $\bar{u}_i = u_i + \beta_i - 1$ holds
		\[
		\beta_i s_i + P(C_i) = (\beta_i-1)(s_i+\tfrac12T) + \load(\bar{u}_i) \geq (\beta_i - 1)T + \load(\bar{u}_i)
		\]
		since $s_i \geq \tfrac12T$. Hence, if the last machine is filled to at least $T$, i.e. $\load(\bar{u}_i) \geq T$, we obtain that
		$\beta_i s_i + P(C_i) \geq \beta_i T$
		and otherwise it follows that
		$\beta_i s_i + P(C_i) + T - \load(\bar{u}_i) \geq \beta_i T$.
		These two inequalities imply that 
		\begin{align*}
			L'
			&:= \sum_{\substack{i \in I_{\exp}\\\load(\bar{u}_i) < T}} (T-\load(\bar{u}_i)) + \sum_{i \in I_{\exp}} (\beta_i s_i + P(C_i))\\
			&= \sum_{\substack{i \in I_{\exp}\\\load(\bar{u}_i) < T}} (\beta_i s_i + P(C_i) + T-\load(\bar{u}_i)) + \sum_{\substack{i \in I_{\exp}\\\load(\bar{u}_i) \geq T}} (\beta_i s_i + P(C_i))\\
			&\geq \sum_{\substack{i \in I_{\exp}\\\load(\bar{u}_i) < T}} \beta_i T + \sum_{\substack{i \in I_{\exp}\\\load(\bar{u}_i) \geq T}} \beta_i T = m_{\exp}T
		\end{align*}
		and we use this inequality to show that $\omega$ is suitable to wrap $Q$ since
		\begin{align*}
			S(\omega)
			&= \sum_{\substack{i \in I_{\exp}\\\load(\bar{u}_i) < T}} (T-\load(\bar{u}_i)) + (m - m_{\exp})T\\
			&\geq \sum_{\substack{i \in I_{\exp}\\\load(\bar{u}_i) < T}} (T-\load(\bar{u}_i)) + L_{\op{split}} - m_{\exp}T && \text{// } mT \geq L_{\op{split}}\\
			&= L' + \sum_{i \in I_{\chp}} (s_i + P(C_i)) - m_{\exp}T \,\,\,\geq \,\,\,\load(Q). && \text{// } L' \geq m_{\exp}T
		\end{align*}
		One can easily confirm that the reserved processing time of $\tfrac12T$ below all used gaps is sufficient. In detail, we only place jobs of cheap classes in step (2). So the call of $\textsc{Wrap}(Q,\omega)$ needs a time of at most $\tfrac12T$ to place cheap setups below the gaps.
		Hence, the computed schedule is feasible and this proves \Cref{lemma:splittable:algorithm}.
		
		Nevertheless we still need to analyze the running time.
		Apparently the running time directly depends on the running time of $\textsc{WrapSplit}$ as follows. For step (1) we get a running time of $\oh(1) + \sum_{i \in I_{\exp}} \oh(|Q^{(i)}|+|\omega^{(i)}|) = \sum_{i \in I_{\exp}} \oh(|C_i|+\beta_i) = \oh(|J(I_{\exp})|+\sum_{i \in I_{\exp}}\beta_i) \leq \oh(n+m)$ due to \Cref{lemma:wrapsplit_running_time} since $|Q_i| = 1+|C_i|$ and $|\omega_i| = \beta_i$ for all $i \in I_{\exp}$.		
		To optimize the running time we find another implementation\footnote{A similar idea was already mentioned by Jansen et al. in \cite{DBLP:conf/innovations/JansenKMR19}} of $\textsc{Split}$ (and $\textsc{Wrap}$) for our use case. As mentioned before we allow that a schedule consists of machine configurations with given multiplicities.
		In fact, there is a more efficient implementation of $\textsc{Split}$ for ranges of wrap sequences where all gaps start and end at equal times, i.e. $a_{r_1} = a_{r_2}$ and $b_{r_1} = b_{r_2}$. Apparently $\textsc{Split}$ will place at most three different gap types (or gap configurations) for each job (piece) in such ranges of \emph{parallel} gaps. To see that let $0 \leq a < b$ describe the gaps and consider a job (piece) $j$ which we start to place at time $t \in [a,b)$. If $j$ is split by $\textsc{Split}$ at most once, we obviously have at most two used gaps; hence, we have at most two different gap configurations. If $j$ is split at least two times $j$ is split into a first piece with processing time $b-t$ followed by $\mu_j := \floor{(t_j-(b-t))/(b-a)}$ gaps filled with processing time $b-a$ and a last gap starting with a piece of time $t_j - (b-t) - \mu_j(b-a)$. These define at most three different gap configurations. Since the multiplicity $\mu_j$ of the in between gaps can be computed in constant time, we can compute these three gap configurations and its multiplicities in constant time.
		So we get a running time of $\oh(n+c) = \oh(n)$ for step (1).
		For step (2) we apply this technique only for the $m-m_{\exp}$ last gaps $\omega_{p+1},\dots,\omega_{p-m-m_{\exp}}$ which are parallel in our sense. Remark that $p \leq c$ to see that the running time is $\oh(c+|Q|) \leq \oh(n)$.
		Hence, we get a total running time of $\oh(n)$.
	\end{proof}

	\section{Non-Preemptive Scheduling}
	\label{nonpreemptivescheduling}
	
	Doing \emph{non-preemptive} scheduling we do \emph{not} allow jobs to be preempted. Even an optimal schedule needs to place at least one setup to schedule a job on a machine, so remember \Cref{non-preemptive_simple_lower_bound} which says $\OPT_{\op{nonp}} \geq \max_{i \in [c]} (s_i + t_{\max}^{(i)})$ where $t_{\max}^{(i)} = \max_{j \in C_i} t_j$.
	
	\noindent Analogous to preemptive scheduling we assume $m < n$. For this section let $T_{\min} := \max\set{\tfrac1m N, \max_{i\in[c]}(s_i+t_{\max}^{(i)})}$ where $N = \sum_{i=1}^c s_i + \sum_{j \in J} t_j$ and $t_{\max}^{(i)} = \max_{j \in C_i} t_j$.
	
	\restatenonpreemptiverunningtime
	
	Let $I$ be an instance and $T \geq T_{\min}$ be a makspan. For later purposes we split the jobs into big and small ones. In more detail, let $J_+ = \set{j \in J|t_j > \tfrac12T}$ and $J_- = \set{j \in J| t_j \leq \tfrac12T}$. In the following we will look at three subsets of $J$. They are $J_+$, $J(I_{\exp}) = \bigcup_{i \in I_{\exp}} C_i$ as well as $K := \bigcup_{i \in I_{\chp}}\set{j \in C_i \cap J_-| s_i + t_j > \tfrac12T}$ and one can easily see that they are in pairs disjoint. Let $L = J_+ \cupdot J(I_{\exp}) \cupdot K$.
	
	\begin{note}\label{non-preemptive:L-description}
		It is true that $L = \bigcup_{i \in [c]} \set{j \in C_i| s_i + t_j > \tfrac12T}$.\qed
	\end{note}
	
	\noindent We find the following minimum number of machines for each class.
	For all $i \in [c]$ let
	\[m_i = \begin{cases}
	\ceil*{\frac{P(C_i)}{T - s_i}} = \alpha_i & : i \in I_{\exp}\\
	|C_i \cap J_+| + \ceil*{\frac{P(C_i \cap K)}{T - s_i}} & : i \in I_{\chp}
	\end{cases}.\]

	\begin{note}
		\label{non-preemptive:differentmachinesforL}
		Different jobs in $L$ of different classes have to be scheduled on different machines. Furthermore, every job in $J_+ \subseteq L$ needs an own machine.
	\end{note}
	\begin{proof}
		Assume that two jobs $j_1, j_2 \in L$ of different classes $i_1, i_2 \in [c]$ are scheduled feasibly on one machine $u$, i.e. $\load(u) \leq T$. Due to \Cref{non-preemptive:L-description}, we have $s_{i_1} + t_{j_1} > \tfrac12T$ as well as $s_{i_2} + t_{j_2} > \tfrac12T$. To schedule $j_1$ and $j_2$ on $u$ it needs at least one setup time for both of them since $i_1 \neq i_2$. So we get a total load of $\load(u) \geq (s_{i_1} + t_{j_1}) + (s_{i_2} + t_{j_2}) > \tfrac12T + \tfrac12T = T$, a contradiction.
		Now assume that $j_1, j_2 \in J_+$ are jobs of a common class $i \in [c]$ and they are scheduled feasibly on one machine $u$, i.e. $\load(u) \leq T$. Since both jobs are in the same class $i$ we need only one setup time $s_i$, but the total load is $\load(u) \geq s_i + t_{j_1} + t_{j_2} > s_i + T > T$ since $t_{j_1} > \tfrac12T$ and $t_{j_2} > \tfrac12T$. Again, this is a contradiction.
	\end{proof}
	
	\begin{lemma}\label{non-preemptive:minimal_number_of_machines}
		Let $\sigma$ be a feasible schedule with makespan $T$. Then $\sigma$ needs at least $m_i$ different machines to schedule a class $i \in [c]$ and in total $\sigma$ needs at least $\sum_{i=1}^c m_i$ different machines.\qed
	\end{lemma}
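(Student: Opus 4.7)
The plan is to handle expensive and cheap classes separately for the per-class bound, and then combine them using the disjointness provided by Note~\ref{non-preemptive:differentmachinesforL}.

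For an expensive class $i \in I_{\exp}$, the bound is essentially immediate from Lemma~\ref{lemma:a_i-lower-bound}: in any feasible $T$-schedule we have $\lambda_i^{\sigma} \geq \alpha_i = m_i$, and since $s_i > \tfrac12 T$ no machine can host two setups of class $i$, so the $\lambda_i^{\sigma}$ setups sit on pairwise distinct machines. Hence class $i$ occupies at least $m_i$ different machines.

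For a cheap class $i \in I_{\chp}$, I would decompose the machines that carry jobs of $C_i$ into two disjoint sets: those holding some job of $C_i \cap J_+$ and those holding some job of $C_i \cap K$. First, by the second part of Note~\ref{non-preemptive:differentmachinesforL}, every job of $C_i \cap J_+$ already needs its own machine, contributing $|C_i \cap J_+|$. Second, I would show that a machine carrying a big job $j_1 \in C_i \cap J_+$ cannot also host a job $j_2 \in C_i \cap K$: even using only one setup, the load would be at least $s_i + t_{j_1} + t_{j_2} > s_i + \tfrac12 T + (\tfrac12 T - s_i) = T$, contradicting $T$-feasibility. Third, on any machine $u$ that holds jobs of $C_i \cap K$ we have an obligatory setup $s_i$ so the processing-time capacity for jobs of $C_i$ is at most $T - s_i$; therefore the jobs of $C_i \cap K$ require at least $\lceil P(C_i \cap K)/(T-s_i) \rceil$ additional machines. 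Adding both contributions gives $m_i$.

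For the global bound, the key observation is Note~\ref{non-preemptive:L-description} combined with the first part of Note~\ref{non-preemptive:differentmachinesforL}: any two jobs of $L$ belonging to different classes must be placed on different machines. For an expensive class $i$ we have $C_i \subseteq L$ entirely (since $s_i > \tfrac12 T$), and for a cheap class $i$ the bound $m_i$ was obtained using only machines carrying jobs of $(C_i \cap J_+) \cup (C_i \cap K) \subseteq L$. Thus the machine sets counted by $m_i$ for the different classes are pairwise disjoint, and summing yields the total bound $\sum_{i=1}^{c} m_i$.

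The main obstacle I expect is the cheap-class case: one must carefully ensure that the machines counted for big jobs and for $K$-jobs of the same class are disjoint (handled by the $s_i + t_{j_1} + t_{j_2} > T$ calculation above) and that ordinary ``light'' jobs of $C_i$ outside $J_+ \cup K$, which may freely share machines with other classes, are correctly \emph{not} counted toward $m_i$; the disjointness argument across classes then hinges exclusively on membership in $L$, which is why the definition of $K$ and of $L$ must line up precisely with Note~\ref{non-preemptive:differentmachinesforL}.
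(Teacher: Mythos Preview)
Your proposal is correct and follows essentially the same approach the paper intends: it is precisely the spelled-out version of the one-line ``consequence of \Cref{lemma:a_i-lower-bound,lemma:machine_number_expensive_classes_multiplicities,non-preemptive:differentmachinesforL}'' that the paper gives. Your extra calculation showing that a machine with a job of $C_i\cap J_+$ cannot also carry a job of $C_i\cap K$ (for the same cheap class $i$) is exactly the small gap one has to fill when unpacking that sentence, and you do it correctly.
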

	\noindent\Cref{non-preemptive:minimal_number_of_machines} is a consequence of \Cref{lemma:a_i-lower-bound,lemma:machine_number_expensive_classes_multiplicities,non-preemptive:differentmachinesforL}.
	\begin{algorithm}
		\caption{A $\tfrac32$-dual Approximation for Non-Preemptive Scheduling}
		\label{alg:32approx}
		\begin{enumerate}[1.]
			\item Schedule all jobs of $L = \bigcup_{i \in [c]} \set{j \in C_i| s_i + t_j > \tfrac12T}$ on $m_i$ machines for each class $i$
			\label{alg:32approx:L}
			\item Schedule as many jobs as possible of $J \setminus L = \bigcup_{i \in [c]} \set{j \in C_i| s_i + t_j \leq \tfrac12T}$ on used machines without adding new setup times
			\label{alg:32approx:fill}
			\item Take one new setup time for each remaining class and place the remaining jobs greedily
			\label{alg:32approx:greedy}
			\item Make the schedule non-preemptive and add setup times as needed
			\label{alg:32approx:modify}
		\end{enumerate}
	\end{algorithm}
	We look at \Cref{alg:32approx} in more detail.
	
	\Sref{alg:32approx:L} We schedule the jobs of $L$. For every class $i \in [c]$ do the following. If $i$ is expensive, we place all jobs of $C_i$ preemptively (until $T$) with one initial setup time $s_i$ at the beginning of each of the required machines. In detail, we use a wrap template $\omega^{(i)} = (\omega^{(i)}_1,\dots,\omega^{(i)}_{\alpha_i})$ of length $|\omega{(i)}| = \alpha_i = \ceil{P(C_i)/(T-s_i)}$ with $\omega^{(i)}_1 = (u_i,0,T)$ and $\omega^{(i)}_{1+r} = (u_i+r,s_i,T)$ for a first machine $u_i$ such that all used machines are distinct and $1 \leq r < \alpha_i$. We use this wrap template to schedule a simple wrap sequence $Q^{(i)} = [s_i,C_i]$ with $\textsc{Wrap}(Q^{(i)},\omega^{(i)})$.
	If $i$ is cheap, we place all the jobs $j \in C_i \cap J_+$ with an initial setup time $s_i$ on a single unused machine $v^{(i)}_k$, i.e. the load of such a machine will be $\tfrac12T < s_i + t_j \leq T$. After that we place all jobs of $C_i \cap K$ preemptively (until $T$) on unused machines with one initial setup time $s_i$ at the beginning of each of the required machines. As before, we use a simple wrap template $\omega^{(i)} = (\omega^{(i)}_1,\dots,\omega^{(i)}_{\alpha_i})$ of length $|\omega^{(i)}| = \alpha_i$ with $\omega^{(i)}_1 = (u_i,0,T)$ and $\omega^{(i)}_{1+r} = (u_i+r,s_i,T)$ for a first machine $u_i$ such that all used machines are distinct and $1 \leq r < \alpha_i$. We use $\omega^{(i)}$ to wrap a wrap sequence $Q^{(i)} = [s_i,C_i \cap K]$ with $\textsc{Wrap}(Q^{(i)},\omega^{(i)})$.
	For all classes $i \in [c]$ let $\bar{u}_i = u_i + m_i - 1$ be the last machine used to wrap the sequence $Q^{(i)}$.
	For an example schedule after this step see \Cref{fig:non-preemptive:L}. The dashed lines indicate the wrap templates and the wrap sequences are filled green (dark if preempted).
	
	\begin{figure}[h]
		\centering
		\scalebox{0.94}{\begin{tikzpicture}

\def\diff{0.45}
\def\width{0.5}

\draw (-0.2,0) node [left] {$0$} -- (4*\diff+12*\width+0.2, 0);
\draw (-0.2,3.2) -- (4*\diff+12*\width+0.2, 3.2) node [right] {$T$};

\draw [-, decorate, decoration={brace,amplitude=8pt}] (0,4) -- node[above=7pt] {$I_{\exp}$} (2*\diff+5*\width,4);

\draw [-, decorate, decoration={brace,amplitude=8pt}] (2*\diff+5*\width,4) -- node[above=7pt] {$I_{\chp}$} (4*\diff+12*\width,4);

\draw [-, decorate, decoration={brace,amplitude=8pt}] (\diff,3.3) -- node[above=7pt] {$C_1$} (\diff+5*\width,3.3);

\draw [-, decorate, decoration={brace,amplitude=8pt}] (3*\diff+5*\width,3.3) -- node[above=7pt] {$C_2 \cap J_+$} (3*\diff+8*\width,3.3);

\draw [-, decorate, decoration={brace,amplitude=8pt}] (3*\diff+8*\width,3.3) -- node[above=7pt] {$C_2 \cap K$} (3*\diff+12*\width,3.3);

\DrawMachines{
{
	1/{\hbox to 0.7em{.\hss.\hss.}}/
}/\diff/,
{
	{1.8/3.2}/{$s_1$}/,
	{0.2/3.2}//,
	{0.3/3.2}//,
	{0.15/3.2}//,
	{0.4/3.2}//,
	{0.05/3.2}//,
	{0.3/3.2}//[fill=green!30!gray]
}/\width/[fill=green!30!white],
{
	{1.8/3.2}/{$s_1$}/[fill=lightgray],
	{0.1/3.2}//[fill=green!30!gray],
	{0.6/3.2}//,
	{0.3/3.2}//,
	{0.15/3.2}//,
	{0.25/3.2}//
}/\width/[fill=green!30!white],
{
	{1.8/3.2}/{$s_1$}/[fill=lightgray],
	{0.3/3.2}//,
	{0.15/3.2}//,
	{0.35/3.2}//,
	{0.4/3.2}//,
	{0.05/3.2}//,
	{0.1/3.2}//,
	{0.05/3.2}//[fill=green!30!gray]
}/\width/[fill=green!30!white],
{
	{1.8/3.2}/{$s_1$}/[fill=lightgray],
	{0.6/3.2}//[fill=green!30!gray],
	{0.1/3.2}//,
	{0.05/3.2}//,
	{0.35/3.2}//,
	{0.2/3.2}//,
	{0.1/3.2}//[fill=green!30!gray]
}/\width/[fill=green!30!white],
{
	{1.8/3.2}/{$s_1$}/[fill=lightgray],
	{0.05/3.2}//[fill=green!30!gray],
	{0.1/3.2}//,
	{0.25/3.2}//,
	{0.07/3.2}//,
	{0.17/3.2}//
}/\width/[fill=green!30!white],
{
	1/{\hbox to 0.7em{.\hss.\hss.}}/
}/\diff/,
{
	1/{\hbox to 0.7em{.\hss.\hss.}}/
}/\diff/,
{
	{1/3.2}/{$s_2$}/,
	{1.9/3.2}//
}/\width/[fill=lightgray],
{
	{1/3.2}/{$s_2$}/,
	{1.65/3.2}//
}/\width/[fill=lightgray],
{
	{1/3.2}/{$s_2$}/,
	{1.8/3.2}//
}/\width/[fill=lightgray],
{
	{1/3.2}/{$s_2$}/,
	{0.65/3.2}//,
	{0.7/3.2}//,
	{0.75/3.2}//,
	{0.1/3.2}//[fill=green!30!gray]
}/\width/[fill=green!30!white],
{
	{1/3.2}/{$s_2$}/[fill=lightgray],
	{0.8/3.2}//[fill=green!30!gray],
	{1.0/3.2}//,
	{0.4/3.2}//[fill=green!30!gray]
}/\width/[fill=green!30!white],
{
	{1/3.2}/{$s_2$}/[fill=lightgray],
	{0.5/3.2}//[fill=green!30!gray],
	{0.8/3.2}//,
	{0.7/3.2}//,
	{0.2/3.2}//[fill=green!30!gray]
}/\width/[fill=green!30!white],
{
	{1/3.2}/{$s_2$}/[fill=lightgray],
	{0.7/3.2}//[fill=green!30!gray],
	{0.9/3.2}//
}/\width/[fill=green!30!white],
{
	1/{\hbox to 0.7em{.\hss.\hss.}}/
}/\diff/
}

\node (mu) at (\diff+0.5*\width,-0.35) {$u_1$};
\node (mu) at (\diff+4.5*\width,-0.35) {$\bar{u}_1$};

\node (mu) at (3*\diff+5.5*\width,-0.35) {$v^{(2)}_1$};
\node (mu) at (3*\diff+6.5*\width,-0.35) {$v^{(2)}_2$};
\node (mu) at (3*\diff+7.5*\width,-0.35) {$v^{(2)}_3$};
\node (mu) at (3*\diff+8.5*\width,-0.35) {$u_2$};
\node (mu) at (3*\diff+11.5*\width,-0.35) {$\bar{u}_2$};

\draw [<->] (\diff,-0.8) -- (\diff+5*\width,-0.8) node[fill=white, pos=.5] {$m_1$};

\draw [<->] (3*\diff+5*\width,-0.8) -- (3*\diff+12*\width,-0.8) node[fill=white, pos=.5] {$m_2$};

\draw [<->] (0,-1.1) -- (4*\diff+12*\width,-1.1) node[fill=white, pos=.5] {$m'$};

\draw [color=white,dashed]
(\diff,0) -- (\diff,3.2) -- (\diff+5*\width,3.2) -- (\diff+5*\width,1.8) -- (\diff+\width,1.8) -- (\diff+\width,0) -- (\diff,0);

\draw [color=white,dashed]
(3*\diff+8*\width,0) -- (3*\diff+8*\width,3.2) -- (3*\diff+12*\width,3.2) -- (3*\diff+12*\width,1) -- (3*\diff+9*\width,1) -- (3*\diff+9*\width,0) -- (3*\diff+8*\width,0);

\end{tikzpicture}
}
		\caption{An example situation after step \ref{alg:32approx:L} of \Cref{alg:32approx} with $1 \in I_{\exp}$ and $2 \in I_{\chp}$}
		\label{fig:non-preemptive:L}
	\end{figure}
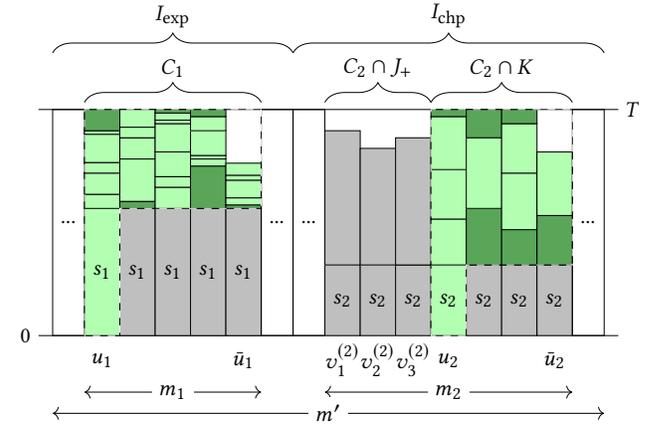
	
	\Sref{alg:32approx:fill} Now we will place as many jobs as possible of $J \setminus L$ without adding new machines or setup times. Note that there is at most one setup time on a used machine so far. Let $v^{(i)}_1,\dots,v^{(i)}_{k_i}$ be the machines used to schedule $C_i \cap J_+$ in step \ref{alg:32approx:L}, i.e. $k_i = |C_i \cap J_+|$, and let $v^{(i)}_{k_i+1} = \bar{u}_i$. For every cheap class $i \in I_{\chp}$ set $C_i' \leftarrow C_i \setminus L$ and start the following loop.	
	Let $j \in C_i'$ and find a used machine $u = v^{(i)}_k$ for $1 \leq k \leq k_i+1$ that has a load $L(u) < T$. If such a machine can not be found, the remaining jobs $C_i'$ will be placed in step \ref{alg:32approx:greedy}. If $L(u) + t_j \leq T$ place $j$ on top of machine $u$ and set $C_i' \leftarrow C_i' \setminus \set{j}$. Otherwise split $j$ into two new job pieces $j_1$, $j_2$ (of class $i$) such that $t_{j_1} = T - L(u)$ as well as $t_{j_2} = t_j - t_{j_1}$ and place $j_1$ on top of machine $u$ and set $C_i' \leftarrow (C_i' \setminus \set{j}) \cup \set{j_2}$. Furthermore, we save $j$ as the \emph{parent job} of the new job pieces $j_1$ and $j_2$, i.e. we set $\parent(j_1) \gets j$ as well as $\parent(j_2) \gets j$. Process $j_2$ next in the loop.
	See \Cref{fig:non-preemptive:first_fillup} for an example of this step. The new placed jobs of the cheap class $2 \in I_{\chp}$ are colored blue (and dark if preempted). Remark that jobs may be split more than once; in fact, one can see that there is a job of class $2$ that is split onto the machines $v^{(2)}_2$, $v^{(2)}_3$ and $v^{(2)}_4 = \bar{u}_2$.
	
	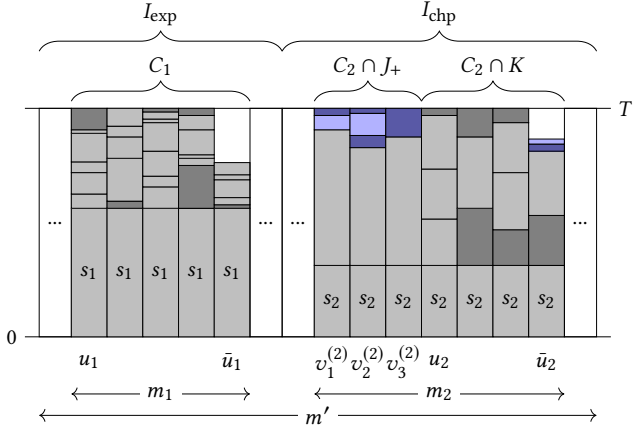
\begin{figure}[h]
		\centering
		\scalebox{0.95}{\begin{tikzpicture}

\def\diff{0.45}
\def\width{0.5}

\draw (-0.2,0) node [left] {$0$} -- (4*\diff+12*\width+0.2, 0);
\draw (-0.2,3.2) -- (4*\diff+12*\width+0.2, 3.2) node [right] {$T$};

\draw [-, decorate, decoration={brace,amplitude=8pt}] (0,4) -- node[above=7pt] {$I_{\exp}$} (2*\diff+5*\width,4);

\draw [-, decorate, decoration={brace,amplitude=8pt}] (2*\diff+5*\width,4) -- node[above=7pt] {$I_{\chp}$} (4*\diff+12*\width,4);

\draw [-, decorate, decoration={brace,amplitude=8pt}] (\diff,3.3) -- node[above=7pt] {$C_1$} (\diff+5*\width,3.3);

\draw [-, decorate, decoration={brace,amplitude=8pt}] (3*\diff+5*\width,3.3) -- node[above=7pt] {$C_2 \cap J_+$} (3*\diff+8*\width,3.3);

\draw [-, decorate, decoration={brace,amplitude=8pt}] (3*\diff+8*\width,3.3) -- node[above=7pt] {$C_2 \cap K$} (3*\diff+12*\width,3.3);

\DrawMachines{
	{
		1/{\hbox to 0.7em{.\hss.\hss.}}/
	}/\diff/,
	{
		{1.8/3.2}/{$s_1$}/,
		{0.2/3.2}//,
		{0.3/3.2}//,
		{0.15/3.2}//,
		{0.4/3.2}//,
		{0.05/3.2}//,
		{0.3/3.2}//[fill=gray]
	}/\width/[fill=lightgray],
	{
		{1.8/3.2}/{$s_1$}/,
		{0.1/3.2}//[fill=gray],
		{0.6/3.2}//,
		{0.3/3.2}//,
		{0.15/3.2}//,
		{0.25/3.2}//
	}/\width/[fill=lightgray],
	{
		{1.8/3.2}/{$s_1$}/,
		{0.3/3.2}//,
		{0.15/3.2}//,
		{0.35/3.2}//,
		{0.4/3.2}//,
		{0.05/3.2}//,
		{0.1/3.2}//,
		{0.05/3.2}//[fill=gray]
	}/\width/[fill=lightgray],
	{
		{1.8/3.2}/{$s_1$}/,
		{0.6/3.2}//[fill=gray],
		{0.1/3.2}//,
		{0.05/3.2}//,
		{0.35/3.2}//,
		{0.2/3.2}//,
		{0.1/3.2}//[fill=gray]
	}/\width/[fill=lightgray],
	{
		{1.8/3.2}/{$s_1$}/,
		{0.05/3.2}//[fill=gray],
		{0.1/3.2}//,
		{0.25/3.2}//,
		{0.07/3.2}//,
		{0.17/3.2}//
	}/\width/[fill=lightgray],
	{
		1/{\hbox to 0.7em{.\hss.\hss.}}/
	}/\diff/,
	{
		1/{\hbox to 0.7em{.\hss.\hss.}}/
	}/\diff/,
	{
		{1/3.2}/{$s_2$}/,
		{1.9/3.2}//,
		{0.2/3.2}//[fill=blue!30!white],
		{0.1/3.2}//[fill=blue!30!gray]
	}/\width/[fill=lightgray],
	{
		{1/3.2}/{$s_2$}/,
		{1.65/3.2}//,
		{0.17/3.2}//[fill=blue!30!gray],
		{0.31/3.2}//[fill=blue!30!white],
		{0.07/3.2}//[fill=blue!30!gray]
	}/\width/[fill=lightgray],
	{
		{1/3.2}/{$s_2$}/,
		{1.8/3.2}//,
		{0.4/3.2}//[fill=blue!30!gray]
	}/\width/[fill=lightgray],
	{
		{1/3.2}/{$s_2$}/,
		{0.65/3.2}//,
		{0.7/3.2}//,
		{0.75/3.2}//,
		{0.1/3.2}//[fill=gray]
	}/\width/[fill=lightgray],
	{
		{1/3.2}/{$s_2$}/,
		{0.8/3.2}//[fill=gray],
		{1.0/3.2}//,
		{0.4/3.2}//[fill=gray]
	}/\width/[fill=lightgray],
	{
		{1/3.2}/{$s_2$}/,
		{0.5/3.2}//[fill=gray],
		{0.8/3.2}//,
		{0.7/3.2}//,
		{0.2/3.2}//[fill=gray]
	}/\width/[fill=lightgray],
	{
		{1/3.2}/{$s_2$}/,
		{0.7/3.2}//[fill=gray],
		{0.9/3.2}//,
		{0.1/3.2}//[fill=blue!30!gray],
		{0.07/3.2}//[fill=blue!30!white]
	}/\width/[fill=lightgray],
	{
		1/{\hbox to 0.7em{.\hss.\hss.}}/
	}/\diff/
}

\node (mu) at (\diff+0.5*\width,-0.35) {$u_1$};
\node (mu) at (\diff+4.5*\width,-0.35) {$\bar{u}_1$};

\node (mu) at (3*\diff+5.5*\width,-0.35) {$v^{(2)}_1$};
\node (mu) at (3*\diff+6.5*\width,-0.35) {$v^{(2)}_2$};
\node (mu) at (3*\diff+7.5*\width,-0.35) {$v^{(2)}_3$};
\node (mu) at (3*\diff+8.5*\width,-0.35) {$u_2$};
\node (mu) at (3*\diff+11.5*\width,-0.35) {$\bar{u}_2$};

\draw [<->] (\diff,-0.8) -- (\diff+5*\width,-0.8) node[fill=white, pos=.5] {$m_1$};

\draw [<->] (3*\diff+5*\width,-0.8) -- (3*\diff+12*\width,-0.8) node[fill=white, pos=.5] {$m_2$};

\draw [<->] (0,-1.1) -- (4*\diff+12*\width,-1.1) node[fill=white, pos=.5] {$m'$};



\end{tikzpicture}
}
		\caption{The situation after step \ref{alg:32approx:fill} of \Cref{alg:32approx} with $1 \in I_{\exp}$ and $2 \in I_{\chp}$}
		\label{fig:non-preemptive:first_fillup}
	\end{figure}
	
	\Sref{alg:32approx:greedy} Now we cannot schedule a job of $C_i'$ for any $i \in I_{\chp}$ without paying a new setup time $s_i$. However, we can discard classes $i$ without residual load, i.e. $P(C_i') = 0$. So we build a wrap sequence $Q = [s_i,C_i']_{i:P(C_i')>0}$ that only contains classes with non-empty residual load.
	Instead of wrapping $Q$ using a wrap template, we greedily fill up the used machines with a load \emph{less} than $T$ until an item crosses the border $T$. We do not split these critical items but just keep them as they are (\emph{non-preempted}) and turn to the next machine. Once all used machines are filled to at least $T$, we fill up the unused machines in just the same manner.
	In \Cref{fig:non-preemptive:second_fillup} one can see an example situation after this step where the items of $Q$ are colored red (dark if $T$-crossing).
	
	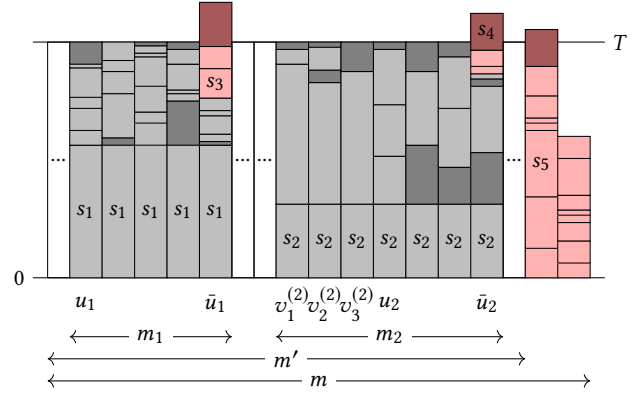
\begin{figure}[h]
		\centering
		\scalebox{0.98}{\begin{tikzpicture}

\def\diff{0.3}
\def\width{0.44}

\draw (-0.2,0) node [left] {$0$} -- (4*\diff+14*\width+0.2, 0);
\draw (-0.2,3.2) -- (4*\diff+14*\width+0.2, 3.2) node [right] {$T$};

\DrawMachines{
{
	1/{\hbox to 0.7em{.\hss.\hss.}}/
}/\diff/,
{
	{1.8/3.2}/{$s_1$}/,
	{0.2/3.2}//,
	{0.3/3.2}//,
	{0.15/3.2}//,
	{0.4/3.2}//,
	{0.05/3.2}//,
	{0.3/3.2}//[fill=gray]
}/\width/[fill=lightgray],
{
	{1.8/3.2}/{$s_1$}/,
	{0.1/3.2}//[fill=gray],
	{0.6/3.2}//,
	{0.3/3.2}//,
	{0.15/3.2}//,
	{0.25/3.2}//
}/\width/[fill=lightgray],
{
	{1.8/3.2}/{$s_1$}/,
	{0.3/3.2}//,
	{0.15/3.2}//,
	{0.35/3.2}//,
	{0.4/3.2}//,
	{0.05/3.2}//,
	{0.1/3.2}//,
	{0.05/3.2}//[fill=gray]
}/\width/[fill=lightgray],
{
	{1.8/3.2}/{$s_1$}/,
	{0.6/3.2}//[fill=gray],
	{0.1/3.2}//,
	{0.05/3.2}//,
	{0.35/3.2}//,
	{0.2/3.2}//,
	{0.1/3.2}//[fill=gray]
}/\width/[fill=lightgray],
{
	{1.8/3.2}/{$s_1$}/,
	{0.05/3.2}//[fill=gray],
	{0.1/3.2}//,
	{0.25/3.2}//,
	{0.07/3.2}//,
	{0.17/3.2}//,
    {0.4/3.2}/$s_3$/[fill=red!30!white],
    {0.3/3.2}//[fill=red!30!white],
    {0.6/3.2}//[fill=red!30!gray]
}/\width/[fill=lightgray],
{
	1/{\hbox to 0.7em{.\hss.\hss.}}/
}/\diff/,
{
	1/{\hbox to 0.7em{.\hss.\hss.}}/
}/\diff/,
{
	{1/3.2}/{$s_2$}/,
	{1.9/3.2}//,
    {0.2/3.2}//,
    {0.1/3.2}//[fill=gray]
}/\width/[fill=lightgray],
{
	{1/3.2}/{$s_2$}/,
	{1.65/3.2}//,
    {0.17/3.2}//[fill=gray],
    {0.31/3.2}//,
    {0.07/3.2}//[fill=gray]
}/\width/[fill=lightgray],
{
	{1/3.2}/{$s_2$}/,
	{1.8/3.2}//,
    {0.4/3.2}//[fill=gray]
}/\width/[fill=lightgray],
{
	{1/3.2}/{$s_2$}/,
	{0.65/3.2}//,
	{0.7/3.2}//,
	{0.75/3.2}//,
	{0.1/3.2}//[fill=gray]
}/\width/[fill=lightgray],
{
	{1/3.2}/{$s_2$}/,
	{0.8/3.2}//[fill=gray],
	{1.0/3.2}//,
	{0.4/3.2}//[fill=gray]
}/\width/[fill=lightgray],
{
	{1/3.2}/{$s_2$}/,
	{0.5/3.2}//[fill=gray],
	{0.8/3.2}//,
	{0.7/3.2}//,
	{0.2/3.2}//[fill=gray]
}/\width/[fill=lightgray],
{
	{1/3.2}/{$s_2$}/,
	{0.7/3.2}//[fill=gray],
	{0.9/3.2}//,
    {0.1/3.2}//[fill=gray],
    {0.07/3.2}//,
    {0.1/3.2}//[fill=red!30!white],
    {0.22/3.2}//[fill=red!30!white],
    {0.5/3.2}/$s_4$/[fill=red!30!gray]
}/\width/[fill=lightgray],
{
	1/{\hbox to 0.7em{.\hss.\hss.}}/
}/\diff/,
{
	{0.4/3.2}//,
	{0.7/3.2}//,
	{0.9/3.2}/$s_5$/,
    {0.1/3.2}//,
    {0.07/3.2}//,
    {0.3/3.2}//,
    {0.4/3.2}//,
    {0.5/3.2}//[fill=red!30!gray]
}/\width/[fill=red!30!white],
{
	{0.2/3.2}//,
	{0.3/3.2}//,
	{0.25/3.2}//,
    {0.1/3.2}//,
    {0.07/3.2}//,
    {0.2/3.2}//,
    {0.5/3.2}//,
    {0.3/3.2}//
}/\width/[fill=red!30!white]
}

\node (mu) at (\diff+0.5*\width,-0.35) {$u_1$};
\node (mu) at (\diff+4.5*\width,-0.35) {$\bar{u}_1$};

\node (mu) at (3*\diff+5.5*\width,-0.35) {$v^{(2)}_1$};
\node (mu) at (3*\diff+6.5*\width,-0.35) {$v^{(2)}_2$};
\node (mu) at (3*\diff+7.5*\width,-0.35) {$v^{(2)}_3$};
\node (mu) at (3*\diff+8.5*\width,-0.35) {$u_2$};
\node (mu) at (3*\diff+11.5*\width,-0.35) {$\bar{u}_2$};

\draw [<->] (\diff,-0.8) -- (\diff+5*\width,-0.8) node[fill=white, pos=.5] {$m_1$};

\draw [<->] (3*\diff+5*\width,-0.8) -- (3*\diff+12*\width,-0.8) node[fill=white, pos=.5] {$m_2$};

\draw [<->] (0,-1.1) -- (4*\diff+12*\width,-1.1) node[fill=white, pos=.5] {$m'$};

\draw [<->] (0,-1.4) -- (4*\diff+14*\width,-1.4) node[fill=white, pos=.5] {$m$};



\end{tikzpicture}
}
		\caption{The situation after step \ref{alg:32approx:greedy} of \Cref{alg:32approx} with $1 \in I_{\exp}$ and $\set{2,3,4,5} \subseteq I_{\chp}$}
		\label{fig:non-preemptive:second_fillup}
	\end{figure}
	
	\Sref{alg:32approx:modify} The former solution is not feasible yet. That is due to a number of preemptively scheduled jobs on the one hand and the lack of some setup times on the other hand.
	The first step to obtain a non-preemptive solution is to consider each last job $j$ on a machine. If $j$ was scheduled integral, we keep it that way. If on the other hand $j$ is the first part of a split of step \ref{alg:32approx:L} or step \ref{alg:32approx:fill}, we remove $j$ from the machine and schedule the \emph{parent job} $\parent(j)$ instead. Also, we remove all other split pieces $j'$ with $\parent(j') = \parent(j)$ from the schedule and shift down the above jobs by $t_{j'}$. 
	Note that \emph{all} jobs are placed \emph{non-preemptively} now.
	The second step is to look upon the items scheduled in \ref{alg:32approx:greedy} \emph{in the order they were placed}. Every item $q$ that exceeds $T$ in the current schedule (and therefore is last on its machine) is moved to the machine of item $q'$ that was placed next. More precisely $q'$ and all jobs above $q'$ are shifted up by $s_i + t_q$ if $q$ is a job of class $i$ or by $q=s_i$ if $q$ is a setup. Accordingly $s_i$ followed by $q$ is placed at the free place below $q'$ if $q$ is a job of class $i$ or $q = s_i$ is placed at the free place below $q'$ if $q$ is a setup.
	In the analysis we will see that this builds a feasible schedule with makespan at most $\tfrac32T$.
	Have a look at \Cref{fig:non-preemptive:modify} to see an example result of \Cref{alg:32approx}. All previously preempted or $T$-crossing items are colored dark.
	
	\begin{figure}[h]
		\centering
		\scalebox{0.98}{\begin{tikzpicture}

\def\diff{0.3}
\def\width{0.44}

\draw (-0.2,0) node [left] {$0$} -- (4*\diff+14*\width+0.2, 0);
\draw (-0.2,3.2) -- (4*\diff+14*\width+0.2, 3.2) node [right] {$T$};

\DrawMachines{
{
	1/{\hbox to 0.7em{.\hss.\hss.}}/
}/\diff/,
{
	{1.8/3.2}/{$s_1$}/,
	{0.2/3.2}//,
	{0.3/3.2}//,
	{0.15/3.2}//,
	{0.4/3.2}//,
	{0.05/3.2}//,
	{0.4/3.2}//[fill=gray]
}/\width/[fill=lightgray],
{
	{1.8/3.2}/{$s_1$}/,
	{0.6/3.2}//,
	{0.3/3.2}//,
	{0.15/3.2}//,
	{0.25/3.2}//
}/\width/[fill=lightgray],
{
	{1.8/3.2}/{$s_1$}/,
	{0.3/3.2}//,
	{0.15/3.2}//,
	{0.35/3.2}//,
	{0.4/3.2}//,
	{0.05/3.2}//,
	{0.1/3.2}//,
	{0.65/3.2}//[fill=gray]
}/\width/[fill=lightgray],
{
	{1.8/3.2}/{$s_1$}/,
	{0.1/3.2}//,
	{0.05/3.2}//,
	{0.35/3.2}//,
	{0.2/3.2}//,
	{0.15/3.2}//[fill=gray]
}/\width/[fill=lightgray],
{
	{1.8/3.2}/{$s_1$}/,
	{0.1/3.2}//,
	{0.25/3.2}//,
	{0.07/3.2}//,
	{0.17/3.2}//,
    {0.4/3.2}/$s_3$/,
    {0.3/3.2}//
}/\width/[fill=lightgray],
{
	1/{\hbox to 0.7em{.\hss.\hss.}}/
}/\diff/,
{
	1/{\hbox to 0.7em{.\hss.\hss.}}/
}/\diff/,
{
	{1/3.2}/{$s_2$}/,
	{1.9/3.2}//,
    {0.2/3.2}//,
    {0.27/3.2}//[fill=gray]
}/\width/[fill=lightgray],
{
	{1/3.2}/{$s_2$}/,
	{1.65/3.2}//,
    {0.31/3.2}//,
    {0.57/3.2}//[fill=gray]
}/\width/[fill=lightgray],
{
	{1/3.2}/{$s_2$}/,
	{1.8/3.2}//
}/\width/[fill=lightgray],
{
	{1/3.2}/{$s_2$}/,
	{0.65/3.2}//,
	{0.7/3.2}//,
	{0.75/3.2}//,
	{0.9/3.2}//[fill=gray]
}/\width/[fill=lightgray],
{
	{1/3.2}/{$s_2$}/,
	{1.0/3.2}//,
	{0.9/3.2}//[fill=gray]
}/\width/[fill=lightgray],
{
	{1/3.2}/{$s_2$}/,
	{0.8/3.2}//,
	{0.7/3.2}//,
	{0.9/3.2}//[fill=gray]
}/\width/[fill=lightgray],
{
	{1/3.2}/{$s_2$}/,
	{0.9/3.2}//,
    {0.07/3.2}//,
    {0.4/3.2}/$s_3$/,
    {0.6/3.2}//[fill=gray],
    {0.1/3.2}//,
    {0.22/3.2}//
}/\width/[fill=lightgray],
{
	1/{\hbox to 0.7em{.\hss.\hss.}}/
}/\diff/,
{
	{0.5/3.2}/$s_4$/[fill=gray],
    {0.4/3.2}//,
	{0.7/3.2}//,
	{0.9/3.2}/$s_5$/,
    {0.1/3.2}//,
    {0.07/3.2}//,
    {0.3/3.2}//,
    {0.4/3.2}//
}/\width/[fill=lightgray],
{
	{0.9/3.2}/$s_5$/,
    {0.5/3.2}//[fill=gray],
	{0.2/3.2}//,
	{0.3/3.2}//,
	{0.25/3.2}//,
    {0.1/3.2}//,
    {0.07/3.2}//,
    {0.2/3.2}//,
    {0.5/3.2}//,
    {0.3/3.2}//
}/\width/[fill=lightgray]
}

\node (mu) at (\diff+0.5*\width,-0.35) {$u_1$};
\node (mu) at (\diff+4.5*\width,-0.35) {$\bar{u}_1$};

\node (mu) at (3*\diff+5.5*\width,-0.35) {$v^{(2)}_1$};
\node (mu) at (3*\diff+6.5*\width,-0.35) {$v^{(2)}_2$};
\node (mu) at (3*\diff+7.5*\width,-0.35) {$v^{(2)}_3$};
\node (mu) at (3*\diff+8.5*\width,-0.35) {$u_2$};
\node (mu) at (3*\diff+11.5*\width,-0.35) {$\bar{u}_2$};

\draw [<->] (\diff,-0.8) -- (\diff+5*\width,-0.8) node[fill=white, pos=.5] {$m_1$};

\draw [<->] (3*\diff+5*\width,-0.8) -- (3*\diff+12*\width,-0.8) node[fill=white, pos=.5] {$m_2$};

\draw [<->] (0,-1.1) -- (4*\diff+12*\width,-1.1) node[fill=white, pos=.5] {$m'$};

\draw [<->] (0,-1.4) -- (4*\diff+14*\width,-1.4) node[fill=white, pos=.5] {$m$};



\end{tikzpicture}
}
		\caption{The situation after step \ref{alg:32approx:modify} of \Cref{alg:32approx} with $1 \in I_{\exp}$ and $\set{2,3,4,5} \subseteq I_{\chp}$}
		\label{fig:non-preemptive:modify}
	\end{figure}
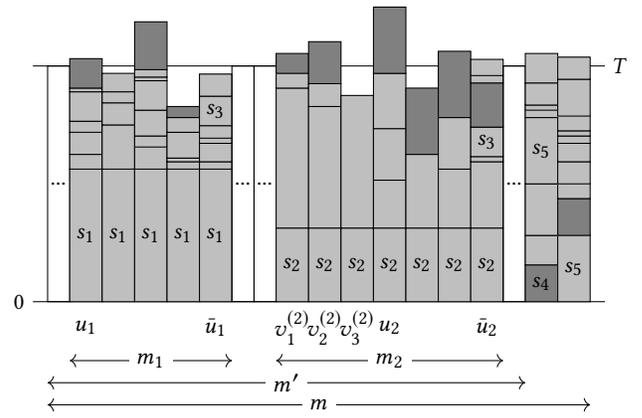
	
	\subsection{Analysis}
	
	We want to show the following theorem.
	
	\restatenonpreemptivealgorithmdecision
	
	\noindent We start with some preliminary work and obtain the following important notes.
	
	\begin{note}\label{non-preemptive:x_i}
		The remaining processing time for class $i$ after step \ref{alg:32approx:fill} is $x_i$, i.e. $P(C_i') = x_i$ for all $i \in [c]$ with $x_i \geq 0$. Furthermore, $x_i < 0$ implies that there is a time of $|x_i|$ left to schedule arbitrary jobs.
	\end{note}
	\begin{proof}
		We consider the situation right after step \ref{alg:32approx:L}.
		First we want to know the time $F_i$ that is left to schedule jobs of a class $i \in [c]$ without adding new setups. Each machine of class $i$ got a time of $T - s_i$ to schedule the jobs of $C_i \cap L$. Since there are $m_i$ of such machines we obtain $F_i = m_i(T - s_i) - P(C_i \cap L) \geq 0$.
		The remaining jobs of class $i$ are $C_i \setminus L$ and that gives us a total residual processing time of $P(C_i \setminus L)$. Let $C_i' \subseteq C_i \setminus L$ be the residual jobs \emph{after} step \ref{alg:32approx:fill}. Since $C_i = (C_i \setminus L) \cupdot (C_i \cap L)$, we obtain
		\begin{align*}
		P(C_i') = P(C_i \setminus L) - F_i
		&= P(C_i \setminus L) + P(C_i \cap L) - m_i(T - s_i)\\
		&= P(C_i) - m_i(T - s_i)
		\,\, = \, x_i
		\end{align*}
		if $x_i \geq 0$. So if $x_i < 0$, we have $P(C_i \setminus L) < F_i$ and that means there is a time of $F_i - P(C_i \setminus L) = |P(C_i \setminus L) - F_i| = |x_i|$ left to schedule any jobs.	
	\end{proof}
	
	\begin{note}\label{non-preemptive:additional_setups}
		A $T$-feasible schedule needs at least $m_i+1$ setups to place a class $i$ with $x_i > 0$.
	\end{note}
	\begin{proof}
		By its definition we know that $x_i > 0$ means $P(C_i) > m_i(T-s_i)$. So the obligatory $m_i$ machines (and setups) do not provide enough time to schedule all jobs of class $i$. Hence at least one additional setup must be placed.
	\end{proof}
	
	\begin{proof}[Proof of \Cref{non-preemptive:algorithm-decision}]
		\ref{lemma:non-preemptive:algorithm:T_check_false}. We show that $T \geq \OPT_{\op{nonp}}(I)$ implies that $mT \geq L_{\op{nonp}}$ and $m \geq m'$. So let $T \geq \OPT_{\op{nonp}}(I)$. Then there is a feasible schedule $\sigma$ with makespan $T$. Due to \Cref{non-preemptive:differentmachinesforL,non-preemptive:additional_setups,non-preemptive:minimal_number_of_machines} we get
		\begin{align*}
		mT \geq \load(\sigma) &\geq P(J) + \sum_{i:x_i \leq 0} m_i s_i + \sum_{i:x_i > 0} (m_i+1)s_i\\
		&= P(J) + \sum_{i=1}^c m_i s_i + \sum_{i:x_i > 0} s_i \,\, = \, L_{\op{nonp}}
		\end{align*}
		and also \Cref{non-preemptive:minimal_number_of_machines} proves that $m \geq \sum_{i=1}^c m_i = m'$.
		
		\ref{lemma:non-preemptive:algorithm:T_check_true}. Let $mT \geq L_{\op{nonp}}$ and $m \geq m'$. Note that step \ref{alg:32approx:L} uses $\sum_{i=1}^c m_i = m'$ machines and since $m \geq m'$, there are enough machines. Furthermore, one can easily confirm that the wrap templates $\omega^{(i)}$ suffice to schedule the wrap sequences $Q^{(i)}$, i.e. $S(\omega^{(i)}) \geq \load(Q^{(i)})$, but we still need to show that there is enough time to fill up with step \ref{alg:32approx:fill} and \ref{alg:32approx:greedy}.
		Instead of analyzing these steps separately we can use the values $x_i$ to find a much more intuitive formalization for both of them. Apparently in general the steps fill up the $m'$ obligatory machines to at least time $T$. In the worst case they fill them up to \emph{exactly} time $T$ since the residual load of $Q$, which is to be placed on the residual (and so far unused) $m-m'$ machines, gets maximized then.
		Due to \Cref{non-preemptive:x_i} and \Cref{non-preemptive:additional_setups}, this (worst case) residual load $R$ can be written as
		$R = \sum_{i: x_i < 0} x_i + \sum_{i: x_i > 0} (s_i + x_i)$ and we show that $R \leq (m - m')T$ as follows.
		\begin{align*}
			R &= \sum_{i = 1}^c x_i + \sum_{i:x_i > 0} s_i\\
			&= P(J)-\left(\sum_{i=1}^c m_i\right)T + \sum_{i=1}^c m_i s_i + \sum_{i: x_i > 0} s_i && \text{// Def. } x_i\\
			&= L_{\op{nonp}} - m'T\\
			&\leq (m - m')T && \text{// } mT \geq L_{\op{nonp}}
		\end{align*}
		So the $m - m'$ residual machines do provide enough time to schedule $R$. Hence, all load can be placed and it remains to show that step \ref{alg:32approx:modify} is correct.		
		Apparently step \ref{alg:32approx:L} and \ref{alg:32approx:fill} fill up machines to at most $T$. Step \ref{alg:32approx:greedy} fills machines to at most $\tfrac32T$. Now consider the situation right after step \ref{alg:32approx:greedy} and remark that the parent jobs $j$ of all preempted jobs of a class $i$ hold $t_j \leq s_i + t_j \leq \tfrac12T$ since $j \in J \setminus L$. The first modification of step \ref{alg:32approx:modify} is to replace preempted jobs (which are last on a machine with load at most $T$) with their integral parent jobs while removing all other child pieces. It is easy to see that the makespan can raise up to at most $T+\tfrac12T = \tfrac32T$. Also no jobs are preempted anymore since for each job piece $j$ there was a job piece $j'$ with $\parent(j') = \parent(j)$ such that $j'$ was last on a machine. The second and last modification repairs the lack of setups. Passing the $T$-crossing items to the next machine $u_+$ below the next job will give extra load of at most $\tfrac12T$ to machine $u_+$ (either passing a setup or a job with an additional setup). For $u_+$ there are two cases. If $u_+$ is not the last used machine, then $u_+$ passes away its last item too such that its load will be at most $\tfrac32T$ after all. If $u_+$ is the last used machine, it has a load of at most $T$ (otherwise this is a contradiction to $R \leq (m-m')T$) so it will end up with a load of at most $\tfrac32T$. The order of $Q$ guarantees that this movement/addition of setups will remove any lacks of setups such that the resulting schedule is feasible with a makespan of at most $\tfrac32T$.
		
		However, it remains to obtain the running time. The inclined reader will obtain that the primitive way of shifting up items on the considered next machines may require non-linear time, but this can actually be avoided as an implementation detail, with additional running time no more than $\oh(n)$. All other steps can be confirmed to run in linear time in a straightforward way.
	\end{proof}

\end{document}